\documentclass{IEEEtran}
\usepackage[T1]{fontenc}
\usepackage[latin9]{inputenc}
\usepackage{color}
\usepackage{array}
\usepackage{verbatim}
\usepackage{amsmath}
\usepackage{graphicx}

\makeatletter

\providecommand{\tabularnewline}{\\}

\newtheorem{thm}{Theorem}
\newtheorem{definitn}{Definition}
\newtheorem{cor}{Corollary}
\newtheorem{prop}{Proposition}
\newtheorem{lemma}{Lemma}


\@ifundefined{showcaptionsetup}{}{%
 \PassOptionsToPackage{caption=false}{subfig}}
\usepackage{subfig}
\makeatother

\begin{document}

\title{\vspace{0.25in}Approaching Throughput-optimality in Distributed
CSMA Scheduling Algorithms with Collisions}

\author{Libin Jiang and Jean Walrand%
\thanks{This work is supported by MURI grant BAA 07-036.18. %
}%
\thanks{Some main results of this paper appeared in \cite{S3}.%
}%
\thanks{This is the longer version of a same-titled paper to appear in \emph{IEEE/ACM
Transactions on Networking}. %
}\\
 EECS Department, University of California at Berkeley\\
 \{ljiang,wlr\}@eecs.berkeley.edu}
\maketitle
\begin{abstract}
It was shown recently that CSMA (Carrier Sense Multiple Access)-like
distributed algorithms can achieve the maximal throughput in wireless
networks (and task processing networks) under certain assumptions.
One important, but idealized assumption is that the sensing time is
negligible, so that there is no collision. In this paper, we study
more practical CSMA-based scheduling algorithms with collisions. First,
we provide a Markov chain model and give an explicit throughput formula
which takes into account the cost of collisions and overhead. The
formula has a simple form since the Markov chain is {}``almost''
time-reversible. Second, we propose transmission-length control algorithms
to approach throughput optimality in this case. Sufficient conditions
are given to ensure the convergence and stability of the proposed
algorithms. Finally, we characterize the relationship between the
CSMA parameters (such as the maximum packet lengths) and the achievable
capacity region.\end{abstract}
\begin{keywords}
Distributed scheduling, CSMA, Markov chain, convex optimization
\end{keywords}

\section{Introduction}

Efficient resource allocation is essential to achieve high utilization
of a class of networks with resource-sharing constraints, such as
wireless networks and stochastic processing networks (SPN \cite{SPN}).
In wireless networks, certain links cannot transmit at the same time
due to the interference constraints among them. In a SPN, two tasks
cannot be processed simultaneously if they both require monopolizing
a common resource. A scheduling algorithm determines which link to
activate (or which task to process) at a given time without violating
these constraints. Designing efficient distributed scheduling algorithms
to achieve high throughput is especially a challenging task \cite{tutorial}.\global\long\def\bsig{{\bf \sigma}}
\global\long\def\br{{\bf r}}

Maximal-weight scheduling (MWS) \cite{TE92} is a classical \emph{throughput-optimal}
algorithm. That is, MWS can stabilize all queues in the network as
long as the arrival rates are within the capacity region. MWS operates
in slotted time. In each slot, a set of non-conflicting links (called
an {}``independent set'', or {}``IS'') that have the maximal weight
(i.e., summation of queue lengths) are scheduled. However, implementing
MWS in general networks is quite difficult for two reasons. (i) MWS
is inherently a centralized algorithm and is not amenable to distributed
implementation; (ii) finding a maximal-weighted IS (in each slot)
is NP-complete in general and is hard even for centralized algorithms. 

On the other hand, there has been active research on low-complexity
but suboptimal scheduling algorithms. For example, reference \cite{Maximal-Scheduling}
shows that Maximal Scheduling can only guarantee a fraction of the
network capacity. A related algorithm has been studied in \cite{bound-greedy}
in the context of IEEE 802.11 networks. Longest-Queue-First (LQF)
algorithm (see, for example, \cite{Dimakis,GMS,multi-hop-LoP,LQF-small}),
which greedily schedules queues in the descending order of the queue
lengths, tends to achieve higher throughput than Maximal Scheduling,
although it is not throughput-optimal in general \cite{GMS}. Reference
\cite{no-message-passing} proposed random-access-based algorithms
that can achieve performance comparable to that of maximal-size scheduling. 

Recently, we proposed a distributed adaptive CSMA (Carrier Sense Multiple
Access) algorithm \cite{Allerton} that is throughput-optimal for
a general interference model, under certain assumptions (further explained
later). The algorithm has a few desirable features. It is distributed
(i.e., each node only uses its own backlog information), asynchronous
(i.e., nodes do not need to synchronize their transmission) and requires
no control message. (In \cite{CISS-shah}, Rajagopalan and Shah independently
proposed a similar randomized algorithm in the context of optical
networks. Reference \cite{CSMA_marbach} showed that under a {}``node-exclusive''
interference model, CSMA can be made throughput-optimal in an asymptotic
regime.) We have also developed a joint algorithm in \cite{Allerton}
that combined the adaptive CSMA scheduling with congestion control
to approach the maximal total utility of competing data flows. %
{}

However, the algorithms in \cite{Allerton} assume an idealized CSMA
protocol (\cite{csma-87,Kar,Thiran,BoE}), meaning that the sensing
is instantaneous, so that conflicting links do not transmit at the
same time (i.e., collisions do not occur). In many situations, however,
this is an unnatural assumption. For example, in CSMA/CA wireless
networks, due to the propagation delay and processing time, sensing
is not instantaneous. Instead, time can be viewed as divided into
discrete minislots, and collisions happen if multiple conflicting
links try to transmit in the same minislot. When a collision occurs,
all links that are involved lose their packets, and will try again
later. %
{}

In this paper, we study this important practical issue when designing
CSMA-based scheduling algorithms. We follow four main steps: (1) we
first present a Markov chain model for a simple CSMA protocol with
collisions, and give an explicit throughput formula (in section \ref{sec:formula})
that has a simple form since the Markov chain is {}``almost'' time-reversible;
(2) we show that the algorithms in \cite{Allerton} can be extended
to approach throughput optimality even with collisions (section \ref{sec:algorithms});
(3) we give sufficient conditions to ensure the convergence/stability
of the proposed algorithms (section \ref{sec:algorithms}); (4) finally,
we discuss the tradeoff between the achievable capacity region and
short-term fairness, and we characterize the relationship between
the CSMA parameters (such as the maximum packet lengths) and the achievable
capacity region. 

Although step (2) can be viewed as a generalization of \cite{Allerton},
we believe that this generalization is important and non-trivial.
The importance, as mentioned above, is because collisions are unavoidable
in CSMA/CA wireless networks, and the collision-free model used in
\cite{Allerton} does not provide enough accuracy. The generalization
is non-trivial for the following reasons. Step (2) requires the result
of step (1). In step (1), the Markov chain used to model the CSMA
protocol is no longer time-reversible as in \cite{Allerton}. We need
to re-define the state space in order to compute the service rates
it can provide. Interestingly, as a result of our design the chain
is {}``almost'' time-reversible which can be exploited. In step
(2), in view of the expression of service rates derived in step (1),
it is important to realize that adjusting the backoff times as in
\cite{Allerton} does not lead to the desirable throughput-optimal
property. Instead, one should adjust the transmission lengths. Different
from \cite{Allerton}, we further show that the {}``optimal'' CSMA
parameters (in this case the mean transmission lengths) are \emph{unique}.
This fact is needed to establish the convergence of our algorithm
in step (3).

We note that in a recent work \cite{Jian}, Ni and Srikant also proposed
a CSMA-like algorithm to achieve near-optimal throughput with collisions
taken into account. The algorithm in \cite{Jian} uses synchronized
and alternate control phase and data phase. It is designed to realize
a discrete-time CSMA (in the data phase) which has the same stationary
distribution as its continuous counterpart in \cite{Allerton}. The
control phase does not contribute to the throughput and can be viewed
as the protocol overhead. Different from \cite{Jian}, our algorithm
here is asynchronous, and has more resemblance to the RTS/CTS mode
in IEEE 802.11. Although the algorithm in \cite{Jian} is quite elegant
and could potentially be applied to other time-slotted systems, we
believe that it is an interesting problem to understand how to use
the asynchronous algorithm to achieve throughput-optimality. Also,
due to its similarity to the RTS/CTS mode of IEEE 802.11, the throughput
analysis in this paper could also deepen the understanding of 802.11
in general topologies.

\section{\label{sec:formula}CSMA/CA-based scheduling with collisions}

\subsection{Model}

In this section we present a model for CSMA/CA-based scheduling with
collisions. Note that the goal of the paper is \emph{not} to propose
a comprehensive model that captures all details for IEEE 802.11 networks
and predict the performance of such networks (The literature in that
area has been very rich. See, for example, \cite{new_insight,Kar}
and the references therein.) Instead, at a more abstract level, we
are interested in a distributed scheduling algorithm that is inspired
by CSMA/CA, and designing adaptive algorithms to approach throughput-optimality.
%
{}

Consider a (single-channel) wireless network. Define a {}``link''
as an (ordered) transmitter-receiver pair. Assume that there are $K$
links, and denote the set of links by ${\cal N}$ (then, $K=|{\cal N}|$).
Without loss of generality, assume that each link has a capacity of
1. We say that two links \emph{conflict} if they cannot transmit (or,
{}``be active'') at the same time due to interference. (The conflict
relationship is assumed to be symmetric.) Accordingly, define $G$
as the conflict graph. Each vertex in $G$ represents a link, and
there is an edge between two vertexes if the corresponding links conflict.
Note that this simple conflict model may not reflect all possible
interference structures that could occur in wireless networks. However,
it does provide a useful abstraction and has been used widely in literature
(see, for example, \cite{Kar} and \cite{tutorial}). %
{} %
{}

Fig. \ref{fig:WLAN} (a) shows a wireless LAN with 6 links. The network's
conflict graph is a full graph (Fig. \ref{fig:WLAN} (b)). (Circles
represent nodes and rectangles represent links.) Fig. \ref{fig:Ad-hoc}
(a) shows an ad-hoc network with 3 links. Assume that link 1, 2 conflict,
and link 2, 3 conflict. Then the network's conflict graph is Fig.
\ref{fig:Ad-hoc} (b). %
\begin{figure}
\begin{centering}
\subfloat[Network topology]{\includegraphics[clip,width=3cm]{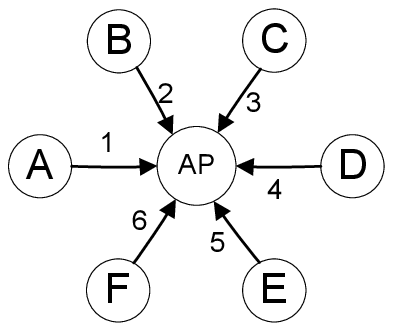} 

} \subfloat[Conflict graph]{\includegraphics[clip,width=3.5cm]{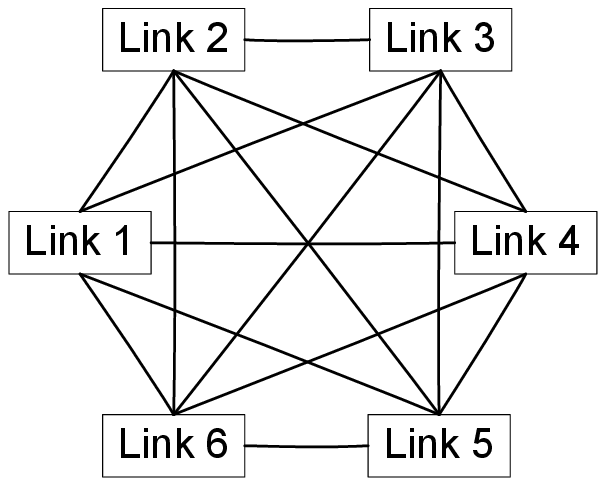}

}
\par\end{centering}

\caption{\label{fig:WLAN}Infrastructure network}

\end{figure}
\begin{figure}
\begin{centering}
\subfloat[Network topology]{\includegraphics[clip,width=3cm]{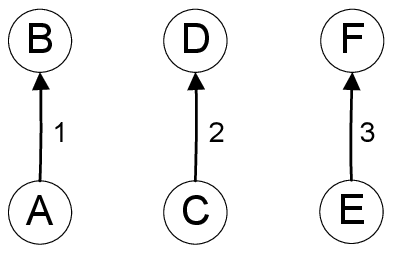} 

} \subfloat[Conflict graph]{\includegraphics[clip,width=4cm]{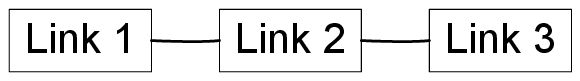}

}
\par\end{centering}

\caption{\label{fig:Ad-hoc}Ad-hoc network}

\end{figure}
\medskip{}

{\bf Basic Protocol}\medskip{}

We now describe the basic CSMA/CA protocol with fixed transmission
probabilities (which suffices for our later development.) Let $\tilde{\sigma}$
be the duration of each idle slot (or {}``minislot''). ($\tilde{\sigma}$
should be at least the time needed by any wireless station to detect
the transmission of any other station. Specifically, it accounts for
the propagation delay, the time needed to switch from the receiving
to the transmitting state, and the time to signal to the MAC layer
the state of the channel \cite{Bianchi}. The value of $\tilde{\sigma}$
varies for different physical layers. In IEEE 802.11a, for example,
$\tilde{\sigma}=9\mu s$.) In the following we will simply use {}``slot''
to refer to the minislot. 

Assume that all links are saturated (i.e., always have packets to
transmit). In each slot, if (the transmitter of) link $i$ is not
already transmitting and if the medium is idle, the transmitter of
link $i$ starts transmission with probability $p_{i}$. If at a certain
slot, link $i$ did not choose to transmit but a conflicting link
starts transmitting, then link $i$ keeps silent until that transmission
ends. If conflicting links start transmitting at the same slot, then
a collision happens. {[}We assume that the network has no hidden node
(HN). The case with HNs will be discussed in Section \ref{sub:HN}.
For possible ways to address the HN problem, please refer to \cite{Jiang-Liew}
and its references.] 

Each link transmits a short probe packet with length $\gamma$ (similar
to the RTS packet in 802.11) before the data is transmitted. (All
{}``lengths'' here are measured in number of slots and are assumed
to be integers.) This increases the overhead of successful transmissions,
but can avoid collisions of long data packets. When a collision happens,
only the probe packets collide, so each collision lasts a length of
$\gamma$. Assume that a successful transmission of link $i$ lasts
$\tau_{i}$, which includes a constant overhead $\tau'$ (composed
of RTS, CTS, ACK, etc) and the data payload $\tau_{i}^{p}$ which
is a random variable. Clearly $\tau_{i}\ge\tau'$. Let the p.m.f.
(probability mass function) of $\tau_{i}$ be \begin{equation}
Pr\{\tau_{i}=b_{i}\}=P_{i}(b_{i}),\forall b_{i}\in{\cal Z}_{++}\label{eq:pmf}\end{equation}
and assume that the p.m.f. has a \emph{finite support}, i.e., $P_{i}(b_{i})=0,\forall b_{i}>b_{max}>0$.\textcolor{blue}{}%
\footnote{The finite support assumption is not a restrictive one, since in practical
wireless networks there is usually an upper bound on the packet size.
However, the CSMA/CA Markov chain defined later is still ergodic even
without the assumption.%
}\textcolor{blue}{ }Then the mean of $\tau_{i}$ is \begin{equation}
T_{i}:=E(\tau_{i})=\sum_{b\in{\cal Z}_{++}}b\cdot P_{i}(b).\label{eq:T_i}\end{equation}

Fig. \ref{fig:basic-model} illustrates the timeline of the 3-link
network in Fig. \ref{fig:Ad-hoc} (b), where link 1 and 2 conflict,
and link 2 and 3 conflict.

We note a subtle point in our modeling. In IEEE 802.11, a link can
attempt to start a transmission only after it has sensed the medium
as idle for a constant time (which is called DIFS, or {}``DCF Inter
Frame Space''). To take this into account, DIFS is included in the
packet transmission length $\tau_{i}$ and the collision length $\gamma$.
In particular, for a successful transmission of link $i$, DIFS is
included in the constant overhead $\tau'$. Although DIFS, as part
of $\tau'$, is actually after the payload, in Fig. \ref{fig:basic-model}
we plot $\tau'$ before the payload. This is for convenience and does
not affect our results. So, under this model, a link can attempt to
start a transmission \emph{immediately} after the transmissions of
its conflicting links end.

The above model is almost time-reversible such that a simple throughput
formula can be derived. A process is {}``time-reversible'' if the
process and its time-reversed process are statistically indistinguishable
\cite{Kelly-book}. Our model, in Fig. \ref{fig:basic-model}, reversed
in time, follows the same protocol as described above, except for
the order of the overhead and the payload, which are reversed. A key
reason for this property is that the collisions start and finish at
the same time. (This point will be made more precise in Appendix A.)

\begin{figure}
\begin{centering}
\includegraphics[clip,width=8cm]{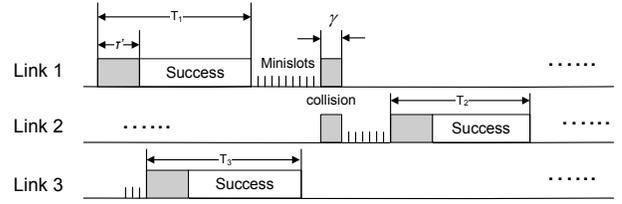}
\par\end{centering}

\caption{\label{fig:basic-model}Timeline in the basic model (In this figure,
$\tau_{i}=T_{i},i=1,2,3$ are constants.)}

\end{figure}

{}

\subsection{Notation}

Let the {}``on-off state'' be $x\in\{0,1\}^{K}$ where $x_{k}$,
the $k$-th element of $x$, is such that $x_{k}=1$ if link $k$
is active (transmitting) in state $x$, and $x_{k}=0$ otherwise.
Thus, $x$ is a vector indicating which links are active in a given
slot. Let $G(x)$ be the subgraph of $G$ after removing all vertices
(each representing a link) with state 0 (i.e., any link $j$ with
$x_{j}=0$) and their associated edges. In general, $G(x)$ is composed
of a number of connected components (simply called {}``components'')
$C_{m}(x),m=1,2,\dots,M(x)$ (where each component is a set of links,
and $M(x)$ is the total number of components in $G(x)$). If a component
$C_{m}(x)$ has only one active link (i.e., $|C_{m}(x)|=1$), then
this link is having a successful transmission; if $|C_{m}(x)|>1$,
then all the links in the component are experiencing a collision.
Let the set of {}``successful'' links in state $x$ be $S(x):=\{k|k\in C_{m}(x)\text{ with }|C_{m}(x)|=1\}$,
and the set of links that are experiencing collisions be $Z(x)$.
Also, define the {}``collision number'' $h(x)$ as the number of
components in $G(x)$ with size larger than 1. Fig. \ref{fig:Example_c}
shows an example. Note that the transmissions in a collision component
$C_{m}(x)$ are {}``synchronized'', i.e., the links in $C_{m}(x)$
must have started transmitting in the same slot, and will end transmitting
in the same slot after $\gamma$ slots (the length of the probe packets).%
\begin{figure}
\begin{centering}
\includegraphics[clip,width=4.5cm]{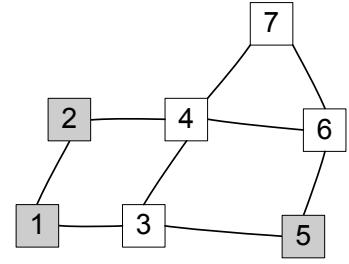}
\par\end{centering}

\caption{\label{fig:Example_c}An example conflict graph (each square represents
a link). In this on-off state $x$, links 1, 2, 5 are active. So $S(x)=\{5\}$,
$Z(x)=\{1,2\}$, $h(x)=1$.}

\end{figure}

\subsection{Computation of the service rates}

In order to compute the service rates of all the links under the above
CSMA protocol when all the links are saturated, we first define the
underlying discrete-time Markov chain which we call the \emph{CSMA/CA
Markov chain}.

The Markov chain evolves slot by slot.%
\footnote{For the ease of analysis, we make the modeling assumption that the
links are synchronized at the slot level.%
} The state of the Markov chain in a slot is \begin{equation}
w:=\{x,((b_{k},a_{k}),\forall k:x_{k}=1)\}\label{eq:w-definition}\end{equation}
where $b_{k}$ is the total length of the current packet link $k$
is transmitting, $a_{k}$ is the remaining time (including the current
slot) before the transmission of link $k$ ends. 

For example, in Fig. \ref{fig:Example-Markov}, the state $w$ and
$w'$ are%
\begin{figure}
\begin{centering}
\includegraphics[width=6cm]{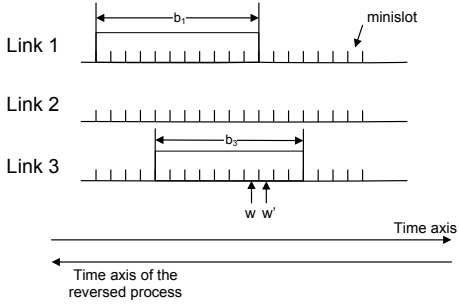}
\par\end{centering}

\caption{\label{fig:Example-Markov}Example of the CSMA/CA Markov chain}

\end{figure}
\begin{equation}
w=\{x=(1,0,1)^{T},(b_{1}=11,a_{1}=1),(b_{3}=10,a_{3}=4)\}\label{eq:w-example}\end{equation}
and\begin{equation}
w'=\{x=(0,0,1)^{T},(b_{3}'=10,a_{3}'=3)\}.\label{eq:wp-example}\end{equation}

Note that in any state $w$ as defined in (\ref{eq:w-definition}),
we have

(I) $1\le a_{k}\le b_{k},\forall k:x_{k}=1$. 

(II) $P_{k}(b_{k})>0,\forall k\in S(x)$.

(III) If $k\in Z(x)$, then $b_{k}=\gamma$ and $a_{k}\in\{1,2,\dots,\gamma\}$.
An important observation here is that the transmissions in a collision
component $C_{m}(x)$ are {}``synchronized'', i.e., the links in
$C_{m}(x)$ must have started transmitting at the same time, and will
end transmitting at the same time, so all links in the component $C_{m}(x)$
have the same remaining time. (To see this, first note that in the
case of a collision only the probe packets get transmitted, and their
transmission times $\gamma$ are identical for all links. Second,
any two links $i$ and $j$ in this component with an edge between
them must have started transmitting at the same time. Otherwise, if
$i$ starts earlier, $j$ would not transmit since it already hears
$i$'s transmission; and vice versa. By induction, all links in the
component must have started transmitting at the same time.) So, we
can write $a_{k}=a^{(m)}$ for any $k\in C_{m}(x)$ where $|C_{m}(x)|>1$,
and $a^{(m)}$ denotes the remaining time of the component $C_{m}(x)$. 

We say that a state $w$ is \emph{valid} iff it satisfies (I)\textasciitilde{}(III)
above.

Since the transmission lengths are always bounded by $b_{max}$ by
assumption, we have $b_{k}\le b_{max}$, and therefore the Markov
chain has a finite number of states, and is ergodic. As detailed in
Appendix \ref{sub:Proof-product-form}, a nice property of this Markov
chain is that it is {}``almost'' time-reversible. As a result, its
stationary distribution has a simple product-form (Appendix \ref{sub:Proof-product-form}),
from which the probability of any on-off state $x$ can be computed:
\begin{thm}
\label{thm:state-distribution}Under the stationary distribution,
the probability of $x\in\{0,1\}^{K}$ in a given slot is \begin{eqnarray}
p(x) & = & \frac{1}{E}(\gamma^{h(x)}\prod_{k\in S(x)}T_{k})\prod_{i:x_{i}=0}(1-p_{i})\prod_{j:x_{j}=1}p_{j}\nonumber \\
 & = & \frac{1}{E}(\gamma^{h(x)}\prod_{k\in S(x)}T_{k})\prod_{i=1}^{K}p_{i}^{x_{i}}q_{i}^{1-x_{i}}\label{eq:x}\end{eqnarray}
where $q_{i}:=1-p_{i}$, $T_{i}$ is the mean transmission length
of link $i$ (as defined in (\ref{eq:T_i})), and $E$ is a normalizing
term such that $\sum_{x\in\{0,1\}^{K}}p(x)=1$.%
\footnote{In this paper, several kinds of {}``states'' are defined. With a
little abuse of notation, we always use $p(\cdot)$ to denote the
probability of the {}``state'' under the stationary distribution
of the CSMA/CA Markov chain. This does not cause confusion since the
meaning of $p(\cdot)$ is clear from its argument.%
}

The proof is given in Appendix A.%
\footnote{In \cite{asymptotic}, a similar model for CSMA/CA network is formulated
with analogy to a loss network \cite{loss_network}. However, since
\cite{asymptotic} studied the case when the links are unsaturated,
the explicit expression of the stationary distribution was difficult
to obtain.%
}

Remark: Note that in $x$, some links can be in a collision state,
just as in IEEE 802.11. This is reflected in the $\gamma^{h(x)}$
term in (\ref{eq:x}). Expression (\ref{eq:x}) differs from the idealized-CSMA
case in \cite{Allerton} and the stationary distribution in the data
phase in the protocol proposed in \cite{Jian}, due to the difference
of the three protocols.
\end{thm}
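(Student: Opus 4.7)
The plan is to pass to the full detailed-state Markov chain on the states $w$ defined in (\ref{eq:w-definition}), identify its stationary distribution in product form on $w$, and then marginalize over the $(b_k,a_k)$ coordinates to extract $p(x)$. The work lies in finding and verifying the right $\pi(w)$; the reduction to $p(x)$ is a counting exercise.

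First I would write down the one-slot transition kernel of the CSMA/CA Markov chain. A slot decomposes into three logically separate operations: (i) every active link with $a_k>1$ deterministically decreases $a_k$ by one; (ii) every component $C_m(x)$ whose shared remaining time equals one terminates (all its links become idle next slot); and (iii) at the next slot every link $i$ with $x_i=0$ that has no active conflicting neighbor attempts independently with probability $p_i$, starting either a singleton successful transmission (with $b_k$ drawn from $P_k$ and $a_k=b_k$) or a synchronous collision component (all $b_k=\gamma$, shared $a^{(m)}=\gamma$) according to how many conflicting eligible links attempt simultaneously.

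Next I would guess and verify the candidate
\[
\pi(w) \;=\; \frac{1}{E}\,\Bigl(\prod_{k\in S(x)}P_k(b_k)\Bigr)\,\prod_{i=1}^{K}p_i^{x_i}q_i^{1-x_i},
\]
which is independent of the $a_k$'s and, for links in collision, of the $b_k$'s (which are pinned at $\gamma$). The verification exploits the ``almost time-reversibility'' flagged in the excerpt: the time-reversed chain obeys the same attempt rule but with the data payload and the constant overhead $\tau'$ of each packet swapped within a transmission, and with $P_k$ replaced by its time-reversed law; since $\pi(w)$ depends on $b_k$ only through $P_k(b_k)$ and not on where $a_k$ sits inside $\{1,\dots,b_k\}$, it is preserved under this reversal. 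I would therefore check detailed balance between each $w$ and the state $w'$ obtained by a single ``unstart'' (terminating a finished singleton transmission) or a single synchronous unstart of a collision component, matching a $p_k/q_k$ factor and a $P_k(b_k)/\gamma$ factor respectively on the two sides. Ergodicity and finiteness of the chain then pin down $\pi$ up to the normalizer $E$.

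Finally I would marginalize. For fixed on-off state $x$ the admissible $(b_k,a_k)$ ranges are: for each $k\in S(x)$, the pair satisfies $P_k(b_k)>0$ and $1\le a_k\le b_k$, contributing
\[
\sum_{b_k,\,a_k}P_k(b_k)\;=\;\sum_{b_k}b_k\,P_k(b_k)\;=\;T_k,
\]
while for each collision component $C_m(x)$ with $|C_m(x)|>1$, the only free variable is the shared $a^{(m)}\in\{1,\dots,\gamma\}$, giving a factor $\gamma$; summing over the $h(x)$ such components yields $\gamma^{h(x)}$, and combining factors produces the announced formula. The main obstacle is the verification of $\pi(w)$: in a single slot many links may simultaneously start, finish, or collide, so naive pairwise detailed balance between on-off states $x$ and $x'$ cannot be read off directly. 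One must either split a slot into microsteps that realize the full transition, or formalize the reversed-protocol argument and show the reversed chain is a genuine CSMA/CA chain with the same stationary law. The atomicity of a collision component's birth and death—every link in $C_m$ starts and ends in the \emph{same} slot—is where I expect the reversal argument to demand the most care, since that synchronization must be preserved consistently under time reversal.
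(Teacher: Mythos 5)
Your proposal follows essentially the same route as the paper's Appendix A: the candidate product-form stationary law on the detailed states $w$ that you write down is exactly the paper's Lemma \ref{thm:extended-state-distribution}, the verification via the reversed-protocol/``almost time-reversible'' argument (checking $p(w)Q(w,w')=p(w')\tilde{Q}(w',w)$ with $\tilde{Q}$ the kernel of the relabeled time-reversed chain) is the paper's method, and the marginalization yielding the $T_k$ and $\gamma^{h(x)}$ factors is identical. You also correctly flag the one point the paper spends most of its effort on, namely formalizing the reversal so that synchronized collision components and simultaneous starts/finishes are handled consistently.
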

Now we re-parametrize $T_{k}$ by a variable $r_{k}$. Let $T_{k}:=\tau'+T_{0}\cdot\exp(r_{k})$,
where $\tau'$, as we defined, is the overhead of a successful transmission
(including RTS, CTS, ACK packets, DIFS, etc.), and $T_{k}^{p}:=T_{0}\cdot\exp(r_{k})$
is the mean length of the payload. $T_{0}>0$ is a constant {}``reference
payload length''. Let ${\bf r}$ be the vector of $r_{k}$'s. By
Theorem \ref{thm:state-distribution}, the stationary probability
of $x$ in a slot (with a given ${\bf r}$) is \begin{equation}
p(x;{\bf r})=\frac{1}{E({\bf r})}g(x)\cdot\prod_{k\in S(x)}(\tau'+T_{0}\cdot\exp(r_{k}))\label{eq:px_given_r}\end{equation}
where $g(x)=\gamma^{h(x)}\prod_{i=1}^{K}p_{i}^{x_{i}}q_{i}^{1-x_{i}}$
is not related to ${\bf r}$, and the normalizing term is \begin{equation}
E({\bf r})=\sum_{x'\in\{0,1\}^{K}}[g(x')\cdot\prod_{k\in S(x')}(\tau'+T_{0}\cdot\exp(r_{k}))].\label{eq:E_r}\end{equation}

Then, the stationary probability that link $k$ is transmitting a
payload in a given slot is\begin{equation}
s_{k}({\bf r})=\frac{T_{0}\cdot\exp(r_{k})}{\tau'+T_{0}\cdot\exp(r_{k})}\sum_{x:k\in S(x)}p(x;{\bf r}).\label{eq:throughput}\end{equation}
Recall that the capacity of each link is 1. Also, it's easy to show
that the CSMA/CA Markov chain is ergodic. As a result, if ${\bf r}$
is fixed, the long-term average throughput of link $k$ converges
to the stationary probability $s_{k}({\bf r})$. So we say that $s_{k}({\bf r})\in[0,1]$
is the \emph{service rate} of link $k$.

\section{\label{sec:algorithms}A Distributed algorithm to approach throughput-optimality}

\subsection{The scheduling problem}

Assume that the conflict graph $G$ has $N$ different independent
sets ({}``IS'', not confined to {}``maximal independent sets''),
where each IS is a set of links that can transmit simultaneously without
conflict. Denote an IS by $\bsig\in\{0,1\}^{K}$, a 0-1 vector that
indicates which links are transmitting in this IS. The $k$'th element
of $\bsig$, $\sigma_{k}=1$ if link $k$ is transmitting in this
IS, and $\sigma_{k}=0$ otherwise. Let ${\cal X}$ be the set of ISs. 

We now describe the \emph{scheduling problem} which is the focus of
the paper. Traffic arrives at link $k$ with an arrival rate $\lambda_{k}\in(0,1)$.
For simplicity, assume the following i.i.d. Bernoulli arrivals (although
this can be easily generalized): at the beginning of slot $M\cdot i$,
$(i=0,1,2,\dots)$, a packet with a length of $M$ slots arrives at
link $k$ with probability $\lambda_{k}$. (That is, the packet would
take $M$ slots to transmit.) Clearly, link $k$ needs to be active
with a probability at least $\lambda_{k}$ to serve the arrivals.
Denote the vector of arrival rates by ${\bf \lambda}\in{\cal R}_{+}^{K}$. 

Since we focus on the scheduling problem, all the packets traverse
only one link (i.e., \emph{single-hop}) before they leave the network.
However, the results here can be extended to multi-hop networks and
be combined with congestion control as in \cite{Allerton}.
\begin{definitn}
\label{def:feasible}We say that ${\bf \lambda}$ is\emph{ feasible}
iff it can be written as ${\bf \lambda}=\sum_{\bsig\in{\cal X}}[\bar{p}_{\bsig}\cdot\bsig]$
where $\bar{p}_{\bsig}\ge0$ and $\sum_{\bsig\in{\cal X}}\bar{p}_{\bsig}=1$.
That is, there is a schedule of the independent sets (including the
non-maximal ones) that can serve the arrivals. Denote the set of feasible
${\bf \lambda}$ by $\bar{{\cal C}}$. We say that ${\bf \lambda}$
is \emph{strictly feasible} iff $\lambda\in{\cal C}$, where ${\cal C}$
is the interior of $\bar{{\cal C}}$. %
\footnote{That is, ${\cal C}:=\{{\bf \lambda}'\in\bar{{\cal C}}|{\cal B}({\bf \lambda}',d)\subseteq\bar{{\cal C}}\text{ for some }d>0\}$,
where ${\cal B}({\bf \lambda}',d)=\{\tilde{{\bf \lambda}}|\:||\tilde{{\bf \lambda}}-{\bf \lambda}'||_{2}\le d\}$
is a ball centered at ${\bf \lambda}'$ with radius $d$.%
}
\end{definitn}
A scheduling algorithm is said to be {}``throughput-optimal'' if
it can {}``support'' any $\lambda\in{\cal C}$. In this paper, this
means that for any ${\bf \lambda}\in{\cal C}$, the scheduling algorithm
can provide to link $k$ a service rate at least $\lambda_{k}$ for
all $k$.

\subsection{CSMA scheduling with collisions}

The following theorem states that any service rates equal to ${\bf \lambda}\in{\cal C}$
can be achieved by properly choosing the mean payload lengths $T_{k}^{p}:=T_{0}\exp(r_{k}),\forall k$.
\begin{thm}
\label{thm:thu-optimal}Assume that $\gamma,\tau'>0$, and transmission
probabilities $p_{k}\in(0,1),\forall k$ are fixed. Given any ${\bf \lambda}\in{\cal C}$,
there exists a unique ${\bf r}^{*}\in{\cal R}^{K}$ such that the
service rate of link $k$ is equal to the arrival rate for all $k$:
\begin{equation}
s_{k}({\bf r}^{*})=\lambda_{k},\forall k.\label{eq:thu-optimal}\end{equation}

Moreover, ${\bf r}^{*}$ is the solution of the convex optimization
problem\begin{equation}
\max_{{\bf r}}L({\bf r};{\bf \lambda})\label{eq:loglikelihood-primal_c}\end{equation}
where \begin{equation}
L({\bf r};{\bf \lambda})=\sum_{k}(\lambda_{k}r_{k})-\log(E({\bf r})),\label{eq:Lr}\end{equation}
with $E({\bf r})$ defined in (\ref{eq:E_r}). This is because $\partial L({\bf r};{\bf \lambda})/\partial r_{k}=\lambda_{k}-s_{k}({\bf r}),\forall k$.

The proof is in Appendix B.
\end{thm}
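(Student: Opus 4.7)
The plan is to establish Theorem~\ref{thm:thu-optimal} by showing that $L({\bf r};{\bf \lambda})$ is strictly concave and coercive on ${\cal R}^{K}$, so it admits a unique maximizer ${\bf r}^{*}$, and then to identify the first-order optimality condition with $s_{k}({\bf r}^{*})=\lambda_{k}$.

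First I would expand each factor $\tau'+T_{0}e^{r_{k}}$ in the product defining $E({\bf r})$ to obtain
\begin{equation*}
E({\bf r})=\sum_{x'\in\{0,1\}^{K}}\sum_{S'\subseteq S(x')}g(x')\,\tau'^{|S(x')|-|S'|}T_{0}^{|S'|}\exp\Bigl(\sum_{k\in S'}r_{k}\Bigr),
\end{equation*}
which exhibits $E$ as a strictly positive combination of exponentials in ${\bf r}$ with exponent vectors $\mathbf{1}_{S'}\in\{0,1\}^{K}$. Hence $\log E({\bf r})$ is a log-sum-exp function, automatically convex, and its Hessian equals the covariance matrix of the exponents under the associated Gibbs measure. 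Strict convexity reduces to checking that these exponents do not all lie on a common affine hyperplane. They do not: the empty configuration ($x'=0$, $S'=\emptyset$) contributes the zero vector while the singleton configuration ($x'={\bf e}_{k}$, $S'=\{k\}$) contributes ${\bf e}_{k}$ for each $k$, and these $K+1$ vectors are affinely independent. Consequently $L({\bf r};{\bf \lambda})$ is strictly concave.

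Next I would verify the gradient identity asserted in the theorem. Differentiating the product one factor at a time gives
\begin{equation*}
\frac{\partial E}{\partial r_{k}}=\sum_{x':k\in S(x')}\frac{T_{0}e^{r_{k}}}{\tau'+T_{0}e^{r_{k}}}\,g(x')\prod_{j\in S(x')}(\tau'+T_{0}e^{r_{j}}),
\end{equation*}
so dividing by $E$ and comparing with~(\ref{eq:throughput}) yields $\partial\log E/\partial r_{k}=s_{k}({\bf r})$ and hence $\partial L/\partial r_{k}=\lambda_{k}-s_{k}({\bf r})$. Any stationary point of $L$ therefore satisfies~(\ref{eq:thu-optimal}).

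The main obstacle is existence of a maximizer, which I would address via coercivity of $L$, using the strict feasibility of ${\bf \lambda}$ in an essential way. Fix a direction ${\bf v}$ with $\|{\bf v}\|=1$. Since ${\bf \lambda}\in{\cal C}$, there exists $\epsilon>0$ with ${\bf \lambda}+\epsilon{\bf v}\in\bar{{\cal C}}$; writing this point as a convex combination of ISs in ${\cal X}$ produces some $\bsig$ satisfying $\langle\bsig,{\bf v}\rangle\ge\langle{\bf \lambda}+\epsilon{\bf v},{\bf v}\rangle=\langle{\bf \lambda},{\bf v}\rangle+\epsilon$. Retaining only this $\bsig$ in the sum defining $E$ and using the elementary bound $\log(\tau'+T_{0}e^{r_{k}})\ge r_{k}+\log T_{0}$ gives $\log E(t{\bf v})\ge t\langle\bsig,{\bf v}\rangle+O(1)$, whence
\begin{equation*}
L(t{\bf v};{\bf \lambda})\le t\langle{\bf \lambda}-\bsig,{\bf v}\rangle+O(1)\le-t\epsilon+O(1)\to-\infty.
\end{equation*}
Thus $L$ is continuous, strictly concave, and coercive on ${\cal R}^{K}$, so it has a unique maximizer ${\bf r}^{*}$, at which $\nabla L({\bf r}^{*})=0$, i.e., $s_{k}({\bf r}^{*})=\lambda_{k}$ for every $k$. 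Any ${\bf r}$ solving~(\ref{eq:thu-optimal}) is a critical point of the strictly concave $L$ and therefore must equal ${\bf r}^{*}$, establishing uniqueness.
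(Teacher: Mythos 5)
Your proof is correct, but it takes a genuinely different route from the paper's. The paper (Appendix B) never works with $L$ directly as its primary object: it formulates a maximum-entropy problem over distributions ${\bf u}$ on the ``detailed states'' $(x,z)$, identifies (\ref{eq:loglikelihood-primal_c}) as the Lagrangian dual of that problem, and obtains the \emph{existence} of a finite maximizer ${\bf r}^{*}$ from Slater's condition (strict feasibility of ${\bf \lambda}$ supplies a strictly positive feasible ${\bf u}$); \emph{uniqueness} is then deduced indirectly, from strict concavity of the entropy (so the primal optimum ${\bf u}^{*}$ is unique) combined with the ratio identity $u_{({\bf e}_{k},{\bf e}_{k})}/u_{({\bf e}_{k},{\bf 0})}=\exp(r_{k})T_{0}/\tau'$, which pins down each $r_{k}^{*}$. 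You instead treat $\log E({\bf r})$ as a log-partition function: your expansion of the product over $S(x')$ recovers exactly the paper's detailed states and weights $g(x,z)$, but you use them analytically, getting strict concavity of $L$ from the covariance/affine-span argument (your witnesses ${\bf 0}$ and ${\bf e}_{k}$ play precisely the role of the paper's two states $({\bf e}_{k},{\bf 0})$ and $({\bf e}_{k},{\bf e}_{k})$ in (\ref{eq:ratio-prob})), and existence from coercivity along rays, which uses strict feasibility through the independent-set characterization of $\bar{{\cal C}}$ in Definition \ref{def:feasible}. Each approach buys something: the paper's duality framework also delivers the primal interpretation ${\bf u}^{*}={\bf p}(\cdot;{\bf r}^{*})$ and the value function $V({\bf \lambda})$ with $\nabla V=-{\bf r}^{*}({\bf \lambda})$, both of which are reused in the proof of Theorem \ref{thm:collision}; your argument is more self-contained (it bypasses Definition \ref{def:C-Cbar-CO} and the equivalence ${\cal C}_{CO}={\cal C}$ entirely) and makes explicit the strict concavity of $L$, which the paper uses implicitly but never proves. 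Two small points worth tightening: the passage from ray-wise divergence $L(t{\bf v})\rightarrow-\infty$ to the existence of a maximizer deserves a sentence (for a finite concave function this is standard, and in fact your $\epsilon$ can be taken uniform in ${\bf v}$ as the radius of a ball around ${\bf \lambda}$ inside $\bar{{\cal C}}$, with the $O(1)$ terms uniformly bounded over the finitely many independent sets, giving genuine coercivity $L({\bf r})\le-\epsilon\|{\bf r}\|+C$); and you should note that every exponent vector appears with strictly positive weight because $p_{i}\in(0,1)$, $\gamma>0$ and $\tau',T_{0}>0$, which is what licenses both the strict-convexity and the single-term lower bound on $E$.
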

Theorem \ref{thm:thu-optimal} motivates us to design a gradient algorithm
to solve problem (\ref{eq:loglikelihood-primal_c}). However, due
to the randomness of the system, $\lambda_{k}$ and $s_{k}({\bf r})$
cannot be obtained directly and need to be estimated. We design the
following distributed algorithm, where each link $k$ dynamically
adjusts its mean payload length $T_{k}^{p}$ based on local information.
\smallskip{}

\textbf{Algorithm 1: Transmission length control algorithm}

\medskip{}

The vector ${\bf r}$ is updated every $M$ slots. Specifically, it
is updated at the beginning of slot $M\cdot i$, $i=1,2,\dots$. Denote
by $t_{i}$ the time when slot $M\cdot i$ begins. Also define $t_{0}=0$.
Let {}``period $i$'' be the time between $t_{i-1}$ and $t_{i}$,
and ${\bf r}(i)$ be the value of ${\bf r}$ at the end of period
$i$, i.e., at time $t_{i}$. Initially, link $k$ sets $r_{k}(0)\in[r_{min},r_{max}]$
where $r_{min},r_{max}$ are two parameters (to be further discussed).
Then at time $t_{i}$, $i=1,2,\dots$, each link $k$ updates \begin{equation}
r_{k}(i)=r_{k}(i-1)+\alpha(i)[\lambda_{k}'(i)-s_{k}'(i)+\tilde{h}(r_{k}(i-1))]\label{eq:Alg1}\end{equation}
where $\tilde{h}(\cdot)$ is a {}``penalty function'' to be defined
later, $\alpha(i)>0$ is the step size in period $i$, $\lambda_{k}'(i),s_{k}'(i)$
are the empirical average arrival rate and service rate in period
$i$ (i.e., the actual amount of arrived traffic and served traffic
in period $i$ divided by $M$). 

\textbf{The use of dummy bits:}\emph{ An important point} here is
that we let link $k$ add dummy bits to the payload when its queue
has less bits than what is specified by the algorithm (e.g., in (\ref{eq:randomize-Tp})
below). If the queue is empty, then dummy packets are transmitted
with the specified size. So, each link is \emph{saturated}. This ensures
that the CSMA/CA Markov chain has the desired stationary distribution
in (\ref{eq:x}). The transmitted dummy bits are also included in
the computation of $s_{k}'(i)$. (Although the use of dummy bits consumes
bandwidth, it simplifies our analysis, and does not prevent us from
achieving the primary goal, i.e., approaching throughput-optimality.
In Section \ref{sub:dummy_bits}, we also simulate the case without
dummy bits.) 

Note that $\lambda_{k}'(i),s_{k}'(i)$ are random variables which
are generally not equal to $\lambda_{k}$ and $s_{k}({\bf r}(i-1))$.
Assume that the maximal instantaneous arrival rate is $\bar{\lambda}$,
so $\lambda_{k}'(i)\le\bar{\lambda},\forall k,i$. 

Also, in (\ref{eq:Alg1}), the penalty function $\tilde{h}(\cdot)$
is defined as \begin{equation}
\tilde{h}(y)=\begin{cases}
r_{min}-y & \text{if }y<r_{min}\\
0 & \text{if }y\in[r_{min},r_{max}]\\
r_{max}-y & \text{if }y>r_{max}.\end{cases}\label{eq:h}\end{equation}
As shown in the Appendix, this function keeps ${\bf r}(i)$ in a bounded
region. (This is a {}``softer'' approach than directly projecting
$r_{k}(i)$ to the set $[r_{min},r_{max}]$. The purpose is only to
simplify the proof of Theorem \ref{thm:scheduling} later.)

In period $i+1$, given ${\bf r}(i)$, we need to choose $\tau_{k}^{p}(i)$,
the payload lengths of each link $k$, so that $E(\tau_{k}^{p}(i))=T_{k}^{p}(i)=T_{0}\exp(r_{k}(i))$.
If $T_{k}^{p}(i)$ is an integer, then we let $\tau_{k}^{p}(i)=T_{k}^{p}(i)$;
otherwise, we randomize $\tau_{k}^{p}(i)$ as follows: \begin{equation}
\tau_{k}^{p}(i)=\begin{cases}
\left\lceil T_{k}^{p}(i)\right\rceil  & \text{with probability }T_{k}^{p}(i)-\left\lfloor T_{k}^{p}(i)\right\rfloor \\
\left\lfloor T_{k}^{p}(i)\right\rfloor  & \text{with probability }\left\lceil T_{k}^{p}(i)\right\rceil -T_{k}^{p}(i).\end{cases}\label{eq:randomize-Tp}\end{equation}
Here, for simplicity, we have assumed that the arrived packets can
be fragmented and reassembled to obtain the desired lengths $\left\lceil T_{k}^{p}(i)\right\rceil $
or $\left\lfloor T_{k}^{p}(i)\right\rfloor $. However, \emph{one
can avoid the fragmentation} by randomizing the number of transmitted
packets (each with a length of $M$ slots) in a similar way. When
there are not enough bits in the queue, {}``dummy bits'' are generated
(as mentioned before) to satisfy $E(\tau_{k}^{p}(i))=T_{0}\exp(r_{k}(i))$
and make the links always saturated. \medskip{}

Intuitively speaking, Algorithm 1 says that when $r_{k}\in[r_{min},r_{max}]$,
if the empirical arrival rate of link $k$ is larger than the service
rate, then link $k$ should transmit more aggressively by using a
larger mean transmission length, and vice versa. 

Algorithm 1 is parametrized by $r_{min},r_{max}$ which are fixed
during the execution of the algorithm. Note that the choice of $r_{max}$
affects the maximal possible payload length. Also, as discussed below,
the choices of $r_{max}$ and $r_{min}$ also determine the {}``capacity
region'' of Algorithm 1.

We define the region of arrival rates\begin{equation}
{\cal C}(r_{min},r_{max}):=\{{\bf \lambda}\in{\cal C}|{\bf r}^{*}({\bf \lambda})\in(r_{min},r_{max})^{K}\}\label{eq:C_min_max}\end{equation}
where ${\bf r}^{*}({\bf \lambda})$ denotes the unique solution of
$\max_{{\bf r}}L({\bf r};{\bf \lambda})$ (such that $s_{k}({\bf r}^{*})=\lambda_{k},\forall k$,
by Theorem \ref{thm:thu-optimal}). Later we show that the algorithm
can {}``support'' any ${\bf \lambda}\in{\cal C}(r_{min},r_{max})$
in some sense under certain conditions on the step sizes. We will
also give a characterization of the region ${\cal C}(r_{min},r_{max})$
later in section \ref{sec:capacity-region}.

Clearly, ${\cal C}(r_{min},r_{max})\rightarrow{\cal C}$ as $r_{min}\rightarrow-\infty$
and $r_{max}\rightarrow\infty$, where ${\cal C}$ is the set of all
strictly feasible ${\bf \lambda}$ (by Theorem \ref{thm:thu-optimal}).
Therefore, although given $(r_{min},r_{max})$ the region ${\cal C}(r_{min},r_{max})$
is smaller than ${\cal C}$, one can choose $(r_{min},r_{max})$ to
arbitrarily approach the maximal capacity region ${\cal C}$. Also,
there is a tradeoff between the capacity region and the maximal packet
length.

\medskip{}

\begin{thm}
\label{thm:scheduling}Assume that the vector of arrival rates ${\bf \lambda}\in{\cal C}(r_{min},r_{max})$.
With Algorithm 1, 

(i) If $\alpha(i)>0$ is non-increasing and satisfies $\sum_{i}\alpha(i)=\infty$,
$\sum_{i}\alpha(i)^{2}<\infty$ and $\alpha(1)\le1$ (for example,
$\alpha(i)=1/i$), then ${\bf r}(i)\rightarrow{\bf r}^{*}$ as $i\rightarrow\infty$
with probability 1, where ${\bf r}^{*}$ satisfies $s_{k}({\bf r}^{*})=\lambda_{k},\forall k$.

(ii) The case with constant step size (i.e., $\alpha(i)=\alpha,\forall i$):
For any $\delta>0$, there exists a small enough $\alpha>0$ such
that $\liminf_{J\rightarrow\infty}\sum_{i=1}^{J}s_{k}'(i)/J]\ge\lambda_{k}-\delta,\forall k$
with probability 1. In other words, one can achieve average service
rates arbitrarily close to the arrival rates by choosing $\alpha$
small enough.
\end{thm}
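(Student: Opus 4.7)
The plan is to use the ODE method of stochastic approximation. The associated limit ODE is
\[
\dot r_k = \lambda_k - s_k({\bf r}) + \tilde h(r_k), \quad k=1,\dots,K.
\]
Theorem \ref{thm:thu-optimal} gives $\lambda_k - s_k({\bf r}) = \partial L({\bf r};{\bf \lambda})/\partial r_k$ with $L$ strictly concave and unique maximizer ${\bf r}^*$; since ${\bf \lambda}\in{\cal C}(r_{min},r_{max})$ places ${\bf r}^*$ in the interior of $(r_{min},r_{max})^K$, the penalty $\tilde h$ vanishes at ${\bf r}^*$, and ${\bf r}^*$ is the unique ODE equilibrium. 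The function $V({\bf r}) = L({\bf r}^*;{\bf \lambda}) - L({\bf r};{\bf \lambda})$, augmented by a standard integrated-penalty term on the boundary, is nonnegative, vanishes only at ${\bf r}^*$, and strictly decreases along ODE trajectories, which yields global asymptotic stability.

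First I would prove that ${\bf r}(i)$ stays in a bounded box. The penalty $\tilde h$, combined with $\alpha(1)\le 1$ and the bounded increments $|\lambda_k'(i)-s_k'(i)|\le \bar\lambda+1$, gives $r_k(i)\in[r_{min}-B, r_{max}+B]$ for a finite $B$ by induction. On this compact set, with $p_k\in(0,1)$ and finite-support transmission lengths, the CSMA/CA Markov chain is uniformly geometrically ergodic in ${\bf r}$. I then rewrite (\ref{eq:Alg1}) as
\[
r_k(i) - r_k(i-1) = \alpha(i)\bigl[\lambda_k - s_k({\bf r}(i-1)) + \tilde h(r_k(i-1))\bigr] + \alpha(i)\,\xi_k(i),
\]
with $\xi_k(i) = (\lambda_k'(i)-\lambda_k) - (s_k'(i)-s_k({\bf r}(i-1)))$. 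The arrival part is a bounded martingale difference. The service part I decompose into a bounded martingale difference plus a bias term controlled by uniform mixing and the slow variation of ${\bf r}$ from period to period.

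For part (i), under $\sum_i \alpha(i)=\infty$ and $\sum_i \alpha(i)^2<\infty$, the martingale contributions summed against $\alpha(i)$ converge almost surely by standard martingale convergence, and uniform geometric ergodicity combined with the $O(\alpha(i))$ per-period change in ${\bf r}$ makes the bias contribution $\alpha(i)$-weighted summable as well. The Kushner--Yin ODE theorem then gives ${\bf r}(i)\to{\bf r}^*$ almost surely. For part (ii), with $\alpha(i)\equiv\alpha$, the piecewise-constant interpolation of $\{{\bf r}(i)\}$ converges weakly (as $\alpha\to 0$) to the ODE solution on each finite interval, and the stationary distributions of the joint chain $\{({\bf r}(i),\text{chain state})\}$ concentrate around ${\bf r}^*$. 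Combining the ergodic theorem with continuity of $s_k$ and $s_k({\bf r}^*)=\lambda_k$ yields $\liminf_{J\to\infty}\tfrac{1}{J}\sum_{i=1}^J s_k'(i) \ge \lambda_k - \delta$ almost surely for all sufficiently small $\alpha$.

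The hardest step is the noise decomposition: over a fixed window of $M$ slots the empirical service rate $s_k'(i)$ need not equal its stationary counterpart in expectation, so it cannot be treated as a pure martingale difference with respect to the natural filtration. The crucial enabling ingredient is uniform geometric mixing of the chain across the compact parameter range, guaranteed by $p_k\in(0,1)$ and the finite-support assumption on $\tau_k$; this bounds the residual bias and lets the argument slot into standard stochastic-approximation frameworks.
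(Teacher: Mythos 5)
Your proposal is correct and follows essentially the same route as the paper's proof: the ODE method of stochastic approximation, using the gradient structure $\lambda_k-s_k({\bf r})=\partial L({\bf r};{\bf \lambda})/\partial r_k$ from Theorem \ref{thm:thu-optimal} to identify ${\bf r}^*$ as the unique equilibrium of $\dot v_k=\lambda_k-s_k({\bf v})+\tilde h(v_k)$, together with an induction exploiting $\tilde h$ and $\alpha(1)\le 1$ to confine ${\bf r}(i)$ to a compact box. The only difference is packaging: the paper treats $(s_k'(i),w_0(i))$ as controlled Markov noise and invokes Borkar's corollary (whose hypotheses --- kernel continuity in ${\bf r}$ via the randomization (\ref{eq:randomize-Tp}), tightness from the finite state space --- do the averaging for you), whereas you carry out the equivalent martingale-plus-mixing-bias decomposition explicitly in the Kushner--Yin style.
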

\emph{Remark}: In \cite{Jian} which proposed an alternative algorithm
to deal with collisions, the authors made an idealized time-scale-separation
assumption that the CSMA/CA Markov chain reaches its stationary distribution
for any given CSMA parameters. We believe that the results in Theorem
\ref{thm:scheduling} can be extended to their algorithm.

The complete proof of Theorem \ref{thm:scheduling} is Appendix C,
but the result can be intuitively understood as follows. If the step
size is small (in (i), $\alpha(i)$ becomes small when $i$ is large),
$r_{k}$ is {}``quasi-static'' such that roughly, the service rate
is averaged (over multiple periods) to $s_{k}({\bf r})$, and the
arrival rate is averaged to $\lambda_{k}$. Thus the algorithm solves
the optimization problem (\ref{eq:loglikelihood-primal_c}) by a stochastic
approximation \cite{Borkar} argument, such that ${\bf r}(i)$ converges
to ${\bf r}^{*}$ in part (i), and $r(i)$ is near ${\bf r}^{*}$
with high probability in part (ii). %
{}

\medskip{}

\begin{cor}
\label{cor:stable}Consider a variant of Algorithm 1 below where the
update equation of each link $k$ is \begin{equation}
r_{k}(i)=r_{k}(i-1)+\alpha(i)[\lambda_{k}'(i)+\Delta-s_{k}'(i)+\tilde{h}(r_{k}(i-1))]\label{eq:Alg1b}\end{equation}
with a small constant $\Delta>0$. That is, the algorithm {}``pretends''
to serve the arrival rate ${\bf \lambda}+\Delta\cdot{\bf 1}$ which
is slightly larger than the actual ${\bf \lambda}$. Assume that \begin{eqnarray*}
{\bf \lambda} & \in & {\cal C}'(r_{min},r_{max},\Delta)\\
 & := & \{{\bf \lambda}|{\bf \lambda}+\Delta\cdot{\bf 1}\in{\cal C}(r_{min},r_{max})\}.\end{eqnarray*}

For algorithm (\ref{eq:Alg1b}), one has the following results:

(i) if $\alpha(i)>0$ is non-increasing and satisfies $\sum_{i}\alpha(i)=\infty$,
$\sum_{i}\alpha(i)^{2}<\infty$ and $\alpha(1)\le1$ (for example,
$\alpha(i)=1/i$), then ${\bf r}(i)\rightarrow{\bf r}^{*}$ as $i\rightarrow\infty$
with probability 1, where ${\bf r}^{*}$ satisfies $s_{k}({\bf r}^{*})=\lambda_{k}+\Delta>\lambda_{k},\forall k$;

(ii) if $\alpha(i)=\alpha$ (i.e., constant step size) where $\alpha$
is small enough, then all queues are positive recurrent (and therefore
stable). 

Algorithm (\ref{eq:Alg1b}) is parametrized by $r_{min},r_{max}$
and $\Delta$. Clearly, as $r_{min}\rightarrow-\infty$, $r_{max}\rightarrow\infty$
and $\Delta\rightarrow0$, ${\cal C}'(r_{min},r_{max},\Delta)\rightarrow{\cal C}$,
the maximal capacity region.

The proof is similar to that of Theorem \ref{thm:scheduling} and
is given in \cite{longer_version}. A sketch is as follows: Part (i)
is similar to (i) in Theorem \ref{thm:scheduling}. The extra fact
that $s_{k}({\bf r}^{*})>\lambda_{k},\forall k$ reduces the queue
size compared to Algorithm 1 (since when the queue size is large enough,
it tends to decrease). Part (ii) holds because if we choose $\delta=\Delta/2$,
then by Theorem \ref{thm:scheduling}, $\liminf_{J\rightarrow\infty}\sum_{i=1}^{J}s_{k}'(i)/J]\ge\lambda_{k}+\Delta-\delta>\lambda_{k},\forall k$
almost surely if $\alpha$ is small enough. Then the result follows
by showing that the queue sizes have negative drift. 
\end{cor}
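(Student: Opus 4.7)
The plan is to read Corollary \ref{cor:stable} as a direct consequence of Theorem \ref{thm:scheduling} applied to a shifted arrival rate. The key observation is that update equation (\ref{eq:Alg1b}) is, algebraically, exactly the update (\ref{eq:Alg1}) of Algorithm 1 in which the empirical arrival estimate $\lambda_k'(i)$ has been replaced by $\lambda_k'(i)+\Delta$. Since $\lambda_k'(i)+\Delta$ is an unbiased noisy sample of $\lambda_k+\Delta$, the whole machinery of Theorem \ref{thm:scheduling} is available if we pretend the arrival vector is $\lambda+\Delta\cdot{\bf 1}$. By hypothesis $\lambda+\Delta\cdot{\bf 1}\in{\cal C}(r_{min},r_{max})$, so Theorem \ref{thm:thu-optimal} produces a unique ${\bf r}^*$ with $s_k({\bf r}^*)=\lambda_k+\Delta$, which is the target equilibrium of the modified ODE $\dot r_k=\lambda_k+\Delta-s_k({\bf r})+\tilde h(r_k)$.

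For part (i) I would simply reuse the stochastic-approximation argument used for Theorem \ref{thm:scheduling}(i): the step-size conditions $\sum\alpha(i)=\infty$, $\sum\alpha(i)^2<\infty$ and the boundedness enforced by the penalty $\tilde h$ let us invoke the standard Robbins--Monro / ODE result (see \cite{Borkar}), giving ${\bf r}(i)\to{\bf r}^*$ almost surely. Nothing new needs to be checked beyond noting that adding the constant $\Delta$ to $\lambda_k'(i)$ preserves the martingale-difference structure of the noise and the Lipschitz properties of the driving vector field.

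For part (ii) I would first apply Theorem \ref{thm:scheduling}(ii) to the shifted system with the choice $\delta=\Delta/2$: for sufficiently small constant step size $\alpha$, almost surely
\begin{equation}
\liminf_{J\to\infty}\frac{1}{J}\sum_{i=1}^{J}s_k'(i)\;\ge\;\lambda_k+\Delta-\delta\;=\;\lambda_k+\tfrac{\Delta}{2}.
\end{equation}
Hence the time-averaged service rate exceeds the arrival rate by at least $\Delta/2$, which should yield a negative drift for the queue process. To turn this into positive recurrence I would consider the joint Markov chain $(Q(i),{\bf r}(i))$ (which is Markov because ${\bf r}(i)$ evolves autonomously modulo the noise from $\lambda_k'(i),s_k'(i)$, and $Q_k$ is driven by the same arrivals and services) and use the Foster--Lyapunov criterion with $V(Q,{\bf r})=\sum_k Q_k^2$. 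Over a large but fixed block of $T$ periods, the expected change in $V$ is dominated by $\sum_k 2Q_k\cdot M\cdot(\lambda_k-\bar s_k)\cdot T$ plus bounded second-order terms; by the $\liminf$ bound above and tightness arguments (since ${\bf r}(i)$ is confined near $[r_{min},r_{max}]^K$ by $\tilde h$), $\bar s_k\ge\lambda_k+\Delta/4$ outside a compact set in $Q$, giving the required negative drift.

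The main obstacle I anticipate is making the Foster--Lyapunov step rigorous: Theorem \ref{thm:scheduling}(ii) only provides an almost-sure $\liminf$ on time averages, whereas positive recurrence requires a uniform (in the starting state of the auxiliary variables ${\bf r}$ and in large $Q$) negative drift over a finite window. Bridging this gap typically requires either a fluid-limit / sample-path argument in the spirit of Dai, or a direct two-time-scale Lyapunov analysis that exploits the boundedness of ${\bf r}(i)$ provided by $\tilde h$ together with the Markov property of the CSMA/CA chain over each period. I would invest the bulk of the detailed proof in constructing such a $T$-step drift estimate and controlling the error terms uniformly in $Q$, with the convergence part (i) and the inheritance of the $\liminf$ bound from Theorem \ref{thm:scheduling}(ii) being essentially bookkeeping.
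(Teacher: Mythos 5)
Your proposal follows essentially the same route as the paper: part (i) is Theorem \ref{thm:scheduling}(i) applied to the shifted rate vector ${\bf \lambda}+\Delta\cdot{\bf 1}$ (valid since ${\bf \lambda}+\Delta\cdot{\bf 1}\in{\cal C}(r_{min},r_{max})$ by hypothesis), and part (ii) invokes Theorem \ref{thm:scheduling}(ii) with $\delta=\Delta/2$ to get a time-averaged service rate strictly exceeding $\lambda_k$ and then argues negative drift of the queues. The technical gap you flag --- upgrading the almost-sure $\liminf$ on time averages to a uniform finite-window Foster--Lyapunov drift for the joint chain $(Q(i),{\bf r}(i))$ --- is precisely the step the paper itself does not carry out here, deferring the full argument to the cited technical report, so your sketch matches the paper's in both structure and level of detail.
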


\section{\label{sec:capacity-region}Relationship between the CSMA parameters
and the capacity region}

In the previous section, we mentioned that the region ${\cal C}(r_{min},r_{max})$
(and ${\cal C}'(r_{min},r_{max},\Delta)$) becomes larger as we decrease
$r_{min}$ and/or increase $r_{max}$. Therefore, fixing $r_{min}$,
a larger $r_{max}$ leads to a larger capacity region, but allows
for larger transmission lengths. In practice, however, the transmission
lengths should not be too long, since longer transmission lengths\textbf{
}lead to larger access delay (where the access delay refers to the
time between the beginnings of two consecutive successful transmissions
of a link) and larger variations of the delay, and consequently, poorer
\emph{short-term fairness}. It is especially the case when a link
has a number of conflicting links which do not interfere with each
other. Then the link has to wait for all the conflicting links to
become inactive before attempting its transmission. This issue has
been studied in \cite{Thiran}\cite{conv_MS}\cite{BoE} in the contexts
of 1-D and 2-D lattice topologies, and star topologies, where it is
shown that the short-term fairness worsens when the \emph{access intensities}
(i.e., the ratios between the average transmission times and mean
backoff times) increase%
\footnote{Reference \cite{Thiran} also showed that when the access intensities
are high, there exists long-term unfairness in the 2-D lattice topology
under different boundary conditions.%
}. Although references \cite{Thiran}\cite{conv_MS}\cite{BoE} focus
on the collision-free idealized-CSMA, we observe the same phenomenon
in the simulations of our model (see Appendix E for some simulation
results in the 1-D and 2-D lattice topologies). 

Therefore, there is a \emph{tradeoff} between the long-term efficiency
(i.e., the capacity region) and short-term fairness. To quantify the
tradeoff we need to understand two relationships. The first is the
relationship between the maximal required transmission lengths and
the capacity region. And the second is between the maximal transmission
lengths and the short-term fairness. 

We first discuss the second relationship. For simplicity, assume the
arrival rate vector is ${\bf \lambda}$, and that Algorithm 1 has
converged to the suitable mean payload lengths $T_{k}^{p}:=T_{0}\exp(r_{k}^{*}({\bf \lambda})),\forall k$
(Recall that ${\bf r}^{*}({\bf \lambda})$ is the vector such that
$s_{k}({\bf r}^{*}({\bf \lambda}))=\lambda_{k},\forall k$). Assume
that we fix the mean payload lengths at $T_{k}^{p}$'s, and denote
the (random) access delay of link $k$ by $D_{k}$. We use two quantities
to measure the short-term fairness of link $k$: the mean and standard
deviation of $D_{k}$ (i.e., $E(D_{k})$ and $\sqrt{var(D_{k})}$).
Similar to \cite{conv_MS}, one has\[
E(D_{k})=\frac{T_{k}^{p}}{s_{k}({\bf r}^{*}({\bf \lambda}))}=\frac{T_{k}^{p}}{\lambda_{k}},\forall k.\]
Therefore if we can find an upper bound of $T_{k}^{p}=T_{0}\exp(r_{k}^{*}({\bf \lambda}))$
(to be further discussed in this section), then an upper bound of
$E(D_{k})$ can be obtained. (In fact, in Algorithm 1, the long-term
average of $D_{k}$ is also $T_{k}^{p}/\lambda_{k}$, since the initial
convergence phase is not significant in the long term.) On the other
hand, obtaining an expression of $\sqrt{var(D_{k})}$ for general
topologies is difficult and deserves future research. (In Section
\ref{sub:sim_alg} we will present some numerical results.) Therefore,
how to choose $r_{max}$ to ensure that $\sqrt{var(D_{k})}$ is lower
than some threshold remains an open problem.

Next we consider the first relationship. We present several generic
bounds to characterize how the regions ${\cal C}(r_{min},r_{max})$
and ${\cal C}'(r_{min},r_{max},\Delta)$ depend on $r_{max}$ and
$r_{min}$. Given a ${\bf \lambda}\in{\cal C}$, by the definition
of ${\cal C}(r_{min},r_{max})$ in (\ref{eq:C_min_max}), if one chooses
$r_{min}<\min_{k}r_{k}^{*}({\bf \lambda})$ and $r_{max}>\max_{k}r_{k}^{*}({\bf \lambda})$,
then ${\bf \lambda}\in{\cal C}(r_{min},r_{max})$, so that Algorithm
1 can be used to support ${\bf \lambda}$. ($r_{k}^{*}({\bf \lambda})$
is the $k$-th element of ${\bf r}^{*}({\bf \lambda})$.) A similar
statement can be made for ${\cal C}'(r_{min},r_{max},\Delta)$. 

Consider a vector $\bar{{\bf \lambda}}\succ{\bf 0}$ which is at the
boundary of $\bar{{\cal C}}$ (i.e., $\bar{{\bf \lambda}}\in\bar{{\cal C}}$
but $\rho\bar{{\bf \lambda}}\notin\bar{{\cal C}},\forall\rho>1$).
Clearly, for $\rho\in(0,1)$, $\rho\bar{{\bf \lambda}}\in{\cal C}$.
Denote $\rho=1-\epsilon$. We are interested to bound ${\bf r}^{*}((1-\epsilon)\bar{{\bf \lambda}})$.
For the idealized CSMA model without collisions used in \cite{Allerton},
an earlier bound obtained in \cite{joint_work} (Lemma 8-(3)) suggests
that $\max_{k}r_{k}^{*}((1-\epsilon)\bar{{\bf \lambda}})\le O(1/\epsilon)$
(where ${\bf r}^{*}$ there controls the backoff times). In this section,
we show a stronger result that, in our model with collisions, $\max_{k}r_{k}^{*}((1-\epsilon)\bar{{\bf \lambda}})\le O(\log(1/\epsilon))$,
so that the required $r_{max}$ to support arrival rates $(1-\epsilon)\bar{{\bf \lambda}}$
is not more than $O(\log(1/\epsilon))$. (Also, one can similarly
show that the same order $O(\log(1/\epsilon))$ applies to the idealized
CSMA model as well.)
\begin{thm}
\label{thm:collision}We have\begin{equation}
\bar{{\bf \lambda}}^{T}{\bf r}^{*}((1-\epsilon)\bar{{\bf \lambda}})\le b\cdot[\log(\frac{1}{\epsilon})+\log(\frac{N'}{b})+2G+1]\label{eq:r-upper-bound-2}\end{equation}
for some constants $N'$, $b$ and $G$ (defined during the proof),
if $\epsilon\le1/b$. When $\epsilon\in(1/b,1)$, \begin{equation}
\bar{{\bf \lambda}}^{T}{\bf r}^{*}((1-\epsilon)\bar{{\bf \lambda}})\le[\log(N')+2G]/\epsilon.\label{eq:r-upper-bound-3}\end{equation}

So roughly speaking, as $\epsilon\rightarrow0$, the value of ${\bf r}^{*}({\bf \lambda})$
is not more than $O(\log(1/\epsilon))$ by (\ref{eq:r-upper-bound-2}).
\end{thm}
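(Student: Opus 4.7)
\bigskip
\noindent\textbf{Proof plan for Theorem \ref{thm:collision}.} The plan is to exploit the variational characterization from Theorem \ref{thm:thu-optimal}: $\mathbf{r}^*:=\mathbf{r}^*((1-\epsilon)\bar{\boldsymbol{\lambda}})$ is the unique maximizer of the concave function
\[
L(\mathbf{r};(1-\epsilon)\bar{\boldsymbol{\lambda}}) \;=\; (1-\epsilon)\bar{\boldsymbol{\lambda}}^T\mathbf{r}-\log E(\mathbf{r}),
\]
so for any test vector $\mathbf{r}_\epsilon$ one has $L(\mathbf{r}^*;(1-\epsilon)\bar{\boldsymbol{\lambda}})\ge L(\mathbf{r}_\epsilon;(1-\epsilon)\bar{\boldsymbol{\lambda}})$. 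I would sandwich $\bar{\boldsymbol{\lambda}}^T\mathbf{r}^*$ by pairing this inequality with a lower bound on $\log E(\mathbf{r}^*)$ that is linear in $\bar{\boldsymbol{\lambda}}^T\mathbf{r}^*$, and a sharp enough upper bound on $\log E(\mathbf{r}_\epsilon)$ whose leading term matches $b\log(1/\epsilon)$.

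\smallskip
\noindent\textbf{Step 1 (lower bound on $\log E(\mathbf{r}^*)$).} Fix any decomposition $\bar{\boldsymbol{\lambda}}=\sum_\sigma\bar p_\sigma\sigma$ over ISs. Keeping only IS states in the sum defining $E(\mathbf{r}^*)$ and using $\tau'+T_0 e^{r_k^*}\ge T_0 e^{r_k^*}$, I apply the weighted log-sum-exp inequality $\log\sum_\sigma e^{a_\sigma}\ge\sum_\sigma\bar p_\sigma a_\sigma+H(\bar p)$ with $a_\sigma=\log g(\sigma)+|\sigma|\log T_0+\sum_{k\in\sigma}r_k^*$, yielding
\[
\log E(\mathbf{r}^*)\;\ge\;\bar{\boldsymbol{\lambda}}^T\mathbf{r}^* + G_1,
\]
where $G_1$ collects the $\log g$, $\log T_0$ and $H(\bar p)$ contributions into a constant independent of $\epsilon$.

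\smallskip
\noindent\textbf{Step 2 (separating hyperplane + test vector).} Since $\bar{\boldsymbol{\lambda}}$ lies on the boundary of $\bar{\mathcal{C}}$, LP duality gives $\mathbf{a}\succeq\mathbf{0}$ with $b:=\mathbf{a}^T\bar{\boldsymbol{\lambda}}$ and $\mathbf{a}^T\sigma\le b$ for every IS $\sigma$, with equality on exactly $N'$ tight ISs (which contain the support of any optimal $\bar p$). Take $\mathbf{r}_\epsilon:=\log(1/\epsilon)\cdot\mathbf{a}$, so $\bar{\boldsymbol{\lambda}}^T\mathbf{r}_\epsilon=b\log(1/\epsilon)$. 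For $a_k\ge 0$ and $\epsilon\le 1$ the inequality $\tau'+T_0\epsilon^{-a_k}\le(\tau'+T_0)\epsilon^{-a_k}$ combined with the group-by-successful-IS decomposition of $E$ yields
\[
\log E(\mathbf{r}_\epsilon)\;\le\; b\log(1/\epsilon)+\log\!\Bigl(\textstyle\sum_\sigma G(\sigma)(\tau'+T_0)^{|\sigma|}\Bigr),
\]
and a matching leading-order lower bound shows the constant on the right is $O(1)$ in $\epsilon$, consisting essentially of $\log(N')$ plus $|\sigma|\log(\tau'+T_0)$ contributions from the tight ISs, which we package as $b\log(N'/b)+2Gb$.

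\smallskip
\noindent\textbf{Step 3 (combine).} Plug the estimates of Steps 1 and 2 into $L(\mathbf{r}^*)\ge L(\mathbf{r}_\epsilon)$:
\[
-\epsilon\,\bar{\boldsymbol{\lambda}}^T\mathbf{r}^*-G_1 \;\ge\;-\epsilon\,b\log(1/\epsilon)-\bigl[b\log(N'/b)+2Gb\bigr],
\]
and rearrange to get (\ref{eq:r-upper-bound-2}), provided $\epsilon\le 1/b$ so the leading-order asymptotics of $\log E(\mathbf{r}_\epsilon)$ hold. The weaker regime $\epsilon\in(1/b,1)$ uses the cruder choice $\mathbf{r}_\epsilon=\mathbf{0}$ and only the trivial bound $F^*\ge-F(\mathbf{0})$, giving (\ref{eq:r-upper-bound-3}).

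\smallskip
\noindent\textbf{Main obstacle.} The delicate point is Step 2: to get $\log(1/\epsilon)$ rather than $1/\epsilon$ in the final bound, the $O(1)$ constant appearing in the upper bound on $\log E(\mathbf{r}_\epsilon)$ must nearly match the constant $G_1$ in the lower bound on $\log E(\mathbf{r}^*)$, so that after division by $\epsilon$ only the $b\log(1/\epsilon)$ term survives. This forces $\mathbf{a}$ and $\bar p$ to be chosen jointly, with $\mathrm{supp}(\bar p)$ lying among the tight ISs relative to $\mathbf{a}$, and requires the leading-order contribution of those tight ISs to $E(\mathbf{r}_\epsilon)$ to be tracked carefully; the resulting gap is exactly the $b[\log(N'/b)+2G+1]$ overhead in the statement.
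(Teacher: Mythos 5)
Your overall strategy (a dual, test-vector argument: compare $L(\mathbf{r}^*)\ge L(\mathbf{r}_\epsilon)$ for a separating-hyperplane-based $\mathbf{r}_\epsilon$ scaled by $\log(1/\epsilon)$) is genuinely different from the paper's, but it has a gap that you yourself flag and that I do not believe can be closed. After combining Steps 1--3 you arrive at $\epsilon\,\bar{\boldsymbol{\lambda}}^T\mathbf{r}^*\le \epsilon\, b\log(1/\epsilon)+(C_2-G_1)$, where $C_2$ is the $O(1)$ constant in your upper bound on $\log E(\mathbf{r}_\epsilon)$ and $G_1$ the one in your lower bound on $\log E(\mathbf{r}^*)$. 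Dividing by $\epsilon$ gives $b\log(1/\epsilon)+(C_2-G_1)/\epsilon$; to get the claimed $O(\log(1/\epsilon))$ you would need $C_2-G_1\le \epsilon\cdot b[\log(N'/b)+2G+1]$ for all small $\epsilon$, i.e.\ $C_2\le G_1$ exactly. But by the log-sum inequality, $\log\bigl(\sum_{\sigma}g(\sigma)T_0^{|\sigma|}\bigr)\ge H(\bar p)+\sum_\sigma\bar p_\sigma\log(g(\sigma)T_0^{|\sigma|})$ with equality only when $\bar p$ is proportional to $g(\sigma)T_0^{|\sigma|}$ on its support, and $E(\mathbf{r}_\epsilon)$ additionally collects contributions from collision states and non-tight states; so generically $C_2>G_1$ strictly and your scheme only yields $O(1/\epsilon)$ --- which is exactly the weaker bound from \cite{joint_work} that the theorem is designed to improve, and essentially the paper's own crude estimate (\ref{eq:r-upper-bound-3}) for the regime $\epsilon>1/b$. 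Note also that the claimed bound has an additive constant $b[\log(N'/b)+2G+1]$ \emph{not} divided by $\epsilon$, a structure your derivation cannot produce.

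The paper's proof avoids this trap by never comparing two loose bounds on $\log E$ at distant points. It works in the primal: the value function $V(\boldsymbol{\lambda})$ of the maximum-entropy problem (\ref{eq:ME-dual}) is concave with $\nabla V(\boldsymbol{\lambda})=-\mathbf{r}^*(\boldsymbol{\lambda})$, so the gradient inequality gives $\epsilon\,\bar{\boldsymbol{\lambda}}^T\mathbf{r}^*((1-\epsilon)\bar{\boldsymbol{\lambda}})\le V((1-\epsilon)\bar{\boldsymbol{\lambda}})-V(\bar{\boldsymbol{\lambda}})$ directly, with no spurious constants. The right-hand side is then shown to be $O(\epsilon\log(1/\epsilon))$ \emph{before} dividing by $\epsilon$, using two ingredients absent from your plan: a polytope metric-regularity lemma (Lemma \ref{lem:lip}: any distribution achieving $(1-\epsilon)\bar{\boldsymbol{\lambda}}$ is within $2b\epsilon$ in $\ell_1$ of one achieving $\bar{\boldsymbol{\lambda}}$, with $b$ determined by the extreme points of the polytope ${\cal U}$), and a modulus-of-continuity bound for entropy (Lemma \ref{lem:entropy-bound}: $|H(\mathbf{u})-H(\bar{\mathbf{u}})|\le c[\log(N'/c)+1]$ when $\|\mathbf{u}-\bar{\mathbf{u}}\|_1\le 2c$). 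The $\log(1/\epsilon)$ in the final bound comes from the $\log(1/c)$ in the entropy modulus, not from a $\log(1/\epsilon)$-scaled test direction. If you want to salvage a dual argument, you would need an analogous perturbation estimate tying $\mathbf{r}^*((1-\epsilon)\bar{\boldsymbol{\lambda}})$ to the geometry of nearby feasible distributions; the primal route is the natural place to do that.
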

The proof is in Appendix D.

The following is a lower bound of ${\bf r}^{*}({\bf \lambda})$.
\begin{prop}
\label{thm:r-lower-bound-2}Given any ${\bf \lambda}\in{\cal C}$,
we have\begin{equation}
r_{k}^{*}({\bf \lambda})\ge\log(\frac{\tau'}{T_{0}}\frac{\lambda_{k}}{1-\lambda_{k}}),\forall k.\label{eq:r-lower-bound-2}\end{equation}
Therefore\begin{equation}
\min_{k}r_{k}^{*}({\bf \lambda})\ge\log(\frac{\tau'}{T_{0}}\frac{\min_{k}\lambda_{k}}{1-\min_{k}\lambda_{k}}).\label{eq:r-lower-bound-3}\end{equation}
\end{prop}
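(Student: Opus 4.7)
The plan is to read off the bound directly from the closed-form expression (\ref{eq:throughput}) for the service rate, using only the trivial fact that a probability is at most one. Starting from
\[
\lambda_k \;=\; s_k(\mathbf{r}^*) \;=\; \frac{T_0 \exp(r_k^*)}{\tau' + T_0 \exp(r_k^*)} \sum_{x:\, k \in S(x)} p(x;\mathbf{r}^*),
\]
I would observe that the sum on the right is the stationary probability that link $k$ is in some successful-transmission on-off state; call this $Q_k(\mathbf{r}^*)$. In particular $Q_k(\mathbf{r}^*) \le 1$, so $\lambda_k \le \dfrac{T_0 \exp(r_k^*)}{\tau' + T_0 \exp(r_k^*)}$.

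The second step is to invert this inequality for $r_k^*$. Since $\lambda_k < 1$ for $\lambda \in \mathcal{C}$ (all feasible rates are strictly less than the unit link capacity), one rearranges to obtain $\lambda_k \tau' \le T_0 \exp(r_k^*)(1 - \lambda_k)$, and then takes logarithms to get (\ref{eq:r-lower-bound-2}). The second inequality (\ref{eq:r-lower-bound-3}) is then immediate because $\log\!\left(\tfrac{\tau'}{T_0}\tfrac{y}{1-y}\right)$ is monotonically increasing in $y$ on $(0,1)$, so minimising over $k$ on the left corresponds to substituting $\min_k \lambda_k$ on the right.

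There is essentially no obstacle here: the bound is loose precisely because we replace the ``link-$k$ is successful'' probability $Q_k(\mathbf{r}^*)$ by $1$. The content of the proposition is just the decomposition of the service rate into the payload fraction $\frac{T_0 \exp(r_k^*)}{\tau' + T_0 \exp(r_k^*)}$ and the (at most one) activity probability. The only small thing to be careful about is to note explicitly that $\lambda_k < 1$ is guaranteed for $\lambda \in \mathcal{C}$, so that dividing by $1-\lambda_k$ is legitimate and the logarithm on the right-hand side is well-defined and finite.
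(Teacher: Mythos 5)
Your proposal is correct and is essentially the paper's own argument: the paper proves the bound by contradiction, observing that even if link $k$ were successfully transmitting all the time its service rate could not exceed the payload fraction $T_{0}e^{r_{k}^{*}}/(\tau'+T_{0}e^{r_{k}^{*}})$, which is exactly your step of bounding the activity probability $Q_{k}({\bf r}^{*})$ by $1$ in (\ref{eq:throughput}) and then inverting. The only difference is that you phrase it as a direct inequality rather than a contradiction, which changes nothing of substance.
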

\begin{proof}
Suppose that $r_{k}^{*}({\bf \lambda})<\log(\frac{\tau'}{T_{0}}\frac{\lambda_{k}}{1-\lambda_{k}})$,
then the mean payload length is $T_{0}\exp(r_{k}^{*}({\bf \lambda}))<\tau'\cdot\lambda_{k}/(1-\lambda_{k})$.
Note that the overhead of each successful transmission is $\tau'$.
So, even if link $k$ is successfully transmitting all the time, its
service rate would be strictly less than $\tau'\cdot\frac{\lambda_{k}}{1-\lambda_{k}}/(\tau'+\tau'\cdot\frac{\lambda_{k}}{1-\lambda_{k}})=\lambda_{k}$,
leading to a contradiction. 
\end{proof}
Then, we have the following result.
\begin{cor}
$\max_{k}r_{k}^{*}((1-\epsilon)\bar{{\bf \lambda}})\le O(\log(1/\epsilon))$
as $\epsilon\rightarrow0$.\end{cor}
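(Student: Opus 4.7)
The plan is to combine the upper bound from Theorem \ref{thm:collision} (which controls a weighted sum of the $r_{k}^{*}$'s) with the per-link lower bound from Proposition \ref{thm:r-lower-bound-2} (which prevents any single coordinate from being too negative) to extract a bound on the largest coordinate.

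First, I would restrict attention to $\epsilon \in (0, 1/b]$ so that the sharper bound (\ref{eq:r-upper-bound-2}) applies, giving
\[
\bar{{\bf \lambda}}^{T}{\bf r}^{*}((1-\epsilon)\bar{{\bf \lambda}}) \;\le\; C_{1}\,\log(1/\epsilon) + C_{2}
\]
for constants $C_{1},C_{2}$ depending only on $\bar{{\bf \lambda}}$ and the network. Second, I would apply Proposition \ref{thm:r-lower-bound-2} to the arrival rate vector $(1-\epsilon)\bar{{\bf \lambda}}$ to get, coordinatewise,
\[
r_{j}^{*}((1-\epsilon)\bar{{\bf \lambda}}) \;\ge\; \log\!\left(\frac{\tau'}{T_{0}}\,\frac{(1-\epsilon)\bar{\lambda}_{j}}{1-(1-\epsilon)\bar{\lambda}_{j}}\right) \;\ge\; L_{j},
\]
where (since $\bar{{\bf \lambda}}\succ 0$ and we may assume $\epsilon \le 1/2$) $L_{j}$ is a finite constant independent of $\epsilon$.

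Next, for any fixed link $k$, I would isolate the $k$-th term in the weighted sum:
\[
\bar{\lambda}_{k}\,r_{k}^{*} \;=\; \bar{{\bf \lambda}}^{T}{\bf r}^{*} \;-\; \sum_{j\ne k}\bar{\lambda}_{j}\,r_{j}^{*} \;\le\; \bar{{\bf \lambda}}^{T}{\bf r}^{*} \;-\; \sum_{j\ne k}\bar{\lambda}_{j}\,L_{j}.
\]
Plugging in the Theorem \ref{thm:collision} bound on the first term and dividing by $\bar{\lambda}_{k}>0$,
\[
r_{k}^{*}((1-\epsilon)\bar{{\bf \lambda}}) \;\le\; \frac{1}{\bar{\lambda}_{k}}\Bigl[C_{1}\log(1/\epsilon) + C_{2} - \sum_{j\ne k}\bar{\lambda}_{j} L_{j}\Bigr] \;=\; O\!\bigl(\log(1/\epsilon)\bigr).
\]
Finally, taking the maximum over the finite set of links (with $\min_{k}\bar{\lambda}_{k}>0$ bounded away from zero) preserves the $O(\log(1/\epsilon))$ order, which is exactly the claim.

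I do not expect any real obstacle: the logarithmic upper bound is already done in Theorem \ref{thm:collision}, and the trivial per-link lower bound suffices to convert control of a weighted sum into control of the maximum. The only thing to be mildly careful about is verifying that the constants $L_{j}$ remain uniformly bounded below as $\epsilon \to 0$, which holds because $(1-\epsilon)\bar{\lambda}_{j}$ stays in a compact subinterval of $(0,1)$ for $\epsilon$ small.
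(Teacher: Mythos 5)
Your proposal is correct and follows essentially the same route as the paper's proof: combine the weighted-sum upper bound of Theorem \ref{thm:collision} (for $\epsilon\le 1/b$) with the coordinatewise lower bound of Proposition \ref{thm:r-lower-bound-2} (uniform in $\epsilon$ once $\epsilon\le 1/2$), isolate the $k$-th term, and divide by $\bar{\lambda}_{k}>0$. The only cosmetic difference is that the paper uses a single uniform lower bound $\underline{r}$ built from $\min_{k}\bar{\lambda}_{k}$ rather than your per-link constants $L_{j}$.
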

\begin{proof}
For convenience, denote ${\bf \lambda}=(1-\epsilon)\bar{{\bf \lambda}}$.
Since we are interested in the asymptotic behavior as $\epsilon\rightarrow0$,
assume that $\epsilon\le0.5$. Denote $\bar{\lambda}_{min}:=\min_{k}\bar{\lambda}_{k}>0$.
Then, $\lambda_{k}\ge0.5\bar{\lambda}_{min},\forall k$. 

By (\ref{eq:r-lower-bound-2}), we know that $r_{k}^{*}({\bf \lambda})\ge\log(\frac{\tau'}{T_{0}})+\log(\frac{\lambda_{k}}{1-\lambda_{k}})\ge\log(\frac{\tau'}{T_{0}})+\log(\frac{0.5\bar{\lambda}_{min}}{1-0.5\bar{\lambda}_{min}}):=\underline{r},\forall k$.
Then, combined with (\ref{eq:r-upper-bound-2}), if $\epsilon\le1/b$,
we have for any $k$,\begin{eqnarray*}
{\bf r}_{k}^{*}({\bf \lambda}) & \le & \{b\cdot[\log(\frac{N'}{b\cdot\epsilon})+2G+1]-\sum_{k'\ne k}\bar{\lambda}_{k'}\cdot r_{k'}^{*}({\bf \lambda})\}/\bar{\lambda}_{k}\\
 & \le & \{b\cdot[\log(\frac{N}{b\cdot\epsilon})+2G+1]-\sum_{k'\ne k}\bar{\lambda}_{k'}\cdot\underline{r}\}/\bar{\lambda}_{k}\\
 & = & O(\log(1/\epsilon))\end{eqnarray*}
which completes the proof.%
{}
\end{proof}

\section{\label{sec:Simulation-algorithm}Numerical results}

Consider the conflict graph in Fig. \ref{fig:7nodes_CG}. Let the
vector of arrival rates be ${\bf \lambda}=\rho\cdot\bar{{\bf \lambda}}$,
where $\rho\in(0,1)$ is the {}``load'', and $\bar{{\bf \lambda}}$
is a convex combination of several maximal IS: $\bar{{\bf \lambda}}=0.2*[1,0,1,0,1,0,0]+0.2*[0,1,0,0,1,0,1]+0.2*[0,0,0,1,0,1,0]+0.2*[0,1,0,0,0,1,0]+0.2*[1,0,1,0,0,1,0]=[0.4,0.4,0.4,0.2,0.4,0.6,0.2]$.
Since $\rho\in(0,1)$, ${\bf \lambda}$ is strictly feasible. Fix
the transmission probabilities as $p_{k}=1/16,\forall k$. The {}``reference
payload length'' $T_{0}=15$. %
\begin{figure}
\begin{centering}
\includegraphics[width=4cm]{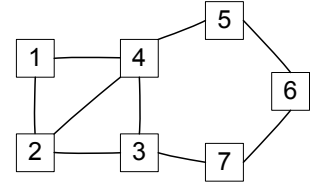}
\par\end{centering}

\caption{\label{fig:7nodes_CG}The conflict graph in simulations}

\end{figure}
%
{}

\subsection{\label{sub:sim_alg}Transmission length control algorithms}

We evaluate algorithm (\ref{eq:Alg1b}) (a variant of Algorithm 1)
in our C++ simulator. The update in (\ref{eq:Alg1b}) is performed
every $M=500$ slots. Let the step size $\alpha(i)=0.23/(2+i/100)$,
the upper bound $r_{max}=3.5$, the lower bound $r_{min}=0$, and
the {}``gap'' $\Delta=0.005$. %
{} The initial value of each $r_{k}$ is 0. 

Let the {}``load'' of arrival rates be $\rho=0.8$ (i.e., ${\bf \lambda}=0.8\cdot\bar{{\bf \lambda}}$).
The collision length (e.g., RTS length) is $\gamma=5$, and the overhead
of successful transmission is $\tau'=10$. %
{}To show the negative drift of the queue lengths, assume that initially
all queue lengths are 300 data units (where each data unit takes 100
slots to transmit). As expected, Fig. \ref{fig:Alg1} (a) shows the
convergence of the mean payload lengths, and Fig. \ref{fig:Alg1}
(b) shows that all queues are stable. %
\begin{figure}
\begin{centering}
\subfloat[Convergence of the mean payload lengths]{\begin{centering}
\includegraphics[width=7.5cm]{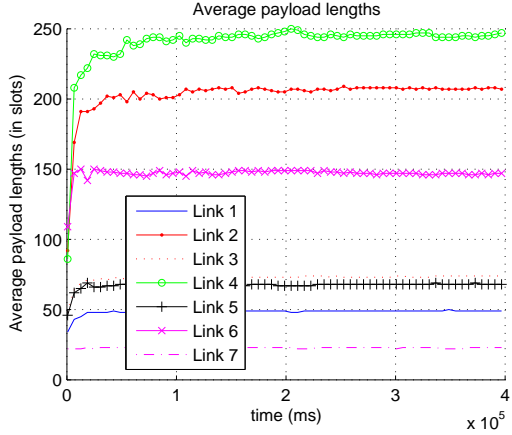}
\par\end{centering}

}
\par\end{centering}

\begin{centering}
\subfloat[Stability of the queues]{\begin{centering}
\includegraphics[width=7.5cm]{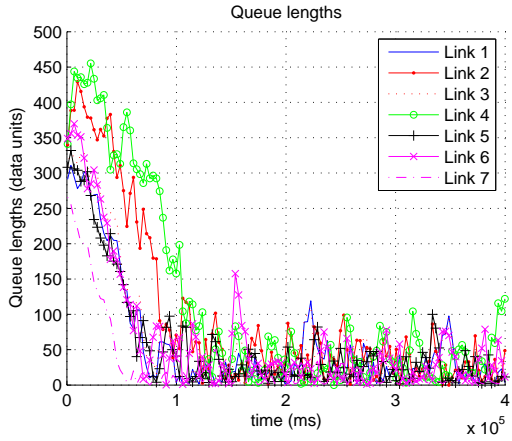}
\par\end{centering}

}
\par\end{centering}

\caption{\label{fig:Alg1}Simulation of Algorithm (\ref{eq:Alg1b}) (with the
conflict graph in Fig. \ref{fig:7nodes_CG})}

\end{figure}
\begin{table}
\begin{centering}
\begin{tabular}{|c|c|c|c|c|c|}
\hline 
$\rho$ & 0.65 & 0.7 & 0.75 & 0.8 & 0.85\tabularnewline
\hline
\hline 
Mean of $D_{3}$ (in slots) & 125.9 & 146.7 & 178.3 & 226.3 & 310.3\tabularnewline
\hline 
Standard deviation of $D_{3}$ & 132.7 & 161.9 & 211.5 & 293.7 & 456.1\tabularnewline
\hline
\end{tabular}
\par\end{centering}

\caption{\label{tab:st}Short-term fairness of link 3}

\end{table}
\begin{table}
\begin{centering}
\begin{tabular}{|c|>{\centering}p{0.2in}|>{\centering}p{0.2in}|>{\centering}p{0.2in}|>{\centering}p{0.2in}|>{\centering}p{0.2in}|>{\centering}p{0.2in}|}
\hline 
 & $R_{1}$ & $R_{2}$ & $R_{3}$ & $R_{4}$ & $R_{5}$ & $R_{6}$\tabularnewline
\hline
\hline 
$\theta=0.15$ (Simulation) & 0.279 & 0.386 & 0.547 & 0.548 & 0.387 & 0.279\tabularnewline
\hline 
$\theta=0.15$ (\cite{line_network}) & 0.272 & 0.347 & 0.442 & 0.442 & 0.347 & 0.273\tabularnewline
\hline 
$\theta=0.2$ (Simulation) & 0.526 & 0.837 & 1.372 & 1.371 & 0.840 & 0.526\tabularnewline
\hline 
$\theta=0.2$ (\cite{line_network}) & 0.5 & 0.75 & 1.125 & 1.125 & 0.75 & 0.5\tabularnewline
\hline 
$\theta=0.25$ (Simulation) & 1.075 & 2.229 & 4.735 & 4.733 & 2.240 & 1.072\tabularnewline
\hline 
$\theta=0.25$ (\cite{line_network}) & 1 & 2 & 4 & 4 & 2 & 1\tabularnewline
\hline 
$\theta=0.3$ (Simulation) & 3.210 & 12.94 & 52.76 & 52.32 & 12.91 & 3.209\tabularnewline
\hline 
$\theta=0.3$ (\cite{line_network}) & 3 & 12 & 48 & 48 & 12 & 3\tabularnewline
\hline
\end{tabular}
\par\end{centering}

\caption{\label{tab:R}Comparison of access intensities}

\end{table}

To study the tradeoff between the load $\rho$ and short-term fairness,
we run Algorithm (\ref{eq:Alg1b}) for $\rho\in\{0.65,0.7,0.75,0.8,0.85\}$.
In each case, we collect the data of the access delay and compute
its mean and standard deviation when ${\bf r}$ has almost converged.
Table \ref{tab:st} shows the results for link 3 (and other links
have a similar trend). Note that when $\rho$ increases, both the
standard deviation and the mean increase, and their ratio increases
too, indicating poorer short-term fairness.

In \cite{line_network}, van de Ven et al. considered the line topology
(i.e., 1-D lattice topology) and obtained the explicit expression
of the access intensity of each link required to support a uniform
throughput $\theta$ for all the links, under the idealized-CSMA model
without collisions \cite{csma-87,Kar,Thiran,BoE}. For comparison,
we simulate Algorithm 1 in a line topology with 6 links, where each
link conflicts with the first 2 links on both sides. After Algorithm
1 converges, we compute the access intensity of link $k$ as $R_{k}:=T_{k}^{p}/(1/p_{k}-1)$
(since the mean backoff time of link $k$ is $1/p_{k}-1$), and compare
it to the result of Theorem 2 in \cite{line_network} (although the
{}``access intensities'' under the two models are not completely
equivalent due to our inclusion of collisions.) Let $p_{k}=1/16,\forall k$,
$\gamma=1$ and $\tau'=1$. We simulate four sets of arrival rates,
${\bf \lambda}=\theta\cdot{\bf 1}$ where $\theta=0.15,0.2,0.25$
and $0.3$, and give the results in Table \ref{tab:R}. 

The results show a close match, with relatively larger differences
in $R_{3}$ and $R_{4}$. The reason is that link 3 and 4 are in the
middle of the network and suffer from more collisions. After each
collision link 3 (or 4) needs to restart the backoff, which increases
its effective backoff time and therefore requires a larger payload
length to compensate. Also, all $R_{k}$'s are higher in the simulation
due to collisions and the overhead $\tau'$.

\subsection{\label{sub:dummy_bits}Effect of dummy bits}

We have used dummy bits to facilitate our analysis and design of the
algorithms. However, transmitting dummy bits when a queue is empty
consumes extra bandwidth. In this subsection, we simulate our algorithms
without dummy bits.

In both Algorithm 1 and Algorithm (\ref{eq:Alg1b}), we make the following
heuristic modification. For each link $k$, if $\tau_{k}^{p}(i)$
as computed in (\ref{eq:randomize-Tp}) is larger than the current
(positive) queue length, then transmit a packet that includes all
the bits of the queue as the payload. That is, no dummy bits are added.
If the queue is empty then the link keeps silent. In the computation
of $s_{k}'(i)$, however, the payload of the packet is counted as
$\tau_{k}^{p}(i)$.%
\footnote{The reason for this design is that, if we only count the actual bits
transmitted, then Algorithm (\ref{eq:Alg1b}) could not converge.
Indeed, if Algorithm (\ref{eq:Alg1b}) converges, then the average
service rates is strictly larger than the arrival rates, which is
impossible if we only count the actually transmitted bits.%
} Not surprisingly, the modified algorithms are difficult to analyze,
and we therefore do not claim their convergence. (However, they still
seem to converge in the simulations.)

Fig. \ref{fig:nodummypkt} shows the evolution of the average payload
lengths under Algorithm (\ref{eq:Alg1b}) without dummy bits, when
$\rho=0.8$. Indeed, the required payload lengths are significantly
reduced compared to Fig. \ref{fig:Alg1} (a) due to the saved bandwidth. 

Under Algorithm 1, however, we find that the required payload lengths
are very close with or without dummy bits. The reason is that, since
Algorithm 1 only tries to make the average service rates equals the
arrival rates, the queues do not have a negative drift towards zero.
As a result, the queues are not close to zero most of the time, so
dummy bits are rarely generated even if they are allowed. %
\begin{figure}
\begin{centering}
\includegraphics[width=7.5cm]{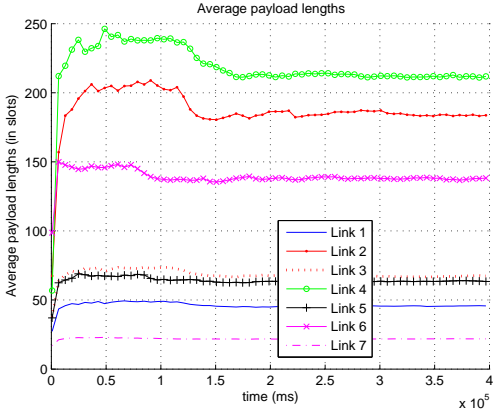}
\par\end{centering}

\caption{\label{fig:nodummypkt}Algorithm (\ref{eq:Alg1b}) without dummy bits}

\end{figure}

\subsection{\label{sub:HN}Effect of hidden nodes}

So far we have assumed that there is no hidden node (HN) in the network.
In this subsection we discuss the effect of HNs. 

Consider a simple network with 2 links that are hidden from each other.
That is, they cannot hear the transmissions of each other but a collision
occurs if their transmissions overlap. Unlike the case without HNs,
a link can start transmitting \emph{in the middle of }the other link's
transmission and cause a collision.

First, to explore how much service rates can be achieved in this scenario,
we let the two links use the same, fixed payload length $\tau^{p}$.
Let $\gamma=5,\tau'=10$, and $p_{k}=1/64,\forall k$. The two links
receive the same service rate by symmetry. Fig. \ref{fig:Service-rates-HN}
shows the service rate of one link under different values of $\tau^{p}$.
Note that the maximal service rate per link is about 0.12, much less
than 0.5 in the case without HNs. Also, when $\tau^{p}$ is large
enough, further increasing $\tau^{p}$ decreases the service rates,
because larger packets are more easily collided by the HN.

Then we simulate Algorithm 1 with arrival rates $\lambda_{1}=\lambda_{2}=0.1<0.12$.
We set $T_{0}=15,r_{min}=0,r_{max}=2.59$ so that the maximal payload
is $T_{0}\exp(r_{max})=200$ (slots), and $\alpha(i)=0.14/(2+i/100)$.
Unlike the case without HNs, the results depend on the initial condition
as shown in Fig. \ref{fig:Algorithm-1-HN}. For example, if the initial
payload lengths of both links are 40 slots (which we call {}``initial
condition 1''), then the mean payload lengths converge to the correct
value (about 17.5). However, if the initial payloads are 80 slots
({}``initial condition 2''), then the mean payload lengths keep
increasing (until reaching the maximal value) and cannot support the
arrival rates. This can be explained by Fig. \ref{fig:Service-rates-HN}.
Initial payload lengths of 40 slots achieve a per-link service rate
higher than the arrival rate. By Algorithm 1, the payload lengths
are decreased and eventually converge to the correct values. However,
if initially the payload lengths are 80 slots, a per-link service
rate lower than 0.1 is achieved. By Algorithm 1, both links increase
their payload lengths. This, however, further decreases their service
rates, and the cycle goes on. The root cause for this behavior is
as follows. Algorithm 1 has implicitly used the fact that, without
HNs, a link's service rate increases with its payload length. However,
it may not be the case when HNs exist.

To sum up, in the presence of HNs, both the achievable capacity region
of CSMA and the property of our algorithms have changed. To address
the HN problem, there are at least two directions to explore. The
first is to understand the achievable capacity with HNs, and design
algorithms to achieve the capacity. The second is to design protocols
to remove or reduce HNs, so that our existing algorithms can be applied.
There have been many proposals aiming to remove or reduce the HNs
(see \cite{Jiang-Liew} and the references therein). %
\begin{figure}
\begin{centering}
\includegraphics[width=6.5cm]{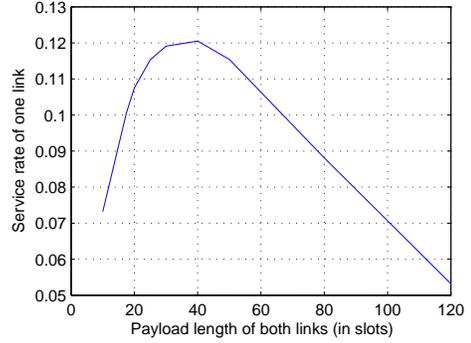}
\par\end{centering}

\caption{\label{fig:Service-rates-HN}Service rates in a 2-link network with
hidden nodes}

\end{figure}
\begin{figure}
\begin{centering}
\includegraphics[width=7.5cm]{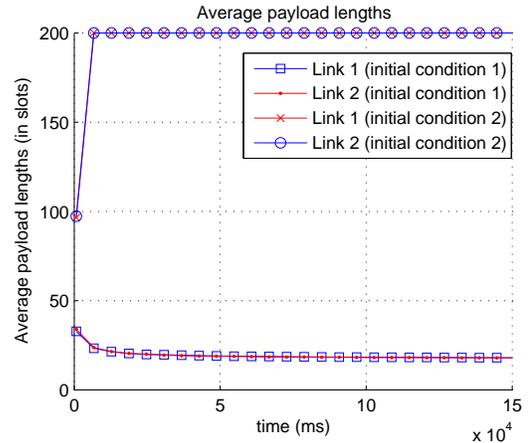}
\par\end{centering}

\caption{\label{fig:Algorithm-1-HN}Algorithm 1 with hidden nodes}

\end{figure}

\section{\label{sec:Conclusion}Conclusion}

In this paper, we have studied CSMA-based scheduling algorithms with
collisions. We first provided a model and gave a throughput formula
which takes into account the cost of collisions and overhead. The
formula has a simple product form. Next, we designed distributed algorithms
where each link adaptively updates its mean transmission length to
approach the throughput-optimality, and provided sufficient conditions
to ensure the convergence and stability of the algorithms. We also
characterized the relationship between the algorithm parameters and
the achievable capacity region. Finally, simulations results were
presented to illustrate and verify the main results.

In the algorithm, the transmission probabilities of the links are
chosen to be fixed at a reasonable level, since we have shown that
adjusting the transmission lengths alone is sufficient to approach
throughput-optimality (the main goal of this paper). However, the
choices of the transmission probabilities $p_{k}$'s has an effect
on the probability of collisions among the probe packets. In the future,
we would like to further study whether the adjustment of transmission
probabilities can be combined with the algorithm. Also, we are interested
to further explore the short-term fairness and the case with hidden
nodes.

\appendix

\section{\label{sec:Proofs-of-theorems}Proofs of theorems}

\subsection{\label{sub:Proof-product-form}Proof of Theorem \ref{thm:state-distribution}}

First, the stationary distribution of the CSMA/CA Markov chain is
expressed in the following lemma. 
\begin{lemma}
\label{thm:extended-state-distribution}In the stationary distribution,
the probability of a valid state $w$ as defined by (\ref{eq:w-definition})
is\begin{equation}
p(w)=\frac{1}{E}\prod_{i:x_{i}=0}q_{i}\prod_{j:x_{j}=1}[p_{j}\cdot f(b_{j},j,x)]\label{eq:w}\end{equation}
where \begin{equation}
f(b_{j},j,x)=\begin{cases}
1 & \text{if }j\in Z(x)\\
P_{j}(b_{j}) & \text{if }j\in S(x)\end{cases},\label{eq:f_def}\end{equation}
where $P_{j}(b_{j})$ is the p.m.f. of link $j$'s transmission length,
as defined in (\ref{eq:pmf}). Also, $E$ is a normalizing term such
that $\sum_{w}p(w)=1$, i.e., all probabilities sum up to 1. Note
that $p(w)$ does not depend on the remaining time $a_{k}$'s.\end{lemma}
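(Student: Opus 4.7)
The plan is to verify that \eqref{eq:w} satisfies the global balance equations of the CSMA/CA Markov chain and then invoke ergodicity (the chain has finitely many states and is irreducible) to conclude that it is the unique stationary distribution. The verification hinges on the ``almost-time-reversible'' structure announced after the description of the protocol: the time-reversed chain obeys the same CSMA/CA rules except that the fixed overhead $\tau'$ of a successful transmission is attached to the tail of the payload rather than to its head, a reordering that does not affect the joint distribution of $(b_k,a_k)$.

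First I would write out the one-step transition $w\to w^+$ explicitly. In a single slot, every active link decrements its remaining time $a_k$ and leaves the active set when $a_k$ reaches $0$; after this decrement, each inactive link $i$ whose conflicting neighbors are all inactive attempts independently with probability $p_i$; attempters partition along the conflict graph into components, singletons starting a successful transmission of fresh length $b_i\sim P_i$ and clusters of size $\ge 2$ starting a synchronized collision with $b=\gamma$ and $a^{(m)}=\gamma$. Second, I would partition the state space into ``plateaus'' on which $x$ is fixed and only the $a_k$'s shrink deterministically. Because the posited $p(w)$ does not involve any $a_k$, balance within a plateau is automatic, and matching flux only has to be checked at plateau boundaries, i.e., at slots in which at least one transmission ends and/or one new transmission starts. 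Third, at such a boundary I would write the forward kernel as a product $\prod_{i\in A}p_i\prod_{j\in\mathcal{I}(w)\setminus A}q_j\prod_{k\in S^{\mathrm{new}}}P_k(b_k)$, pair it with the corresponding transition of the time-reversed chain, and check that the ratio $p(w^+)/p(w)$ computed from \eqref{eq:w} supplies exactly the correcting factors that equate the two. The $p_i,q_i$ terms align because every eligible attempter contributes a $p$ and every eligible non-attempter contributes a $q$ in both directions, while the $P_k(b_k)$ factors correspond to the fresh length draws of newly successful transmissions matched with the ``draw'' of the same length in the reverse chain.

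The main obstacle will be the bookkeeping for the collision components. Each $C_m(x)$ with $|C_m(x)|>1$ carries a single synchronized clock $a^{(m)}\in\{1,\dots,\gamma\}$, so the entire component must be created in the forward direction, and dissolved in the reverse direction, in one atomic step; this couples the attempt probabilities $\prod_{k\in C_m}p_k$ across all links of the component, and one must verify that no spurious combinatorial factor appears from the simultaneity constraint. The key consistency check is that after summing \eqref{eq:w} over $a^{(m)}\in\{1,\dots,\gamma\}$ to recover the on-off marginal \eqref{eq:x}, each collision component contributes a factor $\gamma$, matching exactly the $\gamma$ admissible values of $a^{(m)}$ produced by the forward dynamics, while each successful link contributes $T_k=\sum_{b}bP_k(b)$ by the usual length-biasing identity. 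Once these per-component identities are established, the remaining factors collapse to the claimed product form and the normalizer $E$ absorbs all global constants.
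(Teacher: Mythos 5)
Your plan follows essentially the same route as the paper: posit the product form, pair each forward transition with the corresponding transition of the time-reversed (overhead-at-the-tail) dynamics, and verify a Kelly-type identity $p(w)Q(w,w')=p(w')\tilde{Q}(w',w)$. Your treatment of the collision components (atomic creation and dissolution of a synchronized component, the factor $\gamma$ from summing over the common clock $a^{(m)}\in\{1,\dots,\gamma\}$, and the factor $T_k=\sum_b bP_k(b)$ from length-biasing for successful links) matches the paper's computation of the on-off marginal in Theorem 1.

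Two steps need tightening. First, the claim that ``balance within a plateau is automatic because $p(w)$ does not involve any $a_k$'' is not justified as stated: even when the on-off state is unchanged from one slot to the next, the one-step transition probability is $\prod_i q_i$ taken over the currently unblocked inactive links, and this set changes as the clocks run down (a link whose neighbors are all in their last transmission slot becomes eligible to attempt), so interior-of-plateau states are not exempt from the balance check. Second, and more importantly, the pairwise identity $p(w)Q(w,w')=p(w')\tilde{Q}(w',w)$ yields stationarity only after you also verify that the guessed reversed kernel is stochastic, i.e.\ $\sum_{w}\tilde{Q}(w',w)=1$ for every $w'$. This is the crux of the paper's argument: it shows, via the clock-reversing bijection $w\mapsto\{x,((b_k,b_k-a_k+1))\}$, that the set of links free to change their on-off state in a transition \emph{into} $w'$ coincides with the set free to change state \emph{out of} the reversed image of $w'$, so the predecessor probabilities are exactly the one-step probabilities of a CSMA-type chain and therefore sum to one. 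Your proposal assumes this correspondence implicitly when it says ``every eligible attempter contributes a $p$ and every eligible non-attempter contributes a $q$ in both directions,'' but the identification of the eligible sets in the two time directions is precisely the fact that must be proved; without it, matching individual fluxes does not give global balance.
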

\begin{proof}
For a given state $w=\{x,((b_{k},a_{k}),\forall k:x_{k}=1)\}$, define
the set of active links whose remaining time is larger than 1 as\[
A_{1}(w)=\{k|x_{k}=1,a_{k}>1\}.\]
Links in $A_{1}(w)$ will continue their transmissions (either with
success or a collision) in the next slot. 

Define the set of inactive links {}``blocked'' by links in $A_{1}(w)$
as\[
\partial A_{1}(w)=\{j|x_{j}=0;e(j,k)=1\text{ for some }k\in A_{1}(w)\}\]
where $e(j,k)=1$ means that there is an edge between $j$ and $k$
in the conflict graph. Links in $\partial A_{1}(w)$ will remain inactive
in the next slot. 

Write $\bar{A}_{1}(w):=A_{1}(w)\cup\partial A_{1}(w)$. Define the
set of all other links as\begin{equation}
A_{2}(w)={\cal N}\backslash\bar{A}_{1}(w).\label{eq:A_2}\end{equation}
These links can change their on-off states $x_{k}$'s in the next
slot. On the other hand, links in $\bar{A}_{1}(w)$ will have the
same on-off states $x_{k}$'s in the next slot.

To illustrate these notations, consider the example in Fig. \ref{fig:Example-Markov}.
By (\ref{eq:w-example}), we have $A_{1}(w)=\{3\},\partial A_{1}(w)=\{2\},\bar{A}_{1}(w)=\{2,3\}$
and $A_{2}(w)=\{1\}$.

State $w$ can transit in the next slot to another valid state $w'=\{x',((b_{k}',a_{k}'),\forall k:x_{k}'=1)\}$,
i.e., $Q(w,w')>0$, if and only if $w'$ satisfies that (i) $x_{k}'=x_{k},\forall k\in\bar{A}_{1}(w)$;
(ii) $b_{k}'=b_{k},a_{k}'=a_{k}-1,\forall k\in\bar{A}_{1}(w)\text{ such that }x_{k}=1$;
(iii) $a_{k}'=b_{k}',\forall k\in A_{2}(w)\text{ such that }x_{k}'=1$,
and $b_{k}'=\gamma,\forall k\in A_{2}(w)\cap Z(x')$. (If $A_{2}(w)$
is an empty set, then condition (iii) is trivially true.) The transition
probability is \begin{equation}
Q(w,w')=\prod_{i\in A_{2}(w)}[p_{i}\cdot f(b_{i}',i,x')]^{x_{i}'}q_{i}^{1-x_{i}'}.\label{eq:transition}\end{equation}
Define\begin{equation}
\tilde{Q}(w',w):=\prod_{i\in A_{2}(w)}[p_{i}\cdot f(b_{i},i,x)]^{x_{i}}q_{i}^{1-x_{i}}.\label{eq:transition-reversed}\end{equation}
(If $A_{2}(w)$ is an empty set, then $Q(w,w')=1$ and $\tilde{Q}(w',w):=1$.)
If $w$ and $w'$ does not satisfy all the conditions (i), (ii), (iii),
then $Q(w,w')=0$, and also define $\tilde{Q}(w',w)=0$. 

Then, if $Q(w,w')>0$ (and $\tilde{Q}(w',w)>0$), $p(w)/\tilde{Q}(w',w)=\frac{1}{E}\prod_{i\notin A_{2}(w)}[p_{i}\cdot f(b_{i},i,x)]^{x_{i}}q_{i}^{1-x_{i}}$.
And $p(w')/Q(w,w')=\frac{1}{E}\prod_{i\notin A_{2}(w)}[p_{i}\cdot f(b_{i}',i,x')]^{x_{i}'}q_{i}^{1-x_{i}'}$.
But for any $i\notin A_{2}(w)$, i.e., $i\in\bar{A}_{1}(w)$, we have
$x_{i}'=x_{i},b_{i}'=b_{i}$ by condition (i), (ii) above. Therefore,
the two expressions are equal. Thus\begin{equation}
p(w)Q(w,w')=p(w')\tilde{Q}(w',w).\label{eq:balance}\end{equation}

If two states $w,w'$ satisfy $Q(w,w')=0$, then by definition $\tilde{Q}(w',w)=0$,
making (\ref{eq:balance}) trivially true. Therefore, (\ref{eq:balance})
holds for any $w$ and $w'$.

We will show later that $\tilde{Q}(w',w)$ is the transition probability
of the {}``time-reversed process'' of the above Markov chain (notice
the similarity between $Q(w,w')$ and $\tilde{Q}(w',w)$), and naturally
satisfies $\sum_{w}\tilde{Q}(w',w)=1$. Assuming that the claim is
true, then by (\ref{eq:balance}), we have \[
\sum_{w}p(w)\cdot Q(w,w')=p(w')\sum_{w}\tilde{Q}(w',w)=p(w').\]
Therefore, $p(w)$ is the stationary (or {}``invariant'') distribution,
which completes the proof of Lemma \ref{thm:extended-state-distribution}.

It remains to be shown that the above claim is true, in particular,
that $\sum_{w}\tilde{Q}(w',w)=1$. Denote the orignal process as $\{w(t),t\in{\cal Z}\}$
(this is the Markov process that describes our CSMA protocol, with
transition probabilities $Q(\cdot,\cdot)$ in (\ref{eq:transition})).
Adding a time index in (\ref{eq:w-definition}), we have \begin{equation}
w(t):=\{x(t),((b_{k}(t),a_{k}(t)),\forall k:x_{k}(t)=1)\}.\label{eq:w-time}\end{equation}
Now define the time-reversed process $\tilde{w}(t):=w(-t),\forall t\in{\cal Z}$.
First, note that in the process $\{w(t)\}$, the remaining time $a_{k}(t)$,
if defined, decreases with $t$; in the reversed process $\{\tilde{w}(t)\}$,
however, $a_{k}(-t)$ increases with $t$. Therefore, $\{w(t)\}$
and $\{\tilde{w}(t)\}$ are, clearly, statistically distinguishable.
So $\{w(t)\}$ is not time-reversible.

However, if we re-label the {}``remaining time'' in the reversed
order, then the process $\{\tilde{w}(t)\}$ {}``looks a lot'' like
the process $\{w(t)\}$. (This is why we say the CSMA/CA Markov chain
is almost time-reversible.) More formally, with the understanding
that $w=\{x,((b_{k},a_{k}),\forall k:x_{k}=1)\}$, define a function
$g(\cdot)$ as \begin{equation}
g(w)=\{x,((b_{k},b_{k}-a_{k}+1),\forall k:x_{k}=1)\}.\label{eq:func_g}\end{equation}

Then define the process \[
\hat{w}(t):=g(w(-t)).\]
Note that in the process $\{\hat{w}(t)\}$, the {}``remaining time''
decreases with $t$, similar to $\{w(t)\}$. 

Next we show the following two facts.

Fact 1: For any two states $w$ and $w'$ with $Q(w,w')>0$ (i.e.,
if the CSMA Markov chain can transit from state $w$ to state $w'$),
we have $A_{1}(w)=A_{1}(g(w'))$, $\bar{A}_{1}(w)=\bar{A}_{1}(g(w'))$
and $A_{2}(w)=A_{2}(g(w'))$. 

Fact 2: $Q(w,w')>0\Leftrightarrow Q(g(w'),g(w))>0$.

These facts can be illustrated by the example in Fig. \ref{fig:Example-Markov}.
First consider Fact 1. Note that $A_{1}(w)=\{3\}$, by definition,
is the set of links that are in the middle of a transmission in state
$w$ and will continue the transmission in the next state $w'$. Then,
in the reversed process, such links are also in the middle of a transmission
in state $g(w')$ and will continue the transmission in the next state
$g(w)$. So $A_{1}(w)=A_{1}(g(w'))$. Similarly, $\partial A_{1}(w)=\{2\}$,
the set of links that are blocked by $A_{1}(w)$ in $w$ are also
blocked by $A_{1}(g(w'))$ in the reversed process. Therefore $\partial A_{1}(w)=\partial A_{1}(g(w'))$.
Then by (\ref{eq:A_2}), we have $A_{2}(w)=A_{2}(g(w'))$. (Note that
it is not difficult to prove Fact 1 mechanically via the definitions
of $A_{1}(\cdot),A_{2}(\cdot)$. But we omit it here.) 

One can also verify Fact 2 in Fig. \ref{fig:Example-Markov}. We now
give a more formal proof. If $Q(w,w')>0$, then $w$ and $w'$ satisfy
conditions (i)\textasciitilde{}(iii). We first show that $Q(g(w'),g(w))>0$.
To this end, we need to verify that the states $g(w')$ and $g(w)$
satisfy condition (i)\textasciitilde{}(iii). Condition (i) holds because
$\bar{A}_{1}(w)=\bar{A}_{1}(g(w'))$ by Fact 1, and because $g(\cdot)$
does not change the {}``on-off state'' of its argument. Condition
(ii) holds since $g(\cdot)$ has reversed the remaining time (cf.
(\ref{eq:func_g})). Condition (iii) requires that in the reversed
process, any link $k\in A_{2}(g(w'))$ which is transmitting in the
state $g(w)$ must have just started its transmission. This is true
because $A_{2}(g(w'))=A_{2}(w)$ by Fact 1, and that in the original
process $w(t)$, any link $k\in A_{2}(w)$ which is transmitting in
state $w$ must be in its last slot of the transmission (otherwise
the link would be in $A_{1}(w)$). Then condition (iii) holds since
$g(\cdot)$ has reversed the remaining time.

This completes the proof that $Q(w,w')>0\Rightarrow Q(g(w'),g(w))>0$.
Now, if $Q(g(w'),g(w))>0$, by the above result, we have $Q(g(g(w)),g(g(w')))>0$.
Since $g(g(w))=w,g(g(w'))=w'$, we have $Q(w,w')>0$. This completes
the proof of Fact 2.

Consider two states $w$ and $w'$ with $Q(w,w')>0$. Then $Q(g(w'),g(w))>0$,
with \[
Q(g(w'),g(w))=\prod_{i\in A_{2}(g(w'))}[p_{i}\cdot f(b_{i},i,x)]^{x_{i}}q_{i}^{1-x_{i}}\]
by (\ref{eq:transition}). Using (\ref{eq:transition-reversed}) and
$A_{2}(w)=A_{2}(g(w'))$, we have \begin{eqnarray}
\tilde{Q}(w',w) & = & \prod_{i\in A_{2}(w)}[p_{i}\cdot f(b_{i},i,x)]^{x_{i}}q_{i}^{1-x_{i}}\nonumber \\
 & = & \prod_{i\in A_{2}(g(w'))}[p_{i}\cdot f(b_{i},i,x)]^{x_{i}}q_{i}^{1-x_{i}}\nonumber \\
 & = & Q(g(w'),g(w)).\label{eq:prob-equal}\end{eqnarray}

Therefore, $\tilde{Q}(w',w)$ is the transition probability of the
reversed process. 

By definition, $\tilde{Q}(w',w)=0$ for any $w,w'$ satisfying $Q(w,w')=0$.
So, given $w'$,\begin{eqnarray*}
\sum_{w}\tilde{Q}(w',w) & = & \sum_{w:Q(w,w')>0}\tilde{Q}(w',w)\\
 & = & \sum_{w:Q(w,w')>0}Q(g(w'),g(w))\\
 & = & \sum_{w:Q(g(w'),g(w))>0}Q(g(w'),g(w))\\
 & = & \sum_{w}Q(g(w'),g(w))=1\end{eqnarray*}
where the last step has used the fact that $g(\cdot)$ is a one-one
mapping, so that the summation is over all valid states.
\end{proof}
\medskip{}
Using Lemma \ref{thm:extended-state-distribution}, the probability
of any on-off state $x$, as in Theorem \ref{thm:state-distribution},
can be computed by summing up the probabilities of all states $w$'s
with the same on-off state $x$, using (\ref{eq:w}). 

Define the set of valid states ${\cal B}(x):=\{w|\text{ the on-off state is }x\text{ in the state }w\}$.
By Lemma \ref{thm:extended-state-distribution}, we have\begin{eqnarray}
 &  & p(x)=\sum_{w\in{\cal B}(x)}p(w)\nonumber \\
 & = & \frac{1}{E}\sum_{w\in{\cal B}(x)}\{\prod_{i:x_{i}=0}q_{i}\prod_{j:x_{j}=1}[p_{j}\cdot f(b_{j},j,x)]\}\nonumber \\
 & = & \frac{1}{E}(\prod_{i:x_{i}=0}q_{i}\prod_{j:x_{j}=1}p_{j})\sum_{w\in{\cal B}(x)}\prod_{j:x_{j}=1}f(b_{j},j,x)\nonumber \\
 & = & \frac{1}{E}(\prod_{i:x_{i}=0}q_{i}\prod_{j:x_{j}=1}p_{j})\cdot\sum_{w\in{\cal B}(x)}[\prod_{j\in S(x)}P_{j}(b_{j})].\label{eq:step1}\end{eqnarray}

Now we compute the term $\sum_{w\in{\cal B}(x)}[\prod_{j\in S(x)}P_{j}(b_{j})]$.
Consider a state $w=\{x,((b_{k},a_{k}),\forall k:x_{k}=1)\}\in{\cal B}(x)$.
For $k\in S(x)$, $b_{k}$ can take different values in ${\cal Z}_{++}$.
For each fixed $b_{k}$, $a_{k}$ can be any integer from 1 to $b_{k}$.
For a collision component $C_{m}(x)$ (i.e., $|C_{m}(x)|>1$), the
remaining time of each link in the component, $a^{(m)}$, can be any
integer from 1 to $\gamma$. Then we have \begin{eqnarray}
 &  & \sum_{w\in{\cal B}(x)}[\prod_{j\in S(x)}P_{j}(b_{j})]\nonumber \\
 & = & \prod_{j\in S(x)}[\sum_{b_{j}}\sum_{1\le a_{j}\le b_{j}}P_{j}(b_{j})]\prod_{m:|C_{m}(x)|>1}(\sum_{1\le a^{(m)}\le\gamma}1)\nonumber \\
 & = & \prod_{j\in S(x)}[\sum_{b_{j}}b_{j}P_{j}(b_{j})]\cdot\gamma^{h(x)}\nonumber \\
 & = & (\prod_{j\in S(x)}T_{j})\gamma^{h(x)}.\label{eq:step2}\end{eqnarray}

Combining (\ref{eq:step1}) and (\ref{eq:step2}) completes the proof.

\subsection{\label{sub:Proof-of-rstar}Proof of Theorem \ref{thm:thu-optimal}}

\subsubsection{Some definitions}

If at an on-off state $x$, $k\in S(x)$ (i.e., $k$ is transmitting
successfully), it is possible that link $k$ is transmitting the overhead
or the payload. So we define the {}``detailed state'' $(x,z)$,
where $z\in\{0,1\}^{K}$. Let $z_{k}=1$ if $k\in S(x)$ and link
$k$ is transmitting its payload (instead of overhead). Let $z_{k}=0$
otherwise. Denote the set of all possible detailed states $(x,z)$
by ${\cal S}$. 

Then similar to the proof of Theorem \ref{thm:state-distribution},
and using equation (\ref{eq:px_given_r}), we have the following product-form
stationary distribution\begin{equation}
p((x,z);{\bf r})=\frac{1}{E({\bf r})}g(x,z)\cdot\exp(\sum_{k}z_{k}r_{k})\label{eq:detailed-state-dist}\end{equation}
where \begin{equation}
g(x,z)=g(x)\cdot(\tau')^{|S(x)|-{\bf 1}'{\bf z}}T_{0}^{{\bf 1}'{\bf z}}\label{eq:g_x_z}\end{equation}
where ${\bf 1}'{\bf z}$ is the number of links that are transmitting
the payload in state $(x,z)$.

Clearly, this provides another expression of the service rate $s_{k}({\bf r})$:\begin{equation}
s_{k}({\bf r})=\sum_{(x,z)\in{\cal S}:z_{k}=1}p((x,z);{\bf r}).\label{eq:detailed-state-s_k}\end{equation}
\medskip{}

Now we give alternative definitions of feasible and strictly feasible
arrival rates to facilitate our proof. We will show that these definitions
are equivalent to Definition \ref{def:feasible}.
\begin{definitn}
\label{def:C-Cbar-CO}(i) A vector of arrival rate ${\bf \lambda}\in{\cal R}_{+}^{K}$
(where $K$ is the number of links) is \emph{feasible} if there exists
a probability distribution $\bar{{\bf p}}$ over ${\cal S}$ (i.e.,
$\sum_{(x,z)\in{\cal S}}\bar{p}((x,z))=1$ and $\bar{p}((x,z))\ge0$),
such that \begin{equation}
\lambda_{k}=\sum_{(x,z)\in{\cal S}}\bar{p}((x,z))\cdot z_{k}.\label{eq:feasible}\end{equation}

Let $\bar{{\cal C}}_{CO}$ be the set of feasible ${\bf \lambda}$,
where {}``CO'' stands for {}``collision''.

The rationale of the definition is that if ${\bf \lambda}$ can be
scheduled by the network, the fraction of time that the network spent
in the detailed states must be non-negative and sum up to 1. (Note
that (\ref{eq:feasible}) is the probability that link $k$ is sending
its payload given the distribution of the detailed states.)

(ii) A vector of arrival rate ${\bf \lambda}\in{\cal R}_{+}^{K}$
is \emph{strictly feasible} if it can be written as (\ref{eq:feasible})
where $\sum_{(x,z)\in{\cal S}}\bar{p}((x,z))=1$ and $\bar{p}((x,z))>0$.
Let ${\cal C}_{CO}$ be the set of strictly feasible ${\bf \lambda}$. 

For example, in the ad-hoc network in Fig. \ref{fig:Ad-hoc} (b),
${\bf \lambda}=(0.5,0.5,0.5)$ is feasible, because (\ref{eq:feasible})
holds if we let the probability of the detailed state $(x=(1,0,1),z=(1,0,1))$
be 0.5, the probability of the detailed state $(x=(0,1,0),z=(0,1,0))$
be 0.5, and all other detailed states have probability 0. However,
${\bf \lambda}=(0.5,0.5,0.5)$ is not strictly feasible since it cannot
be written as (\ref{eq:feasible}) where all $\bar{p}((x,z))>0$.
But ${\bf \lambda}'=(0.49,0.49,0.49)$ is strictly feasible.\medskip{}
\end{definitn}
\begin{prop}
The above definitions are equivalent to Definition \ref{def:feasible}.
That is,\begin{eqnarray}
\bar{{\cal C}}_{CO} & = & \bar{{\cal C}}\label{eq:eq_Cbar}\\
{\cal C}_{CO} & = & {\cal C}.\label{eq:eq_C}\end{eqnarray}
\end{prop}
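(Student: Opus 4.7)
My plan is to prove each equality by showing both inclusions, leaning throughout on one structural observation: in any detailed state $(x,z) \in {\cal S}$, the 0-1 vector $z$ is itself the indicator of an independent set of $G$. Indeed, if $z_j = z_k = 1$ with $j \ne k$, then both $j$ and $k$ lie in $S(x)$, so each of them forms a singleton component of $G(x)$, which forbids a conflict edge between $j$ and $k$.

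For $\bar{{\cal C}}_{CO} \subseteq \bar{{\cal C}}$, given a distribution $\bar{p}$ over ${\cal S}$ representing $\lambda$, I would aggregate by the value of $z$, setting $\bar{p}_\sigma := \sum_{(x,z):\, z=\sigma} \bar{p}((x,z))$; the observation above forces the nonzero terms to have $\sigma \in {\cal X}$, yielding $\lambda_k = \sum_{\sigma \in {\cal X}} \bar{p}_\sigma \sigma_k$ with $\sum_\sigma \bar{p}_\sigma = 1$. Conversely, for $\bar{{\cal C}} \subseteq \bar{{\cal C}}_{CO}$, given $\lambda = \sum_{\sigma \in {\cal X}} \bar{p}_\sigma \sigma$, I would put weight $\bar{p}_\sigma$ on the detailed state $(x=\sigma, z=\sigma)$ (a valid element of ${\cal S}$, because $\sigma$ being an IS implies every active link is alone in its component and thus can be transmitting payload) and weight zero elsewhere.

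For ${\cal C}_{CO} \subseteq {\cal C}$ I would use a small-perturbation argument. If $\bar{p}((x,z)) > 0$ for all $(x,z) \in {\cal S}$, then for each link $k$ both the idle state $(0,0)$ and the singleton-payload state $(e_k, e_k)$ lie in ${\cal S}$ with strictly positive mass. Shifting a mass $\delta$ between these two states preserves positivity for small $\delta$ and perturbs the marginal in (\ref{eq:feasible}) by $\pm\delta\, e_k$. Since the resulting $2K$ perturbed vectors lie in $\bar{{\cal C}}_{CO} = \bar{{\cal C}}$ and differ from $\lambda$ in directions spanning ${\cal R}^K$, their convex hull contains a full Euclidean ball around $\lambda$, placing $\lambda$ in ${\cal C}$.

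The step I expect to be the main obstacle is ${\cal C} \subseteq {\cal C}_{CO}$, because the natural representation produced by the second inclusion concentrates mass only on states of the form $(\sigma,\sigma)$ and leaves all others at zero. My planned fix is to mix with a reference distribution: define $\tilde{\lambda} := (1/|{\cal S}|) \sum_{(x,z) \in {\cal S}} z$, which lies in $\bar{{\cal C}}_{CO}$ by construction. Because $\lambda$ is interior to $\bar{{\cal C}}$, there is a small $\epsilon > 0$ for which $\mu := (\lambda - \epsilon \tilde{\lambda})/(1-\epsilon)$ still belongs to $\bar{{\cal C}} = \bar{{\cal C}}_{CO}$. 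Choosing any $q$ on ${\cal S}$ that represents $\mu$ and setting $\bar{p}((x,z)) := (1-\epsilon)\, q((x,z)) + \epsilon/|{\cal S}|$ produces a probability distribution that is strictly positive on all of ${\cal S}$ and has marginal exactly $\lambda$, establishing $\lambda \in {\cal C}_{CO}$.
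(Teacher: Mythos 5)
Your proof is correct, and it is worth noting how it relates to the paper's. For the first equality $\bar{{\cal C}}_{CO}=\bar{{\cal C}}$ you do essentially what the paper does: aggregate a distribution on ${\cal S}$ by the value of $z$ in one direction, and concentrate mass on the states $(\sigma,\sigma)$ in the other. Your one addition there is the explicit check that every $z$ arising in a valid detailed state is the indicator of an independent set (two links with $z_j=z_k=1$ are both singleton components of $G(x)$, hence non-adjacent); the paper uses this fact silently when it writes $\bar{p}_{\sigma}=\sum_{(x,z):z=\sigma}p((x,z))$ only for $\sigma\in{\cal X}$, so making it explicit is a genuine improvement in rigor. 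The more substantial difference is the second equality: the paper simply asserts that ${\cal C}_{CO}$ is the interior of $\bar{{\cal C}}$ and defers to Appendix A of an external technical report, whereas you supply a complete, self-contained argument. Your two halves are both sound: shifting mass $\delta$ between the idle state and $({\bf e}_{k},{\bf e}_{k})$ produces the $2K$ points $\lambda\pm\delta{\bf e}_{k}$ in the convex set $\bar{{\cal C}}$, whose convex hull contains a Euclidean ball around $\lambda$, giving ${\cal C}_{CO}\subseteq{\cal C}$; and mixing a representation of the slightly ``deflated'' rate $\mu=(\lambda-\epsilon\tilde{\lambda})/(1-\epsilon)$ with the uniform distribution on ${\cal S}$ yields a strictly positive representation of $\lambda$, giving ${\cal C}\subseteq{\cal C}_{CO}$. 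What your approach buys is a proof that stands on its own without the external reference; what the paper's citation presumably buys is brevity. The only point worth double-checking in a final write-up is that the perturbation $\delta$ must be taken smaller than $\min_{k}\min\{\bar{p}(({\bf 0},{\bf 0})),\bar{p}(({\bf e}_{k},{\bf e}_{k}))\}$ so all perturbed vectors remain probability distributions, which you have implicitly but should state.
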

\begin{proof}
We first prove (\ref{eq:eq_Cbar}). By definition, any ${\bf \lambda}\in\bar{{\cal C}}$
can be written as ${\bf \lambda}=\sum_{\sigma\in{\cal X}}\bar{p}_{\sigma}\sigma$
where ${\cal X}$ is the set of independent sets, and $\bar{{\bf p}}=(\bar{p}_{\sigma})_{\sigma\in{\cal X}}$
is a probability distribution, i.e., $\bar{p}_{\sigma}\ge0,\sum_{\sigma\in{\cal X}}\bar{p}_{\sigma}=1$.
Now we construct a distribution ${\bf p}$ over the states $(x,z)\in{\cal S}$
as follows. Let $p((\sigma,\sigma))=\bar{p}_{\sigma},\forall\sigma\in{\cal X}$,
and let $p((x,z))=0$ for all other states $(x,z)\in{\cal S}$. Then,
clearly $\sum_{(x,z)\in{\cal S}}p((x,z))\cdot z=\sum_{\sigma\in{\cal X}}p((\sigma,\sigma))\cdot\sigma=\sum_{\sigma\in{\cal X}}\bar{p}_{\sigma}\sigma={\bf \lambda}$,
which implies that ${\bf \lambda}\in\bar{{\cal C}}_{CO}$. So, \begin{equation}
\bar{{\cal C}}\subseteq\bar{{\cal C}}_{CO}.\label{eq:subset1}\end{equation}

On the other hand, if ${\bf \lambda}\in\bar{{\cal C}}_{CO}$, then
${\bf \lambda}=\sum_{(x,z)\in{\cal S}}p((x,z))\cdot z$ for some distribution
${\bf p}$ over ${\cal S}$. We define another distribution $\bar{{\bf p}}$
over ${\cal X}$ as follows. Let $\bar{p}_{\sigma}=\sum_{(x,z)\in{\cal S}:z=\sigma}p((x,z)),\forall\sigma\in{\cal X}$.
Then, ${\bf \lambda}=\sum_{(x,z)\in{\cal S}}p((x,z))\cdot z=\sum_{\sigma\in{\cal X}}\sum_{(x,z)\in{\cal S}:z=\sigma}p((x,z))\sigma=\sum_{\sigma\in{\cal X}}\bar{p}_{\sigma}\sigma$,
which implies that ${\bf \lambda}\in\bar{{\cal C}}$. Therefore\begin{equation}
\bar{{\cal C}}_{CO}\subseteq\bar{{\cal C}}.\label{eq:subset2}\end{equation}

Combining (\ref{eq:subset1}) and (\ref{eq:subset2}) yields (\ref{eq:eq_Cbar}).

We defined that ${\cal C}$ is the interior of $\bar{{\cal C}}$.
To prove (\ref{eq:eq_C}), we only need to show that ${\cal C}_{CO}$
is also the interior of $\bar{{\cal C}}$. The proof is similar to
that in Appendix A of \cite{CSMA_longer}, and is thus omitted.
\end{proof}

\subsubsection{\label{sub:attain}Existence of ${\bf r}^{*}$ }

Assume that ${\bf \lambda}$ is strictly feasible. Consider the following
convex optimization problem, where the vector ${\bf u}$ can be viewed
as a probability distribution over the detailed states $(x,z)$:\begin{eqnarray}
 & \max_{{\bf u}} & \{H({\bf u})+\sum_{(x,z)\in{\cal S}}[u_{(x,z)}\cdot\log(g(x,z))]\}\nonumber \\
 & \text{s.t.} & \sum_{(x,z)\in{\cal S}:z_{k}=1}u_{(x,z)}=\lambda_{k},\forall k\nonumber \\
 &  & u_{(x,z)}\ge0,\sum_{(x,z)\in{\cal S}}u_{(x,z)}=1\label{eq:ME-dual}\end{eqnarray}
where $H({\bf u}):=\sum_{(x,z)\in{\cal S}}[-u_{(x,z)}\log(u_{(x,z)})]$
is the {}``entropy'' of the distribution ${\bf u}$.

Let $r_{k}$ be the dual variable associated with the constraint $\sum_{(x,z)\in{\cal S}:z_{k}=1}u_{(x,z)}=\lambda_{k}$,
and let the vector ${\bf r}:=(r_{k})$. We will show the following. 
\begin{lemma}
The optimum dual variables ${\bf r}^{*}$ (when problem (\ref{eq:ME-dual})
is solved) exists, and satisfy (\ref{eq:thu-optimal}), i.e., $s_{k}({\bf r}^{*})=\lambda_{k},\forall k$.
Also, the dual problem of (\ref{eq:ME-dual}) is (\ref{eq:loglikelihood-primal_c}).\end{lemma}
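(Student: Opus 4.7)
The plan is to treat (\ref{eq:ME-dual}) as a strictly concave maximization over the probability simplex with linear equality constraints, and then to read off the dual explicitly using the product form (\ref{eq:detailed-state-dist}). First I would note that the objective is strictly concave in ${\bf u}$ (entropy is strictly concave on the simplex, and the added term $\sum u_{(x,z)}\log g(x,z)$ is linear), the feasible set is compact and nonempty, so a unique primal optimum ${\bf u}^{*}$ exists. Because ${\bf \lambda}$ is strictly feasible in the sense of Definition \ref{def:C-Cbar-CO}(ii), there is a strictly positive ${\bf u}$ satisfying every equality constraint, so Slater's condition holds.

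Next I would form the Lagrangian with multipliers $r_{k}$ for the throughput constraints and $\mu$ for the normalization:
\begin{equation*}
\mathcal{L}({\bf u},{\bf r},\mu)=H({\bf u})+\sum_{(x,z)}u_{(x,z)}\log g(x,z)-\sum_{k}r_{k}\Bigl(\sum_{(x,z):z_{k}=1}u_{(x,z)}-\lambda_{k}\Bigr)-\mu\Bigl(\sum_{(x,z)}u_{(x,z)}-1\Bigr).
\end{equation*}
Setting $\partial\mathcal{L}/\partial u_{(x,z)}=0$ gives $u_{(x,z)}=g(x,z)\exp(\sum_{k}z_{k}r_{k})\,e^{-1-\mu}$, and eliminating $\mu$ by $\sum u=1$ recovers precisely $u_{(x,z)}=p((x,z);{\bf r})$ from (\ref{eq:detailed-state-dist}), with $e^{1+\mu}=E({\bf r})$. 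Substituting back into $\mathcal{L}$, the $\sum u\log g$ term cancels against part of $H({\bf u}^*)$, the $\sum_k r_k \sum_{z_k=1} u_{(x,z)}$ term cancels against part of $H({\bf u}^*)$ as well, and what remains is the dual function $\log E({\bf r})-\sum_{k}\lambda_{k}r_{k}$; the dual problem is therefore $\min_{{\bf r}}[\log E({\bf r})-\sum_{k}\lambda_{k}r_{k}]$, which is exactly $\max_{{\bf r}}L({\bf r};{\bf \lambda})$ in (\ref{eq:loglikelihood-primal_c}).

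Finally I would invoke strong duality (available from Slater's condition plus the affine equality constraints) to conclude that the dual attains its optimum at some ${\bf r}^{*}$ and that the primal optimum $u^{*}_{(x,z)}=p((x,z);{\bf r}^{*})$. The primal feasibility $\sum_{(x,z):z_{k}=1}u^{*}_{(x,z)}=\lambda_{k}$ combined with (\ref{eq:detailed-state-s_k}) then gives $s_{k}({\bf r}^{*})=\lambda_{k}$ for all $k$. The main obstacle I anticipate is the attainment of ${\bf r}^{*}$, since the dual variables are unconstrained in ${\cal R}^{K}$; I would handle it either by appealing to the standard convex-analysis theorem that Slater plus primal solvability implies dual solvability, or by verifying coercivity of $L({\bf r};{\bf \lambda})$ directly, using that strict feasibility forces $\lambda_{k}\in(0,1)$, so that $\sum_{k}\lambda_{k}r_{k}-\log E({\bf r})\to-\infty$ whenever $\|{\bf r}\|\to\infty$ (because for $r_{k}\to+\infty$ the term $\log E({\bf r})$ grows at least like $r_{k}$ via any state with $k\in S(x)$, while for $r_{k}\to-\infty$ the $\lambda_{k}r_{k}$ term dominates).
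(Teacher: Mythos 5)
Your proposal is correct and follows essentially the same route as the paper: Lagrangian duality for the maximum-entropy problem (\ref{eq:ME-dual}), identification of the Lagrangian maximizer with the product-form distribution $p((x,z);{\bf r})$ of (\ref{eq:detailed-state-dist}), Slater's condition (from strict feasibility) to guarantee finite optimal dual variables, and primal feasibility plus (\ref{eq:detailed-state-s_k}) to conclude $s_{k}({\bf r}^{*})=\lambda_{k}$. The only cosmetic differences are that the paper keeps the simplex constraint in the domain (a partial Lagrangian) instead of dualizing it with $\mu$, and that with the minus sign you chose for the multipliers the stationarity condition actually yields $u_{(x,z)}\propto g(x,z)\exp(-\sum_{k}z_{k}r_{k})$, so the variable matching the paper's ${\bf r}$ is $-{\bf r}$ in your convention --- a harmless bookkeeping slip.
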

\begin{proof}
With the above definition of ${\bf r}$, a partial Lagrangian of problem
(\ref{eq:ME-dual}) (subject to $u_{(x,z)}\ge0,\sum_{(x,z)}u_{(x,z)\in{\cal S}}=1$)
is\begin{eqnarray}
 &  & {\cal L}({\bf u};{\bf r})\nonumber \\
 & = & H({\bf u})+\sum_{(x,z)\in{\cal S}}[u_{(x,z)}\log(g(x,z))]\nonumber \\
 &  & +\sum_{k}r_{k}[\sum_{(x,z)\in{\cal S}:z_{k}=1}u_{(x,z)}-\lambda_{k}]\nonumber \\
 & = & \sum_{(x,z)\in{\cal S}}\{u_{(x,z)}[-\log(u_{(x,z)})+\log(g(x,z))\nonumber \\
 &  & +\sum_{k:z_{k}=1}r_{k}]\}-\sum_{k}(r_{k}\lambda_{k}).\label{eq:partial-L}\end{eqnarray}

So\[
\frac{\partial{\cal L}({\bf u};{\bf r})}{\partial u_{(x,z)}}=-\log(u_{(x,z)})-1+\log(g(x,z))+\sum_{k:z_{k}=1}r_{k}.\]

We claim that \begin{equation}
u_{(x,z)}({\bf r}):=p((x,z);{\bf r}),\forall(x,z)\in{\cal S}\label{eq:Lagrangian-maximizer}\end{equation}
 (cf. equation (\ref{eq:detailed-state-dist})) maximizes ${\cal L}({\bf u};{\bf r})$
over ${\bf u}$ subject to $u_{(x,z)}\ge0,\sum_{(x,z)}u_{(x,z)\in{\cal S}}=1$.
Indeed, the partial derivative at the point ${\bf u}({\bf r})$ is\[
\frac{\partial{\cal L}({\bf u}({\bf r});{\bf r})}{\partial u_{(x,z)}}=\log(E({\bf r}))-1,\]
which is the same for all $(x,z)\in{\cal S}$ (Since given the dual
variables ${\bf r}$, $\log(E({\bf r}))$ is a constant). Also, $u_{(x,z)}({\bf r})=p((x,z);{\bf r})>0$
and $\sum_{(x,z)\in{\cal S}}u_{(x,z)}({\bf r})=1$. Therefore, it
is impossible to increase ${\cal L}({\bf u};{\bf r})$ by slightly
perturbing ${\bf u}$ around ${\bf u}({\bf r})$ (subject to ${\bf 1}^{T}{\bf u}=1$).
Since ${\cal L}({\bf u};{\bf r})$ is concave in ${\bf u}$, the claim
follows. 

Denote $l({\bf r})=\max_{{\bf u}}{\cal L}({\bf u};{\bf r})$, then
the dual problem of (\ref{eq:ME-dual}) is $\inf_{{\bf r}}l({\bf r})$.
Plugging the expression of $u_{(x,z)}({\bf r})$ into ${\cal L}({\bf u};{\bf r})$,
it is not difficult to find that $\inf_{{\bf r}}l({\bf r})$ is equivalent
to $\sup_{{\bf r}}L({\bf r};{\bf \lambda})$ where $L({\bf r};{\bf \lambda})$
is defined in (\ref{eq:Lr}).

Since ${\bf \lambda}$ is strictly feasible, it can be written as
(\ref{eq:feasible}) where $\sum_{(x,z)\in{\cal S}}\bar{p}((x,z))=1$
and $\bar{p}((x,z))>0$. Therefore, there exists ${\bf u}\succ{\bf 0}$
(by choosing ${\bf u}=\bar{{\bf p}}$) that satisfies the constraints
in (\ref{eq:ME-dual}) and also in the interior of the domain of the
objective function. So, problem (\ref{eq:ME-dual}) satisfies the
Slater condition \cite{convex-book}. As a result, there exists a
vector of (finite) optimal dual variables ${\bf r}^{*}$ when problem
(\ref{eq:ME-dual}) is solved. Also, ${\bf r}^{*}$ solves the dual
problem $\sup_{{\bf r}}L({\bf r};{\bf \lambda})$. Therefore, $\sup_{{\bf r}}L({\bf r};{\bf \lambda})$
is attainable and can be written as $\max_{{\bf r}}L({\bf r};{\bf \lambda})$,
as in (\ref{eq:loglikelihood-primal_c}).

Finally, the optimal solution ${\bf u}^{*}$ of problem (\ref{eq:ME-dual})
is such that $u_{(x,z)}^{*}=u_{(x,z)}({\bf r}^{*}),\forall(x,z)\in{\cal S}$.
Also, ${\bf u}^{*}$ is clearly feasible for problem (\ref{eq:ME-dual}).
Therefore, \[
\sum_{(x,z)\in{\cal S}:z_{k}=1}u_{(x,z)}^{*}=s_{k}({\bf r}^{*})=\lambda_{k},\forall k.\]

\end{proof}
\emph{Remark}: From (\ref{eq:partial-L}) and (\ref{eq:Lagrangian-maximizer}),
we see that a subgradient (or gradient) of the dual objective function
$L({\bf r};{\bf \lambda})$ is \[
\frac{\partial L({\bf r};{\bf \lambda})}{\partial r_{k}}=\lambda_{k}-\sum_{(x,z)\in{\cal S}:z_{k}=1}u_{(x,z)}({\bf r})=\lambda_{k}-s_{k}({\bf r}).\]
This can also be obtained by direct differentiation of $L({\bf r};{\bf \lambda})$.

\subsubsection{\label{sub:uniqueness}Uniqueness of ${\bf r}^{*}$}

Now we show the uniqueness of ${\bf r}^{*}$. Note that the objective
function of (\ref{eq:ME-dual}) is strictly concave. Therefore ${\bf u}^{*}$,
the optimal solution of (\ref{eq:ME-dual}) is unique. Consider two
detailed state $({\bf e}_{k},{\bf e}_{k})$ and $({\bf e}_{k},{\bf 0})$,
where ${\bf e}_{k}$ is the $K$-dimensional vector whose $k$'th
element is 1 and all other elements are 0's. We have $u_{({\bf e}_{k},{\bf e}_{k})}^{*}=p(({\bf e}_{k},{\bf e}_{k});{\bf r}^{*})$
and $u_{({\bf e}_{k},{\bf 0})}^{*}=p(({\bf e}_{k},{\bf 0});{\bf r}^{*})$.
Then by (\ref{eq:detailed-state-dist}), \begin{equation}
u_{({\bf e}_{k},{\bf e}_{k})}({\bf r}^{*})/u_{({\bf e}_{k},{\bf 0})}({\bf r}^{*})=\exp(r_{k}^{*})\cdot(T_{0}/\tau').\label{eq:ratio-prob}\end{equation}
Suppose that $r^{*}$ is not unique, that is, there exist ${\bf r}_{I}^{*}\ne{\bf r}_{II}^{*}$
but both are optimal ${\bf r}$. Then, $r_{I,k}^{*}\ne r_{II,k}^{*}$
for some $k$. This contradicts (\ref{eq:ratio-prob}) and the uniqueness
of ${\bf u}^{*}$. Therefore ${\bf r}^{*}$ is unique. This also implies
that $\max_{{\bf r}}L({\bf r};{\bf \lambda})$ has a unique solution
${\bf r}^{*}$.

\subsection{\label{sub:Proof-of-Theorem}Proof of Theorem \ref{thm:scheduling}}

We will use results in \cite{Borkar} to prove Theorem \ref{thm:scheduling}.
Similar techniques have been used in \cite{conv_MS} to analyze the
convergence of an algorithm in \cite{Allerton}.

\subsubsection{Part (i): Decreasing step size}

Define the concave function

\begin{equation}
\tilde{H}(y):=\begin{cases}
-(r_{min}-y)^{2}/2 & \text{if }y<r_{min}\\
0 & \text{if }y\in[r_{min},r_{max}]\\
-(r_{max}-y)^{2}/2 & \text{if }y>r_{max}.\end{cases}\label{eq:H}\end{equation}

Note that $d\tilde{H}(y)/dy=\tilde{h}(y)$ where $\tilde{h}(y)$ is
defined in (\ref{eq:h}). Let $G({\bf r};{\bf \lambda}):=L({\bf r};{\bf \lambda})+\sum_{k}\tilde{H}(r_{k})$.
Since ${\bf \lambda}$ is strictly feasible, $\max_{{\bf r}}L({\bf r};{\bf \lambda})$
has a unique solution ${\bf r}^{*}$. That is, $L({\bf r}^{*};{\bf \lambda})>L({\bf r};{\bf \lambda}),\forall{\bf r}\ne{\bf r}^{*}$.
Since ${\bf r}^{*}\in(r_{min},r_{max})^{K}$ by assumption, then $\forall{\bf r}$,
$\sum_{k}\tilde{H}(r_{k}^{*})=0\ge\sum_{k}\tilde{H}(r_{k})$. Therefore,
$G({\bf r}^{*};{\bf \lambda})>G({\bf r};{\bf \lambda}),\forall{\bf r}\ne{\bf r}^{*}$.
So ${\bf r}^{*}$ is the unique solution of $\max_{{\bf r}}G({\bf r};{\bf \lambda})$.
Because $\partial G({\bf r};{\bf \lambda})/\partial r_{k}=\lambda_{k}-s_{k}({\bf r})+\tilde{h}(r_{k})$,
Algorithm 1 tries to solve $\max_{{\bf r}}G({\bf r};{\bf \lambda})$
with inaccurate gradients. 

Let ${\bf v}^{s}(t)$ be the solution of the following differential
equation (for $t\ge s$) \begin{equation}
dv_{k}(t)/dt=\lambda_{k}-s_{k}({\bf v}(t))+\tilde{h}(v_{k}(t)),\forall k\label{eq:ODE}\end{equation}
with the initial condition that ${\bf v}^{s}(s)=\bar{{\bf r}}(s)$.
So, ${\bf v}^{s}(t)$ can be viewed as the {}``ideal'' trajectory
of Algorithm 1 with the smoothed arrival rate and service rate. And
(\ref{eq:ODE}) can be viewed as a continuous-time gradient algorithm
to solve $\max_{{\bf r}}G({\bf r};{\bf \lambda})$. We have shown
above that ${\bf r}^{*}$ is the unique solution of $\max_{{\bf r}}G({\bf r};{\bf \lambda})$.
Therefore ${\bf v}^{s}(t)$ converges to the unique ${\bf r}^{*}$
for any initial condition.

Recall that in Algorithm 1, ${\bf r}(i)$ is always updated at the
beginning of a minislot. Define $Y(i-1):=(s_{k}'(i),w_{0}(i))$ where
$w_{0}(i)$ is the state $w$ at time $t_{i}$. Then $\{Y(i)\}$ is
a non-homogeneous Markov process whose transition kernel from time
$t_{i-1}$ to $t_{i}$ depends on ${\bf r}(i-1)$. The update in Algorithm
1 can be written as\[
r_{k}(i)=r_{k}(i-1)+\alpha(i)\cdot[f(r_{k}(i-1),Y(i-1))+M(i)]\]
where $f(r_{k}(i-1),Y(i-1)):=\lambda_{k}-s_{k}'(i)+\tilde{h}(r_{k}(i-1))$,
and $M(i)=\lambda_{k}'(i)-\lambda_{k}$ is zero-mean noise.

To use Corollary 8 in page 74 of \cite{Borkar} to show Algorithm
1's almost-sure convergence to ${\bf r}^{*}$, the following conditions
are sufficient: 

(i) $f(\cdot,\cdot)$ is Lipschitz in the first argument, and uniformly
in the second argument. This holds by the construction of $\tilde{h}(\cdot)$; 

(ii) The transition kernel of $Y(i)$ is continuous in ${\bf r}(i)$.
This is true due to the way we randomize the transmission lengths
in (\ref{eq:randomize-Tp}).

(iii) (\ref{eq:ODE}) has a unique convergent point ${\bf r}^{*}$,
which has been shown above; 

(iv) With Algorithm 1, $r_{k}(i)$ is bounded $\forall k,i$ almost
surely. This is proved in Lemma \ref{lem:r-bounded-Algorithm-4} below.

(v) Tightness condition ((\dag{}) in \cite{Borkar}, page 71): This
is satisfied since $Y(i)$ has a bounded state-space (cf. conditions
(6.4.1) and (6.4.2) in \cite{Borkar}, page 76). The state space of
$Y(i)$ is bounded because $s_{k}'(i)\in[0,1]$ and $w_{0}(i)$ is
in a finite set (which is shown in Lemma \ref{lem:finite-set}) below.

So, by \cite{Borkar}, ${\bf r}(i)$ converges to ${\bf r}^{*}$ almost
surely.
\begin{lemma}
\label{lem:r-bounded-Algorithm-4}With Algorithm 1, ${\bf r}(i)$
is always bounded. Specifically, $r_{k}(i)\in[r_{min}-2,r_{max}+2\bar{\lambda}],\forall k,i$,
where $\bar{\lambda}$, as defined before, is the maximal instantaneous
arrival rate, so that $\lambda_{k}'(i)\le\bar{\lambda},\forall k,i$. \end{lemma}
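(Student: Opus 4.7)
The plan is to prove the bound by induction on $i$. The base case $i=0$ is immediate since the initialization gives $r_k(0)\in[r_{min},r_{max}]\subseteq[r_{min}-2,r_{max}+2\bar{\lambda}]$. For the inductive step, I would split into three cases according to which region $r_k(i-1)$ lies in, because the penalty function $\tilde{h}$ is piecewise-defined.

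The key algebraic trick is to exploit the hypothesis $\alpha(1)\le 1$, which (combined with non-increasingness) gives $\alpha(i)\in(0,1]$ for every $i$. Whenever the penalty is active, the update (\ref{eq:Alg1}) can be rewritten as a convex combination. For instance, if $r_k(i-1)>r_{max}$, then $\tilde{h}(r_k(i-1))=r_{max}-r_k(i-1)$, so
\begin{equation*}
r_k(i)=(1-\alpha(i))\,r_k(i-1)+\alpha(i)\bigl[r_{max}+\lambda_k'(i)-s_k'(i)\bigr].
\end{equation*}
Since $\lambda_k'(i)-s_k'(i)\in[-1,\bar{\lambda}]$, the second point lies in $[r_{max}-1,r_{max}+\bar{\lambda}]$; by the inductive hypothesis the first point lies in $(r_{max},r_{max}+2\bar{\lambda}]$; hence the convex combination lies in $[r_{max}-1,r_{max}+2\bar{\lambda}]\subseteq[r_{min}-2,r_{max}+2\bar{\lambda}]$. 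The symmetric case $r_k(i-1)<r_{min}$ uses the analogous identity with $r_{min}$ in place of $r_{max}$ and shows $r_k(i)\in[r_{min}-2,r_{min}+\bar{\lambda}]$, which is again contained in the target interval since $r_{min}\le r_{max}$. Finally, in the interior case $r_k(i-1)\in[r_{min},r_{max}]$ the penalty vanishes and $r_k(i)=r_k(i-1)+\alpha(i)[\lambda_k'(i)-s_k'(i)]$, whose increment is at most $\bar{\lambda}$ and at least $-1$ in absolute terms (again because $\alpha(i)\le 1$), so $r_k(i)\in[r_{min}-1,r_{max}+\bar{\lambda}]$.

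These three cases combine to close the induction and yield $r_k(i)\in[r_{min}-2,r_{max}+2\bar{\lambda}]$ for all $i$. I do not expect any genuine obstacle: the proof is essentially a careful case split, and the only subtle ingredient is recognizing that $\alpha(i)\le 1$ is precisely what is needed to prevent the penalty from overshooting, turning the update into a true convex combination between the previous iterate and a point close to the boundary of $[r_{min},r_{max}]$. The width $2$ on the lower side and $2\bar{\lambda}$ on the upper side arises simply from the worst-case inductive value of $r_k(i-1)$ combined with one further perturbation of magnitude $\max(1,\bar{\lambda})$; since the convex combination never enlarges the range, no further slack is needed.
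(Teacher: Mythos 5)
Your proof is correct and follows essentially the same route as the paper's: induction on $i$, a case split according to which piece of $\tilde{h}$ is active, and the observation that $\alpha(i)\le 1$ turns the update into a convex combination (or bounds the increment by $\max(1,\bar{\lambda})$) so the iterate cannot escape $[r_{min}-2,\,r_{max}+2\bar{\lambda}]$. The paper organizes the cases slightly differently (proving the upper bound with a partition at $r_{max}+\bar{\lambda}$ and stating the lower bound as symmetric), but the substance is identical.
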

\begin{proof}
We first prove the upper bound $r_{max}+2\bar{\lambda}$ by induction:
(a) $r_{k}(0)\le r_{max}\le r_{max}+2\bar{\lambda}$; (b) For $i\ge1$,
if $r_{k}(i-1)\in[r_{max}+\bar{\lambda},r_{max}+2\bar{\lambda}]$,
then $\tilde{h}(r_{k}(i-1))\le-\bar{\lambda}$. Since $\lambda_{k}'(i)-s_{k}'(i)\le\bar{\lambda}$,
we have $r_{k}(i)\le r_{k}(i-1)\le r_{max}+2\bar{\lambda}$. If $r_{k}(i-1)\in(r_{min},r_{max}+\bar{\lambda})$,
then $\tilde{h}(r_{k}(i-1))\le0$. Also since $\lambda_{k}'(i)-s_{k}'(i)\le\bar{\lambda}$
and $\alpha(i)\le1,\forall i$, we have $r_{k}(i)\le r_{k}(i-1)+\bar{\lambda}\cdot\alpha(i)\le r_{max}+2\bar{\lambda}$.
If $r_{k}(i-1)\le r_{min}$, then \begin{eqnarray*}
r_{k}(i) & = & r_{k}(i-1)+\alpha(i)[\lambda_{k}'(i)-s_{k}'(i)+\tilde{h}(r_{k}(i-1))]\\
 & \le & r_{k}(i-1)+\alpha(i)\{\bar{\lambda}+[r_{min}-r_{k}(i-1)]\}\\
 & = & [1-\alpha(i)]\cdot r_{k}(i-1)+\alpha(i)\{\bar{\lambda}+r_{min}\}\\
 & \le & [1-\alpha(i)]\cdot r_{min}+\alpha(i)\{\bar{\lambda}+r_{min}\}\\
 & = & r_{min}+\alpha(i)\cdot\bar{\lambda}\\
 & \le & \bar{\lambda}+r_{min}\le r_{max}+2\bar{\lambda}.\end{eqnarray*}
The lower bound $r_{min}-2$ can be proved similarly. \end{proof}
\begin{lemma}
\label{lem:finite-set}In Algorithm 1, $w_{0}(i)$ is in a finite
set.\end{lemma}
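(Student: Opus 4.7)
The plan is to argue that in each period $i$, the transmission lengths chosen by each link are bounded by a constant that depends only on the algorithm parameters, and hence the state component $(b_k, a_k)$ can take only finitely many values. Since $x \in \{0,1\}^K$ already lives in a finite set, the conclusion follows.

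More concretely, I would first recall from the definition of $w$ in (\ref{eq:w-definition}) that $w_0(i)$ is determined by the on-off vector $x$ and, for each active link $k$, the pair $(b_k, a_k)$ with $1 \le a_k \le b_k$. Therefore it suffices to bound $b_k$. For a link $k$ in a collision component we have $b_k = \gamma$, a fixed constant, so nothing is to be done there. For a link $k$ transmitting successfully, $b_k = \tau' + \tau_k^p(i')$ for some (possibly earlier) period $i'$, where $\tau_k^p(i')$ is drawn from (\ref{eq:randomize-Tp}) and hence satisfies $\tau_k^p(i') \le \lceil T_0 \exp(r_k(i')) \rceil$.

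The key ingredient is Lemma \ref{lem:r-bounded-Algorithm-4}, which I have just established: $r_k(i) \le r_{max} + 2\bar\lambda$ for all $k$ and all $i$ almost surely. Plugging this into the expression for $\tau_k^p$ gives the uniform bound
\begin{equation*}
b_k \;\le\; \tau' + \bigl\lceil T_0 \exp(r_{max} + 2\bar\lambda)\bigr\rceil \;=:\; b^{\sharp},
\end{equation*}
which is a finite constant depending only on $\tau'$, $T_0$, $r_{max}$, and $\bar\lambda$. Since $a_k$ is an integer in $[1,b_k]\subseteq[1,b^{\sharp}]$, the set of possible values of $(b_k, a_k)$ is finite, and therefore the set of possible $w_0(i)$ is finite, as claimed.

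I do not expect any real obstacle here: the result is essentially a corollary of the previous lemma together with the bounded-support convention and the rounding rule (\ref{eq:randomize-Tp}). The only mild subtlety worth flagging is that $b_k$ at time $t_i$ may reflect a transmission that started in an earlier period $i' < i$; however, since the bound from Lemma \ref{lem:r-bounded-Algorithm-4} holds for every period uniformly, the same constant $b^{\sharp}$ serves for all $i'$ and hence for $w_0(i)$.
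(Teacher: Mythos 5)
Your proposal is correct and follows essentially the same route as the paper: invoke Lemma \ref{lem:r-bounded-Algorithm-4} to bound $r_{k}(i)$ uniformly, push this through the rounding rule (\ref{eq:randomize-Tp}) to bound $\tau_{k}^{p}(i)$ and hence $b_{k}$, and conclude that $a_{k}\le b_{k}$ leaves only finitely many states. Your extra remarks on the collision case ($b_{k}=\gamma$) and on transmissions begun in earlier periods are sensible refinements of the same argument, not a different approach.
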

\begin{proof}
By Lemma \ref{lem:r-bounded-Algorithm-4}, we know that $r_{k}(i)\le r_{max}+2\bar{\lambda},\forall k,i$,
so $T_{k}^{p}(i)\le T_{0}\exp(r_{max}+2\bar{\lambda}),\forall k,i$.
By (\ref{eq:randomize-Tp}), we have $\tau_{k}^{p}(i)\le T_{0}\exp(r_{max}+2\bar{\lambda})+1,\forall k,i$.
Therefore, in state $w_{0}(i)=\{x,((b_{k},a_{k}),\forall k:x_{k}=1)\}$,
we have $b_{k}\le b_{max}$ for a constant $b_{max}$ and $a_{k}\le b_{k}$
for any $k$ such that $x_{k}=1$. So, $w_{0}(i)$ is in a finite
set.
\end{proof}

\subsubsection{Part (ii): Constant step size}

The intuition is the same as part (i). That is, if the constant step
size is small enough, then the algorithm approximately solves problem
$\max_{{\bf r}}G({\bf r};{\bf \lambda})$. Please refer to \cite{longer_version}
for the full proof.

\subsection{Proof of Theorem \ref{thm:collision}}

Let $N'$ be the number of detailed states $(x,z)$'s, and ${\bf u}\in{\cal R}_{+}^{N'}$
be a probability distribution over the detailed states. For convenience
of notation, we use $i=1,2,\dots,N'$ to index the detailed states.
Then $u_{i}$, the $i$-th element of ${\bf u}$, is the probability
of the $i$-th detailed state. Let ${\cal P}$ be the set of $N'$-dimensional
probability distributions, i.e., ${\cal P}:=\{{\bf u}'\in{\cal R}_{+}^{N'}|{\bf 1}^{T}{\bf u}'=1\}$
. Also define a $K\times N'$ matrix $A$ where the element $A_{k,i}=1$
if link $k$ is transmitting its \emph{payload} in the $i$-th detailed
state, and $A_{k,i}=0$ otherwise. Then, $A\cdot{\bf u}$ is the vector
of achieved throughputs of the $K$ links under the distribution ${\bf u}$.
\begin{lemma}
\label{lem:lip}Given $\bar{{\bf \lambda}}$ at the boundary of $\bar{{\cal C}}$,
there exists a constant $b>0$ such that the following holds: For
any $0<\epsilon<1$, if ${\bf u}\in{\cal P}$ and $A\cdot{\bf u}=(1-\epsilon)\bar{{\bf \lambda}}$,
then one can find $\bar{{\bf u}}\in{\cal P}$ such that $A\cdot\bar{{\bf u}}=\bar{{\bf \lambda}}$
and $||{\bf u}-\bar{{\bf u}}||_{1}\le2b\cdot\epsilon$.\end{lemma}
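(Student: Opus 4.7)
The plan is to recognize the statement as a Lipschitz-type bound for the pre-image of a linear map restricted to the probability simplex, and to deduce it from Hoffman's theorem on linear inequalities.

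First I would set up the target polytope
\[
S := \{{\bf v} \in {\cal R}^{N'} : {\bf v} \ge {\bf 0}, \; {\bf 1}^{T} {\bf v} = 1, \; A {\bf v} = \bar{{\bf \lambda}}\}.
\]
Since the identification $\bar{{\cal C}} = \bar{{\cal C}}_{CO}$ was already established in the proof of Theorem \ref{thm:thu-optimal}, and $\bar{{\bf \lambda}}$ lies on the boundary of $\bar{{\cal C}}$ (hence in $\bar{{\cal C}}$), the set $S$ is a nonempty compact polytope. Crucially, $S$ is determined only by the fixed data $A$, ${\bf 1}^{T}$, $\bar{{\bf \lambda}}$; it does not depend on $\epsilon$ or on the particular ${\bf u}$.

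Next I would reduce the claim to a Hoffman-type estimate. For any ${\bf u} \in {\cal P}$ with $A {\bf u} = (1-\epsilon)\bar{{\bf \lambda}}$, the sign constraint ${\bf u} \ge {\bf 0}$ and the normalization ${\bf 1}^{T} {\bf u} = 1$ defining $S$ already hold; only the equality $A {\bf v} = \bar{{\bf \lambda}}$ is violated, and by exactly $||A {\bf u} - \bar{{\bf \lambda}}||_{1} = \epsilon \, ||\bar{{\bf \lambda}}||_{1}$. Hoffman's theorem on linear systems applied to $S$ then yields a constant $c > 0$, depending only on the stacked coefficient matrix $[A;\, {\bf 1}^{T}]$ and the nonnegativity sign pattern — and therefore independent of $\epsilon$ and of ${\bf u}$ — such that
\[
\min_{{\bf v} \in S} ||{\bf u} - {\bf v}||_{1} \; \le \; c \cdot ||A{\bf u} - \bar{{\bf \lambda}}||_{1} \; = \; c\,\epsilon\,||\bar{{\bf \lambda}}||_{1}.
\]
Taking $\bar{{\bf u}} \in S$ to achieve this minimum (possible since $S$ is compact) and setting $b := c\,||\bar{{\bf \lambda}}||_{1}/2$ delivers $\bar{{\bf u}} \in {\cal P}$ with $A\bar{{\bf u}} = \bar{{\bf \lambda}}$ and $||{\bf u} - \bar{{\bf u}}||_{1} \le 2 b \epsilon$, as required.

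The main obstacle — and the step where the whole argument hinges — is the uniformity of the Hoffman constant $c$ in $\epsilon$ and in ${\bf u}$; this is precisely the content of Hoffman's theorem, and is where one needs to be careful, since an $\epsilon$- or ${\bf u}$-dependent constant would ruin the linear-in-$\epsilon$ bound. If a more self-contained derivation is preferred, an equivalent route is to study the parametric polytope $P_{\alpha} := \{{\bf v} \in {\cal P} : A {\bf v} = \alpha \bar{{\bf \lambda}}\}$ and apply the Walkup--Wets/Robinson Lipschitz continuity for polyhedral set-valued maps to $\alpha \mapsto P_{\alpha}$ in a neighborhood of $\alpha = 1$; the constant obtained again depends only on the polyhedral data and so is uniform in the required sense.
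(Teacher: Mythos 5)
Your proposal is correct, but it takes a genuinely different route from the paper. The paper gives a self-contained, elementary argument: it introduces the polytope ${\cal U}=\{\tilde{{\bf u}}\in{\cal P}\,|\,A\tilde{{\bf u}}=\tilde{\rho}\bar{{\bf \lambda}}\text{ for some }0\le\tilde{\rho}\le1\}$, writes ${\bf u}$ as a convex combination of its finitely many extreme points ${\bf y}$ (each satisfying $A{\bf y}=\rho_{{\bf y}}\bar{{\bf \lambda}}$), shows that the $\ell_{1}$-distance $D(\cdot)$ to the affine set $\{{\bf z}:A{\bf z}=\bar{{\bf \lambda}}\}$ is convex, and then bounds $D({\bf u})\le\sum_{{\bf y}}a_{{\bf y}}D({\bf y})\le 2b\sum_{{\bf y}}a_{{\bf y}}(1-\rho_{{\bf y}})=2b\epsilon$ with the explicit constant $b=\frac{1}{2}\max_{{\bf y}:\rho_{{\bf y}}<1}D({\bf y})/(1-\rho_{{\bf y}})$ defined in (\ref{eq:b}). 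Your argument instead invokes Hoffman's error bound (equivalently, Walkup--Wets/Robinson Lipschitz continuity of polyhedral maps) applied to the polytope $S=\{{\bf v}\ge{\bf 0},\,{\bf 1}^{T}{\bf v}=1,\,A{\bf v}=\bar{{\bf \lambda}}\}$, correctly observing that $S$ is nonempty because $\bar{{\bf \lambda}}\in\bar{{\cal C}}=\bar{{\cal C}}_{CO}$, that the only violated constraint has residual $\epsilon\|\bar{{\bf \lambda}}\|_{1}$, and that the Hoffman constant depends only on the constraint matrix and not on the right-hand side. This is a valid and shorter derivation; what it buys is brevity and generality, at the cost of relying on an external theorem and producing a less explicit constant. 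The paper's construction, by contrast, is essentially a hand-rolled proof of this special case of Hoffman's bound and yields a constant $b$ that is in principle computable from the extreme points of ${\cal U}$ (a point the statement of Theorem \ref{thm:collision} exploits by referring to constants ``defined during the proof''). One minor advantage of your version: your $\bar{{\bf u}}$ is taken from $S\subseteq{\cal P}$ by construction, whereas the paper's minimization in (\ref{eq:min_distance}) imposes only $A{\bf z}=\bar{{\bf \lambda}}$ and leaves $\bar{{\bf u}}\in{\cal P}$ implicit.
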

\begin{proof}
Consider the set ${\cal U}:=\{\tilde{{\bf u}}|\tilde{{\bf u}}\in{\cal P},A\cdot\tilde{{\bf u}}=\tilde{\rho}\cdot\bar{{\bf \lambda}}\text{ for some }0\le\tilde{\rho}\le1\}$.
Clearly ${\cal U}$ is a polytope since ${\cal U}$ is defined by
linear constraints. Denote by ${\cal E}({\cal U})$ the set of extreme
points of ${\cal U}$. Then, if ${\bf u}\in{\cal U}$ and satisfies
$A\cdot{\bf u}=(1-\epsilon)\bar{{\bf \lambda}}$, we have \begin{equation}
{\bf u}=\sum_{{\bf y}\in{\cal E}({\cal U})}a_{{\bf y}}{\bf y}\label{eq:u-combination}\end{equation}
where $a_{{\bf y}}\ge0$ and $\sum_{{\bf y}\in{\cal E}({\cal U})}a_{{\bf y}}=1$.
Also, by definition, for any ${\bf y}\in{\cal E}({\cal U})$, $A\cdot{\bf y}=\rho_{{\bf y}}\bar{{\bf \lambda}}$
for some $\rho_{{\bf y}}\in[0,1]$. Then,\[
A\cdot{\bf u}=\sum_{{\bf y}\in{\cal E}({\cal U})}(a_{{\bf y}}A\cdot{\bf y})=\rho_{{\bf u}}\bar{{\bf \lambda}}\]
where $\rho_{{\bf u}}=\sum_{{\bf y}\in{\cal E}({\cal U})}(a_{{\bf y}}\rho_{{\bf y}})=1-\epsilon$.

For any ${\bf u}'\in{\cal U}$, define\begin{eqnarray}
D({\bf u}') & := & \min_{{\bf z}}||{\bf u}'-{\bf z}||_{1}\nonumber \\
 & s.t. & A\cdot{\bf z}=\bar{{\bf \lambda}}.\label{eq:min_distance}\end{eqnarray}
It can be shown that $D({\bf u}')$ is a convex function of ${\bf u}'$.
(Proof: Consider any two ${\bf u}'_{I}$ and ${\bf u}'_{II}$. When
${\bf u}'={\bf u}'_{I}$ (or ${\bf u}'_{II}$), suppose ${\bf z}_{I}$
(or ${\bf z}_{II}$) is an optimal solution of problem (\ref{eq:min_distance}).
That is, $D({\bf u}'_{I})=||{\bf u}'_{I}-{\bf z}_{I}||_{1}$ and $D({\bf u}'_{II})=||{\bf u}'_{II}-{\bf z}_{II}||_{1}$.
Given any $\beta\in[0,1]$, we have \begin{eqnarray}
 &  & \beta\cdot D({\bf u}'_{I})+(1-\beta)\cdot D({\bf u}'_{II})\nonumber \\
 & = & \beta||{\bf u}'_{I}-{\bf z}_{I}||_{1}+(1-\beta)||{\bf u}'_{II}-{\bf z}_{II}||_{1}\nonumber \\
 & \ge & ||\beta({\bf u}'_{I}-{\bf z}_{I})+(1-\beta)({\bf u}'_{II}-{\bf z}_{II})||_{1}\nonumber \\
 & = & ||[\beta{\bf u}'_{I}+(1-\beta){\bf u}'_{II}]-[\beta{\bf z}_{I}+(1-\beta){\bf z}_{II}]||_{1}.\label{eq:middle}\end{eqnarray}

Now consider $D(\beta{\bf u}'_{I}+(1-\beta){\bf u}'_{II})$. Clearly,
$\tilde{{\bf z}}:=\beta{\bf z}_{I}+(1-\beta){\bf z}_{II}$ satisfies
the constraint $A\cdot\tilde{{\bf z}}=\bar{{\bf \lambda}}$. So, $D(\beta{\bf u}'_{I}+(1-\beta){\bf u}'_{II})\le||[\beta{\bf u}'_{I}+(1-\beta){\bf u}'_{II}]-[\beta{\bf z}_{I}+(1-\beta){\bf z}_{II}]||_{1}$.
Combined with (\ref{eq:middle}), we have $D(\beta{\bf u}'_{I}+(1-\beta){\bf u}'_{II})\le\beta\cdot D({\bf u}'_{I})+(1-\beta)\cdot D({\bf u}'_{II})$,
which implies that $D({\bf u}')$ is a convex function of ${\bf u}'$.)

Using this fact and (\ref{eq:u-combination}), we have\begin{equation}
D({\bf u})\le\sum_{{\bf y}\in{\cal E}({\cal U})}(a_{{\bf y}}D({\bf y}))=\sum_{{\bf y}\in{\cal E}({\cal U}),\rho_{{\bf y}}<1}(a_{{\bf y}}D({\bf y}))\label{eq:distance-bound}\end{equation}
where the second step is because if $\rho_{{\bf y}}=1$, then $D({\bf y})=0$.

Define \begin{equation}
b:=\frac{1}{2}\max_{{\bf y}\in{\cal E}({\cal U}),\rho_{{\bf y}}<1}D({\bf y})/(1-\rho_{y}).\label{eq:b}\end{equation}
Since there are a finite number of elements in ${\cal E}({\cal U})$,
we have $0<b<+\infty$. Then \begin{eqnarray*}
D({\bf u}) & \le & \sum_{{\bf y}\in{\cal E}({\cal U}),\rho_{{\bf y}}<1}(a_{{\bf y}}D({\bf y}))\\
 & \le & 2b\cdot\sum_{{\bf y}\in{\cal E}({\cal U}),\rho_{{\bf y}}<1}[a_{{\bf y}}\cdot(1-\rho_{{\bf y}})]\\
 & = & 2b\cdot\sum_{{\bf y}\in{\cal E}({\cal U})}[a_{{\bf y}}\cdot(1-\rho_{{\bf y}})]\\
 & = & 2b[1-\rho_{{\bf u}}]=2b\epsilon.\end{eqnarray*}

Let $\bar{{\bf u}}$ be a solution of (\ref{eq:min_distance}) with
${\bf u}'={\bf u}$, then $||{\bf u}-\bar{{\bf u}}||_{1}=D({\bf u})\le2b\epsilon$. \end{proof}
\begin{lemma}
\label{lem:entropy-bound}Assume that ${\bf u},\bar{{\bf u}}\in{\cal P}$
and ${\bf u}\ne\bar{{\bf u}}$. If $||{\bf u}-\bar{{\bf u}}||_{1}\le2c$
for a constant $c\in(0,1]$, then $|H({\bf u})-H(\bar{{\bf u}})|\le c\cdot[\log(N'/c)+1]$,
where $H({\bf u}):=\sum_{i=1}^{N'}[-u_{i}\log(u_{i})]$ is the {}``entropy''
of the distribution ${\bf u}$.\end{lemma}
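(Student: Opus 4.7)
The plan is to establish a Fannes-type continuity bound for the Shannon entropy $H$ by combining a maximal coupling of ${\bf u}$ and $\bar{{\bf u}}$ with the standard concavity bounds for the entropy of a mixture. I set $\delta := \tfrac{1}{2}||{\bf u}-\bar{{\bf u}}||_1$, so $\delta\in(0,c]$ (the strict positivity uses ${\bf u}\neq\bar{{\bf u}}$). The degenerate case $\delta=1$ (only possible when $c=1$ and the supports of ${\bf u},\bar{{\bf u}}$ are disjoint) is handled separately from the trivial estimate $|H({\bf u})-H(\bar{{\bf u}})|\le\log N'\le c[\log(N'/c)+1]$, so I focus on $\delta<1$.

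For $\delta<1$ I would invoke the standard maximal-coupling decomposition
\[
{\bf u}=(1-\delta){\bf v}+\delta{\bf w}_1,\qquad \bar{{\bf u}}=(1-\delta){\bf v}+\delta{\bf w}_2,
\]
where $v_i=\min(u_i,\bar{u}_i)/(1-\delta)$, and ${\bf w}_1,{\bf w}_2$ are probability vectors supported on the (disjoint) sets $\{i:u_i>\bar{u}_i\}$ and $\{i:u_i<\bar{u}_i\}$. Two well-known estimates then apply to each mixture: concavity of $H$ gives $H({\bf u})\ge(1-\delta)H({\bf v})+\delta H({\bf w}_1)$, and introducing a Bernoulli$(\delta)$ mixture-indicator $Z$ and using $H(X)\le H(X,Z)=H(Z)+H(X\mid Z)$ yields the matching upper bound $H({\bf u})\le(1-\delta)H({\bf v})+\delta H({\bf w}_1)+h(\delta)$, with $h(\cdot)$ the binary entropy; the same pair of inequalities holds for $\bar{{\bf u}}$. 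Subtracting the lower bound for $\bar{{\bf u}}$ from the upper bound for ${\bf u}$, the $(1-\delta)H({\bf v})$ terms cancel, and using $H({\bf w}_j)\in[0,\log N']$ I obtain
\[
|H({\bf u})-H(\bar{{\bf u}})|\le\delta\log N'+h(\delta).
\]

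To put this in the stated form I would invoke the elementary inequality $-(1-\delta)\log(1-\delta)\le\delta$ on $[0,1]$ (both sides agree at $\delta=0$, and the derivative of the difference is $\log(1-\delta)\le 0$), yielding $h(\delta)\le\delta\log(1/\delta)+\delta$ and hence $|H({\bf u})-H(\bar{{\bf u}})|\le\delta[\log(N'/\delta)+1]$. The map $\delta\mapsto\delta[\log(N'/\delta)+1]$ has derivative $\log(N'/\delta)\ge 0$ on $(0,N']$, so it is nondecreasing there; since $\delta\le c\le 1\le N'$, the bound is maximized at $\delta=c$, giving $|H({\bf u})-H(\bar{{\bf u}})|\le c[\log(N'/c)+1]$, as claimed. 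I do not anticipate a real obstacle: the only mild subtleties are handling the $\delta=1$ corner and noting that the mixture upper bound $H(X)\le H(Z)+H(X\mid Z)$ is support-agnostic (so the non-disjointness of ${\bf v}$ from the ${\bf w}_j$'s is immaterial), after which the chain of inequalities is routine.
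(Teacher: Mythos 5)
Your proposal is correct, and it reaches the paper's exact bound by a genuinely different route. The paper argues coordinate-wise with $f(x)=x\log(1/x)$: on the set where $u_i>\bar u_i$ it uses concavity to get $f(u_i)-f(\bar u_i)\le f(u_i-\bar u_i)$ and then Jensen to bound $\sum_i f(\delta_i)$ by $c\log(N'/c)$, while on the set where $u_i<\bar u_i$ it uses the gradient inequality $f(u_i)-f(\bar u_i)\le (u_i-\bar u_i)f'(\bar u_i)$ together with $f'\ge -1$ to contribute the additive $c$. You instead use the maximal-coupling decomposition ${\bf u}=(1-\delta){\bf v}+\delta{\bf w}_1$, $\bar{\bf u}=(1-\delta){\bf v}+\delta{\bf w}_2$ and the two standard mixture inequalities (concavity below, $H(X)\le H(Z)+H(X\mid Z)$ above), obtaining the Fannes-type intermediate bound $\delta\log N'+h(\delta)$ before relaxing $h(\delta)\le\delta\log(1/\delta)+\delta$ and monotonizing in $\delta$. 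Each step of yours checks out, including the $\delta=1$ corner (which forces $c=1$ and is covered by the trivial $\log N'\le\log N'+1$) and the monotonicity of $\delta\mapsto\delta[\log(N'/\delta)+1]$ on $(0,N']$. Your route buys a structurally cleaner and in fact slightly sharper intermediate estimate (one could keep $\delta(H({\bf w}_1)-H({\bf w}_2))+h(\delta)$, with $H({\bf w}_1)\le\log N_1$), and it avoids the paper's small technical detour through the monotonicity of $f$ on $[0,e^{-1}]$ (which is why the paper invokes $N'\ge 3$); the paper's argument is more elementary and self-contained, using nothing beyond concavity of $x\log(1/x)$.
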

\begin{proof}
Let $({\bf u}-\bar{{\bf u}})^{+}$ and $({\bf u}-\bar{{\bf u}})^{-}$
be the positive part and negative part of ${\bf u}-\bar{{\bf u}}$.
Then clearly $||{\bf u}-\bar{{\bf u}}||_{1}=||({\bf u}-\bar{{\bf u}})^{+}||_{1}+||({\bf u}-\bar{{\bf u}})^{-}||_{1}$.
Also, since ${\bf 1}^{T}{\bf u}={\bf 1}^{T}\bar{{\bf u}}=1$, we have
$||({\bf u}-\bar{{\bf u}})^{+}||_{1}=||({\bf u}-\bar{{\bf u}})^{-}||_{1}$.
Therefore, \[
0<||({\bf u}-\bar{{\bf u}})^{+}||_{1}=||({\bf u}-\bar{{\bf u}})^{-}||_{1}=\frac{1}{2}||{\bf u}-\bar{{\bf u}}||_{1}\le c.\]

Let the index sets ${\cal I}_{1}=\{i|i\in\{1,2,\dots,N'\},u_{i}>\bar{u}_{i}\}$
and ${\cal I}_{2}=\{i|i\in\{1,2,\dots,N'\},u_{i}<\bar{u}_{i}\}$.
Define $N_{1}:=|{\cal I}_{1}|,N_{2}:=|{\cal I}_{2}|$. Clearly $N_{1}+N_{2}\le N'$,
and $N_{1},N_{2}>0$. 

Now we show that $H({\bf u})-H(\bar{{\bf u}})\le c\cdot[\log(N'/c)+1]$.
Define the function $f(x):=x\log(1/x)$. First, for $i\in{\cal I}_{1}$,
denote $\delta_{i}=u_{i}-\bar{u}_{i}>0$, then $||({\bf u}-\bar{{\bf u}})^{+}||_{1}=\sum_{i\in{\cal I}_{1}}\delta_{i}$.
Note that $u_{i},\bar{u}_{i},\delta_{i}\in[0,1]$. Since $f(x)$ is
concave, with $u_{i}>\bar{u}_{i}\ge0$, we have $f(u_{i})-f(\bar{u}_{i})\le f(\delta_{i})-f(0)=f(\delta_{i})$.
So\begin{eqnarray}
 &  & \sum_{i\in{\cal I}_{1}}[f(u_{i})-f(\bar{u}_{i})]\le\sum_{i\in{\cal I}_{1}}f(\delta_{i})\nonumber \\
 & \le & \sum_{i\in{\cal I}_{1}}f(\frac{N_{1}}{||({\bf u}-\bar{{\bf u}})^{+}||_{1}})\nonumber \\
 & = & ||({\bf u}-\bar{{\bf u}})^{+}||_{1}\log(\frac{N_{1}}{||({\bf u}-\bar{{\bf u}})^{+}||_{1}})\nonumber \\
 & \le & ||({\bf u}-\bar{{\bf u}})^{+}||_{1}\log(\frac{N'}{||({\bf u}-\bar{{\bf u}})^{+}||_{1}})\label{eq:ineq}\end{eqnarray}
where the second inequality follows from the fact that $f(x)$ is
concave and $\sum_{i\in{\cal I}_{1}}\delta_{i}=||({\bf u}-\bar{{\bf u}})^{+}||_{1}$.

It is easy to show that $f(x)$ is increasing in the range $x\in[0,\exp(-1)]$.
Also, we have $N'\ge3$ (even in the smallest one-link network). Thus
$||({\bf u}-\bar{{\bf u}})^{+}||_{1}/N'\le c/N'\le1/N'<\exp(-1)$.
Therefore, from (\ref{eq:ineq}),\begin{eqnarray}
\sum_{i\in{\cal I}_{1}}[f(u_{i})-f(\bar{u}_{i})] & \le & N'\cdot f(\frac{||({\bf u}-\bar{{\bf u}})^{+}||_{1}}{N'})\nonumber \\
 & \le & N'\cdot f(\frac{c}{N'})=c\log(\frac{N'}{c}).\label{eq:positive-part}\end{eqnarray}

Since $f(x)$ is concave, we have $f(u_{i})-f(\bar{u}_{i})\le(u_{i}-\bar{u}_{i})f'(\bar{u}_{i})$.
For $i\in{\cal I}_{2},$ since $u_{i}<\bar{u}_{i}$ and $f'(\bar{u}_{i})\ge-1$
for any $\bar{u}_{i}\in(0,1]$, we have \begin{eqnarray}
\sum_{i\in{\cal I}_{2}}[f(u_{i})-f(\bar{u}_{i})] & \le & \sum_{i\in{\cal I}_{2}}(\bar{u}_{i}-u_{i})\le c.\label{eq:negative-part}\end{eqnarray}

Using (\ref{eq:positive-part}) and (\ref{eq:negative-part}), we
conclude that $H({\bf u})-H(\bar{{\bf u}})\le c\cdot[\log(N'/c)+1]$.
The same argument shows that $H(\bar{{\bf u}})-H({\bf u})\le c\cdot[\log(N'/c)+1]$.
Therefore the lemma follows.
\end{proof}
Now we are ready to prove Theorem \ref{thm:collision}.

As explained before, ${\bf r}^{*}({\bf \lambda})$ is the vector of
optimal dual variables in the optimization problem (\ref{eq:ME-dual}),
simply written as\begin{eqnarray}
V({\bf \lambda}):= & \max_{{\bf v}\in{\cal P}} & H({\bf v})+\sum_{i}g_{i}v_{i}\nonumber \\
 & s.t. & A\cdot{\bf v}={\bf \lambda}\label{eq:max-entropy}\end{eqnarray}
where $g_{i},i=1,2,\dots,N'$ denote the constants $\log(g(x,z))$
for $(x,z)\in{\cal S}$. Let ${\bf u}^{*}$ be the optimal solution
when ${\bf \lambda}=(1-\epsilon)\bar{{\bf \lambda}}$. Then $V((1-\epsilon)\bar{{\bf \lambda}})=H({\bf u}^{*})+\sum_{i}g_{i}u_{i}^{*}$. 

Consider the case when $\epsilon\le1/b$. By Lemma \ref{lem:lip},
there exists $\bar{{\bf u}}\in{\cal P}$ such that $A\cdot\bar{{\bf u}}=\bar{{\bf \lambda}}$
and $||{\bf u}^{*}-\bar{{\bf u}}||_{1}\le2b\cdot\epsilon\le2$. By
Lemma \ref{lem:entropy-bound}, \begin{equation}
|H({\bf u}^{*})-H(\bar{{\bf u}})|\le b\cdot\epsilon[\log(\frac{N'}{b\cdot\epsilon})+1].\label{eq:ineq3}\end{equation}

Since $\bar{{\bf u}}\in{\cal P}$ satisfies $A\cdot\bar{{\bf u}}=\bar{{\bf \lambda}}$,
we have $V(\bar{{\bf \lambda}})\ge H(\bar{{\bf u}})+\sum_{i}g_{i}\bar{u}_{i}$.
Also, the {}``value function'' $V({\bf \lambda})$ is concave in
${\bf \lambda}$ (\cite{convex-book} page 250), and satisfies $\nabla V({\bf \lambda})=-{\bf r}^{*}({\bf \lambda})$.
Therefore\begin{eqnarray*}
 &  & H(\bar{{\bf u}})+\sum_{i}g_{i}\bar{u}_{i}\le V(\bar{{\bf \lambda}})\\
 & \le & V((1-\epsilon)\bar{{\bf \lambda}})+(\bar{{\bf \lambda}}-(1-\epsilon)\bar{{\bf \lambda}})^{T}[-{\bf r}^{*}((1-\epsilon)\bar{{\bf \lambda}})]\\
 & = & H({\bf u}^{*})+\sum_{i}g_{i}u_{i}^{*}-\epsilon\cdot\bar{{\bf \lambda}}^{T}{\bf r}^{*}((1-\epsilon)\bar{{\bf \lambda}}).\end{eqnarray*}
So, \begin{equation}
\epsilon\cdot\bar{{\bf \lambda}}^{T}{\bf r}^{*}((1-\epsilon)\bar{{\bf \lambda}})\le H({\bf u}^{*})-H(\bar{{\bf u}})+\sum_{i}g_{i}(u_{i}^{*}-\bar{u}_{i}).\label{eq:ineq1}\end{equation}
Denote $G:=\max_{i}|g_{i}|$, then \begin{eqnarray}
|\sum_{i}g_{i}(u_{i}^{*}-\bar{u}_{i})| & \le & \sum_{i}(|g_{i}|\cdot|u_{i}^{*}-\bar{u}_{i}|)\nonumber \\
 & \le & G\sum_{i}|u_{i}^{*}-\bar{u}_{i}|\le2Gb\epsilon.\label{eq:ineq2}\end{eqnarray}

Combining (\ref{eq:ineq1}), (\ref{eq:ineq2}) and (\ref{eq:ineq3}),
we have $\epsilon\cdot\bar{{\bf \lambda}}^{T}{\bf r}^{*}((1-\epsilon)\bar{{\bf \lambda}})\le b\cdot\epsilon[\log(\frac{N'}{b\cdot\epsilon})+1]+2Gb\epsilon$,
which proves (\ref{eq:r-upper-bound-2}).

For the second case when $\epsilon>1/b$, choose an arbitrary $\bar{{\bf u}}\in{\cal P}$
such that $A\cdot\bar{{\bf u}}=\bar{{\bf \lambda}}$. Clearly, $||{\bf u}^{*}-\bar{{\bf u}}||_{1}\le2$,
so the RHS of (\ref{eq:ineq2}) can be replaced by $2G$. Also, $|H({\bf u}^{*})-H(\bar{{\bf u}})|\le\log(N')$
since $H(\cdot)\in[0,\log(N')]$. Using these in (\ref{eq:ineq1})
yields (\ref{eq:r-upper-bound-3}).

\subsection{Simulations of 1-D and 2-D lattice topologies}

Consider a line network (i.e., 1-D lattice network) with 16 links,
where each link conflicts with the nearest 2 links on each side (so,
link $k$ conflicts with 4 other links if it is not at the two ends
of the network), as in Fig. \ref{fig:1d}. Let $\gamma=5,\tau'=10$
and $p_{k}=1/16,\forall k$. In each simulation, we let all links
use the same, fixed payload length $T^{p}=30,50,100,150$, and we
compute the {}``short-term throughput'' of link 8 (in the middle
of the network) every 50 milliseconds (or $50/0.009\approx5556$ slots).
That is, we compute the average throughput of link 8 in each time
window of 50 milliseconds. 

Note that a successful transmission has a length of $\tau'+T^{p}$.
Also note that if the parameter $r_{max}$ in Algorithm 1 satisfies
that $T_{0}\exp(r_{max})=T^{p}$, then it is possible that all links
transmit payload $T^{p}$.

The results are plotted in Fig. \ref{fig:st-line}. We see that the
oscillation in the short-term throughput increases as $T^{p}$ increases,
indicating worsening short-term fairness.%
\begin{figure}
\begin{centering}
\includegraphics[width=8cm]{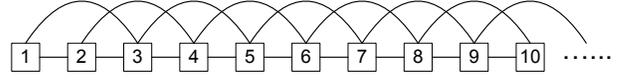}
\par\end{centering}

\caption{\label{fig:1d}Conflict graph of a line network}

\end{figure}
\begin{figure}
\begin{centering}
\subfloat[$T^{p}=30$]{\includegraphics[width=7cm]{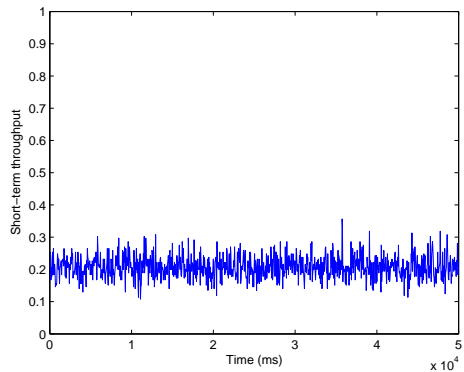}

}
\par\end{centering}

\begin{centering}
\subfloat[$T^{p}=50$]{\includegraphics[width=7cm]{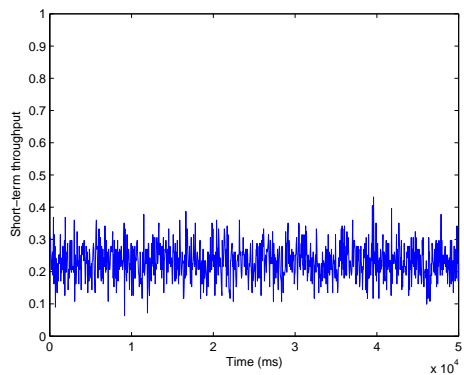}

}
\par\end{centering}

\begin{centering}
\subfloat[$T^{p}=100$]{\includegraphics[width=7cm]{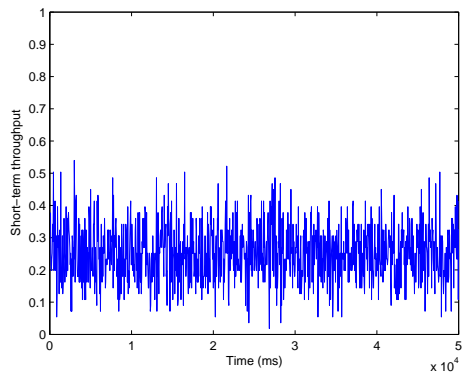}

}
\par\end{centering}

\begin{centering}
\subfloat[$T^{p}=150$]{\includegraphics[width=7cm]{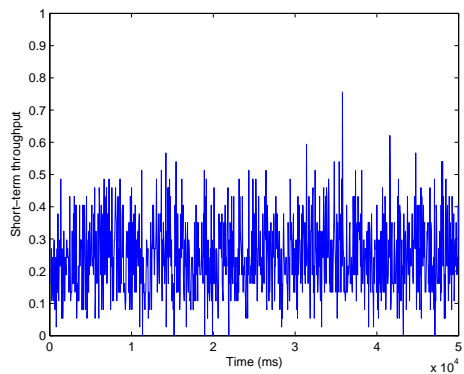}

}
\par\end{centering}

\caption{\label{fig:st-line}Short-term throughput of link 8 in the line network}

\end{figure}

Next we simulate a 2-D lattice network with 5{*}5=25 links in Fig.
\ref{fig:2d}. Each link conflicts with the nearest 4 links around
it (if it's not at the boundary). Similar to the previous simulation,
we use different values of $T^{p}$ and obtain the following short-term
throughput of link 13 (which is at the center of the network) in Fig.
\ref{fig:st-2d}. Again we observe wrosening short-term fairness as
$T^{p}$ increases. Also, the oscillation of link 13's short-term
throughput is greater than link 8 in the line network, since link
13 has 4 conflicting neighbors which do not conflict with each other.
\begin{figure}
\begin{centering}
\includegraphics[width=5cm]{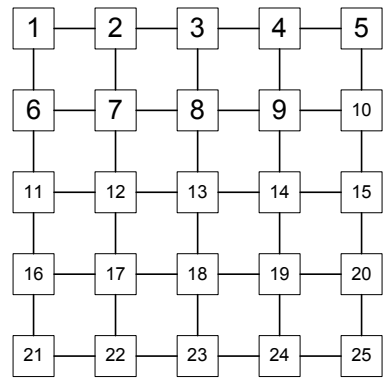}
\par\end{centering}

\caption{\label{fig:2d}Conflict graph of a 2-D lattice network}

\end{figure}
\begin{figure}
\begin{centering}
\subfloat[$T^{p}=25$]{\includegraphics[width=7cm]{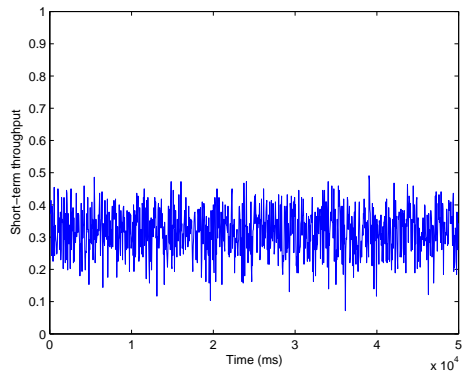}

}
\par\end{centering}

\begin{centering}
\subfloat[$T^{p}=40$]{\includegraphics[width=7cm]{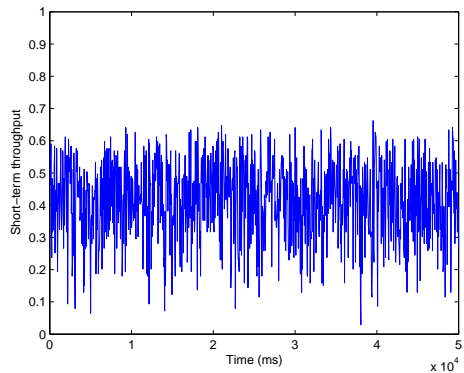}

}
\par\end{centering}

\begin{centering}
\subfloat[$T^{p}=70$]{\includegraphics[width=7cm]{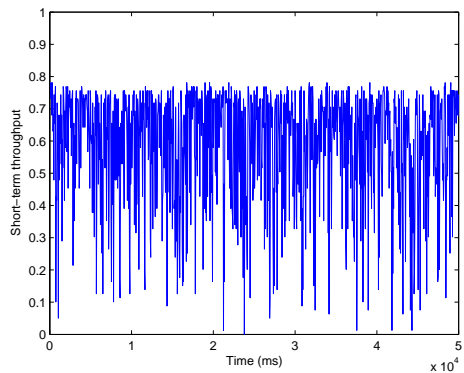}

}
\par\end{centering}

\begin{centering}
\subfloat[$T^{p}=100$]{\includegraphics[width=7cm]{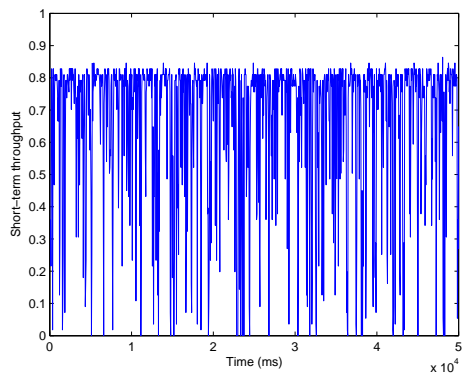}

}
\par\end{centering}

\caption{\label{fig:st-2d}Short-term throughput of link 13 in the 2-D lattice
network}

\end{figure}

However, as mentioned before, it remains an open problem to exact
characterize the relationship between short-term fairness (in particular
the standard deviation of the access delay) and $T^{p}$ (or equivalently,
$r_{max}$) in general topologies. We are interested to further explore
useful methods to quantify the relationship.

\begin{biography}
{Libin Jiang} received his Ph.D. degree in Electrical Engineering
\& Computer Sciences from the University of California at Berkeley
in 2009. Earlier, he received the B.Eng. degree from the University
of Science and Technology of China in 2003, and the M.Phil. degree
from the Chinese University of Hong Kong in 2005. His research interests
include wireless networks, communications and game theory. He received
the David J. Sakrison Memorial Prize in 2010 for outstanding doctoral
research, and the best presentation award in the ACM Mobihoc'09 S3
Workshop.
\end{biography}

\begin{biography}
{Jean Walrand}  received his Ph.D. in EECS from UC Berkeley, where
he has been a professor since 1982. He is the author of An Introduction
to Queueing Networks (Prentice Hall, 1988) and of Communication Networks:
A First Course (2nd ed. McGraw-Hill,1998) and co-author of High Performance
Communication Networks (2nd ed, Morgan Kaufman, 2000). Prof. Walrand
is a Fellow of the Belgian American Education Foundation and of the
IEEE and a recipient of the Lanchester Prize and of the Stephen O.
Rice Prize.
\end{biography}


\begin{thebibliography}{32}
\bibitem{Kelly-book}F. P. Kelly, \emph{{}``Reversibility and Stochastic
Networks},'' Wiley, 1979.

\bibitem{convex-book}S. Boyd and L. Vandenberghe, {}``\emph{Convex
Optimization}'', Cambridge University Press, 2004.

\bibitem{TE92}L. Tassiulas and A. Ephremides, {}``Stability Properties
of Constrained Queueing Systems and Scheduling Policies for Maximum
Throughput in Multihop Radio Networks,'' \emph{IEEE Transactions
on Automatic Control}, vol. 37, no. 12, pp. 1936-1948, Dec. 1992.

\bibitem{tutorial}X. Lin, N.B. Shroff, R. Srikant, {}``A Tutorial
on Cross-Layer Optimization in Wireless Networks,'' IEEE Journal
on Selected Areas in Communications, 2006.

\bibitem{SPN}J. M. Harrison and R. J. Williams, {}``Workload interpretation
for Brownian models of stochastic processing networks,'' Mathematics
of Operations Research, vol. 32, pp. 808-820, 2007.

\bibitem{Maximal-Scheduling}P. Chaporkar, K. Kar, and S. Sarkar,
{}``Throughput guarantees in maximal scheduling in wireless networks,''
in \emph{the 43rd Annual Allerton Conference on Communication, Control
and Computing}, Sep. 2005.

\bibitem{bound-greedy}X. Wu and R. Srikant, {}``Scheduling Efficiency
of Distributed Greedy Scheduling Algorithms in Wireless Networks,''
in\emph{ IEEE INFOCOM 2006}, Barcelona, Spain, Apr. 2006.

\bibitem{Jian}J. Ni and R. Srikant, \textquotedblleft{}Distributed
CSMA/CA Algorithms for Achieving Maximum Throughput in Wireless Networks,\textquotedblright{}
in\emph{ Information Theory and Applications Workshop}, Feb. 2009.

\bibitem{Dimakis}A. Dimakis and J. Walrand, {}``Sufficient Conditions
for Stability of Longest-Queue-First Scheduling: Second-Order Properties
Using Fluid Limits,'' \emph{Advances in Applied Probability}, vol.
38, no. 2, pp. 505\textendash{}521, 2006.

\bibitem{GMS}C. Joo, X. Lin, and N. Shroff, {}``Understanding the
Capacity Region of the Greedy Maximal Scheduling Algorithm in Multi-Hop
Wireless Networks,'' in \emph{IEEE INFOCOM 2008}, Phoenix, Arizona,
Apr. 2008.

\bibitem{multi-hop-LoP}G. Zussman, A. Brzezinski, and E. Modiano,
{}``Multihop Local Pooling for Distributed Throughput Maximization
in Wireless Networks,'' in \emph{IEEE INFOCOM 2008}, Phoenix, Arizona,
Apr. 2008.

\bibitem{LQF-small}M. Leconte, J. Ni, and R. Srikant, {}``Improved
Bounds on the Throughput Efficiency of Greedy Maximal Scheduling in
Wireless Networks,'' in \emph{ACM MOBIHOC}, May 2009.

\bibitem{no-message-passing}A. Proutiere, Y. Yi, and M. Chiang, {}``Throughput
of Random Access without Message Passing,'' in \emph{Conference on
Information Sciences and Systems}, Princeton, NJ, USA, Mar. 2008.

\bibitem{csma-87}R. R. Boorstyn, A. Kershenbaum, B. Maglaris, and
V. Sahin, {}``Throughput Analysis in Multihop CSMA Packet Radio Networks,''
\emph{IEEE Transactions on Communications}, vol. 35, no. 3, pp. 267-274,
Mar. 1987.

\bibitem{Bianchi}G. Bianchi, {}``Performance Analysis of the IEEE
802.11 Distributed Coordination Function,'' \emph{IEEE Journal on
Selected Areas in Communications}, vol. 18, no. 3, pp. 535-547, 2000.

\bibitem{Allerton}L. Jiang and J. Walrand, {}``A Distributed CSMA
Algorithm for Throughput and Utility Maximization in Wireless Networks,''
in \emph{the 46th Annual Allerton Conference on Communication, Control,
and Computing}, Sep. 23-26, 2008.

\bibitem{CISS-shah}S. Rajagopalan and D. Shah, {}``Distributed Algorithm
and Reversible Network'', in \emph{Conference on Information Sciences
and Systems}, Princeton, NJ, USA, Mar. 2008.

\bibitem{CSMA_marbach}P. Marbach, A. Eryilmaz, and A. Ozdaglar, {}``Achievable
Rate Region of CSMA Schedulers in Wireless Networks with Primary Interference
Constraints,'' in \emph{IEEE Conference on Decision and Control},
2007.

\bibitem{Jiang-Liew}L. Jiang and S. C. Liew, {}``Improving Throughput
and Fairness by Reducing Exposed and Hidden Nodes in 802.11 Networks,''
\emph{IEEE Transactions on Mobile Computing}, vol. 7, no. 1, pp. 34-49,
Jan. 2008.

\bibitem{asymptotic}C. Bordenave, D. McDonald, A. Proutiere, {}``Performance
of random medium access control, an asymptotic approach,'' in Proceedings
of the ACM Sigmetrics 2008, pp. 1-12.

\bibitem{loss_network}F. P. Kelly, \textquotedblleft{}Loss networks,\textquotedblright{}
Ann. Appl. Prob., vol. 1, no. 3, 1991.

\bibitem{conv_MS}J. Liu, Y. Yi, A. Proutiere, M. Chiang, and H. V.
Poor, {}``Towards utility optimal random access without message passing,''
(Invited Paper) \emph{Wiley Journal on Wireless Communications and
Mobile Computing, Special Issue on Advances in Wireless Communications
and Networking}, vol. 10, no. 1, pp. 115-128, Jan. 2010. 

\bibitem{Borkar}V. Borkar, {}``\emph{Stochastic Approximation: A
Dynamical Systems Viewpoint},'' 2008.

\bibitem{new_insight}A. Kumar, E. Altman, D. Miorandi, and M. Goyal,
{}``New insights from a fixed point analysis of single cell IEEE
802.11 WLANs,'' In Proc. IEEE INFOCOM, USA, Mar. 2005.

\bibitem{Kar}X. Wang and K. Kar, {}``Throughput Modelling and Fairness
Issues in CSMA/CA Based Ad-Hoc Networks,'' in\emph{ IEEE INFOCOM
2005}, Miami, Florida, Mar. 2005.

\bibitem{Thiran}M. Durvy, O. Dousse and P. Thiran, {}``On the Fairness
of CSMA Networks,'' \emph{IEEE Journal on Selected Areas in Communications},
vol. 27, no. 7, pp. 1093-1104, Sep. 2009.

\bibitem{BoE}S. C. Liew, C. Kai, J. Leung, and B. Wong, {}``Back-of-the-Envelope
Computation of Throughput Distributions in CSMA Wireless Networks,''
in \emph{IEEE ICC}, 2009.

\bibitem{line_network}P.M. van de Ven, J.S.H. van Leeuwaarden, T.J.J.
Denteneer, A.J.E.M. Janssen, {}``Spatial fairness in wireless multi-access
networks,'' in \emph{Proc. 4th International Conference on Performance
Evaluation Methodologies and Tools (ValueTools'09)}, Pisa, Italy,
Oct. 2009.

\bibitem{S3}L. Jiang and J. Walrand, {}``Approaching Throughput-Optimality
in a Distributed CSMA Algorithm: Collisions and Stability,'' (invited),
\emph{ACM Mobihoc'09 S3 Workshop}, May 2009. 

\bibitem{longer_version}L. Jiang and J. Walrand, {}``Approaching
throughput-optimality in a Distributed CSMA Algorithm with Contention
Resolution,'' Technical Report, UC berkeley, Mar. 2009. URL: http://www.eecs.berkeley.edu/Pubs/TechRpts/2009/EECS-2009-37.html

\bibitem{joint_work}L. Jiang, D. Shah, J. Shin, J. Walrand, {}``Distributed
Random Access Algorithm: Scheduling and Congestion Control,'' http://arxiv.org/abs/0907.1266

\bibitem{CSMA_longer}L. Jiang and J. Walrand, {}``A Distributed
CSMA Algorithm for Throughput and Utility Maximization in Wireless
Networks,'' Technical Report, UC berkeley, Aug. 2009. http://www.eecs.berkeley.edu/Pubs/TechRpts/2009/EECS-2009-124.html
\end{thebibliography}
\end{document}